\documentclass[letter,11pt]{article}

\usepackage[utf8]{inputenc}
\usepackage{amsthm, amsmath, amssymb,xcolor,thm-restate,environ,comment,url, hyperref}
\usepackage{algo}
\usepackage{caption}
\usepackage[margin=1in]{geometry}
\usepackage{enumitem}
\usepackage{graphicx}
\usepackage{color}

\newcommand*{\email}[1]{\texttt{#1}}

\newlist{casesp}{enumerate}{4} 
\setlist[casesp]{align=left, 
                 listparindent=\parindent, 
                 parsep=\parskip, 
                 font=\normalfont\bfseries, 
                 leftmargin=0pt, 
                 labelwidth=0pt, 
                 itemindent=.4em,labelsep=.4em, 
                 partopsep=0pt, 
                 }
\setlist[casesp,1]{label=Case~\arabic*:,ref=\arabic*}
\setlist[casesp,2]{label=Case~\thecasespi.\alph*:,ref=\thecasespi.\alph*}
\setlist[casesp,3]{label=Case~\thecasespii.\roman*:,ref=\thecasespii.\roman*}
\setlist[casesp,4]{label=Case~\thecasespiii.\Alph*:,ref=\thecasespiii.\Alph*}

\def\R{\mathbb{R}}
\def\N{\mathbb{N}}
\def\Z{\mathbb{Z}}
\def\complement#1{\overline{#1}}
\def\set#1{\{ #1 \}}

\newtheorem{theorem}{Theorem}[section]

\newtheorem{proposition}[theorem]{Proposition}
\theoremstyle{definition}
\newtheorem{definition}{Definition}[section]
\newtheorem{claim}{Claim}[section]

\NewEnviron{problem}[1]{%
	\begin{center}\fbox{\parbox{5in}{%
				{\centering\scshape #1\par}%
				\parskip=1ex
				\everypar{\hangindent=1em}%
				\BODY
}}\end{center}}

\renewcommand{\figurename}{Algorithm}

\title{Multicriteria Cuts and Size-Constrained \texorpdfstring{$k$-Cuts}{k-Cuts} in Hypergraphs}
    \author{Calvin Beideman\thanks{University of Illinois, Urbana-Champaign. Email: \email{\{calvinb2, karthe\}@illinois.edu}. Supported in part by NSF CCF-1907937 and CCF-1814613.}
	\and 
	Karthekeyan Chandrasekaran\footnotemark[1]
	\and
	Chao Xu\thanks{The Voleon Group. Email: \email{the.chao.xu@gmail.com}.}
}

\date{}

\begin{document}
	
	\maketitle
	
	\begin{abstract}
    
    We address counting and optimization variants of multicriteria global min-cut and size-constrained min-$k$-cut in hypergraphs. 
    \begin{enumerate}
        \item For an $r$-rank $n$-vertex hypergraph endowed with $t$ hyperedge-cost functions, we show that the number of multiobjective min-cuts is $O(r2^{tr}n^{3t-1})$. In particular, this shows that the number of parametric min-cuts in constant rank hypergraphs for a constant number of criteria is strongly polynomial, thus resolving an open question by Aissi, Mahjoub, McCormick, and Queyranne \cite{aissi2015strongly}. In addition, we give randomized algorithms to enumerate all multiobjective min-cuts and all pareto-optimal cuts in strongly polynomial-time. 
        
        \item We also address node-budgeted multiobjective min-cuts: For an $n$-vertex hypergraph endowed with $t$ vertex-weight functions, we show that the number of node-budgeted multiobjective min-cuts is $O(r2^{r}n^{t+2})$, where $r$ is the rank of the hypergraph, and the number of node-budgeted $b$-multiobjective min-cuts for a fixed budget-vector $b\in \R^t_+$ is $O(n^2)$. 
    
        \item We show that min-$k$-cut in hypergraphs subject to constant lower bounds on part sizes is solvable in polynomial-time for constant $k$, thus resolving an open problem posed by Queyranne \cite{queyranne2020sizeconstrained}. Our technique also shows that the number of optimal solutions is polynomial. 
    \end{enumerate}
    All of our results build on the random contraction approach of Karger \cite{Kar93}. 
    Our techniques illustrate the versatility of the random contraction approach to address counting and algorithmic problems concerning multiobjective min-cuts and size-constrained $k$-cuts in hypergraphs. 
\end{abstract}

	\section{Introduction} \label{sec:intro}
Cuts and partitioning play a central role in combinatorial optimization and have numerous theoretical as well as practical applications. 
We consider multicriteria cut problems in hypergraphs. Let $G=(V,E)$ be an $n$-vertex hypergraph and $c_1, \ldots, c_t: E\rightarrow \Z_+$ be $t$ non-negative hyperedge-cost functions, where $t$ is a constant. 
The cost of a subset $F$ of hyperedges under criterion $i\in [t]$ is $c_i(F):=\sum_{e\in F}c_i(e)$. 
For a positive integer $k$, a subset of hyperedges that crosses a $k$-partition $(U_1, \ldots, U_k)$ of the vertex set is said to be a $k$-cut. We refer to a $2$-cut simply as a cut. We recall that the rank of a hypergraph $G$ is the size of the largest hyperedge in $G$ (the rank of a graph is $2$). 

Since we have several criteria, there may not be a single cut that is best for all criteria. 
In multicriteria optimization, there are three important notions to measure the quality of a cut: (i) parametric min-cuts, (ii) pareto-optimal cuts, and (iii) multiobjective min-cuts. 
A cut $F$ is a \emph{parametric min-cut} if there exist positive multipliers $\mu_1, \ldots, \mu_t\in \R_+$ such that $F$ is a min-cut in the hypergraph $G$ with hyperedge costs given by $c_{\mu}(e):=\sum_{i=1}^t \mu_i c_i(e)$ for all $e\in E$. 
A cut $F$ \emph{dominates} another cut $F'$ if $c_i(F)\le c_i(F')$ for every $i\in [t]$ and there exists $i\in [t]$ such that $c_i(F)<c_i(F')$. A cut $F$ is \emph{pareto-optimal} if it is not dominated by any other cut. 
For a budget-vector $b\in \R^{t-1}_+$, a cut $F$ is a \emph{$b$-multiobjective min-cut} if $c_i(F)\le b_i$ for every $i\in [t-1]$ and $c_t(F)$ is minimum subject to these constraints. A cut $F$ is a \emph{multiobjective min-cut} if there exists a non-negative budget-vector $b\in \R^{t-1}_+$ for which $F$ is a $b$-multiobjective min-cut. 
These three notions satisfy the following relationship with the containment being possibly strict (see Appendix \ref{sec:comparison} for a proof):
\begin{align} \label{containment-relationships}
\text{Parametric min-cuts} &\subseteq \text{Pareto-optimal cuts} \subseteq \text{Multiobjective
min-cuts}.
\end{align}
	
There is also a natural notion of min-cuts under node-weighted budget constraints. 
Let $w_1, \ldots, w_t: V\rightarrow \R_+$ be vertex-weight functions and $c: E\rightarrow \R_+$ be a hyperedge-cost function. For a budget-vector $b\in \R^t_+$, a subset $F\subseteq E$  of hyperedges is a \emph{node-budgeted $b$-multiobjective min-cut} if 
$F=\delta(U)$ for some subset $\emptyset\neq U\subsetneq V$ with $\sum_{u\in U}w_i(u)\le b_i$ for all $i\in [t]$  and $c(F)$ is minimum among all such subsets of $E$. A cut $F$ is a \emph{node-budgeted multiobjective min-cut} if there exists a non-negative budget-vector $b$ for which $F$ is a node-budgeted $b$-multiobjective min-cut. 
In this work, we address the following natural questions concerning multiobjective min-cuts and min-$k$-cuts:
\begin{enumerate}
    \item Multiobjective min-cuts: Is the number of multiobjective min-cuts at most strongly polynomial?
    \item Node-budgeted multiobjective min-cuts: Is the number of node-budgeted multiobjective min-cuts at most strongly polynomial? 
    \item Size-constrained min-$k$-cut: 
    For fixed positive integers $k$ and $s_1, \ldots, s_k$ (all constants), a vertex-weight function $w:V\rightarrow \Z_+$, and a hyperedge-cost function $c:E\rightarrow \R_+$,
    can we compute a $k$-cut $F$ with minimum $c(F)$ subject to the constraint that $F$ is the set of hyperedges crossing some $k$-partition $(U_1, \ldots, U_k)$ of $V$ where $\sum_{u\in U_i}w(u)\ge s_i$ for every $i\in [k]$ in polynomial-time?
    Is the number of optimal solutions strongly polynomial?
\end{enumerate}

\paragraph{Previous work.} 
For single criterion, a classic result of Dinitz, Karzanov, and Lomonosov \cite{DKL76} shows that the number of min-cuts in an $n$-vertex graph is $O(n^2)$ (also see Karger \cite{Kar93}). The same upper bound was shown to hold for constant-rank hypergraphs by Kogan and Krauthgamer \cite{Kogan2015sketching} and for arbitrary-rank hypergraphs by Chekuri and Xu \cite{CX18} and by Ghaffari, Karger, and Panigrahi \cite{GKP17} via completely different techniques. 
For $t=2$ criteria in graphs, Mulmuley \cite{Mul99} showed an $O(n^{19})$ upper bound on the number of parametric min-cuts. 
For $t$ criteria in constant-rank hypergraphs, Aissi, Mahjoub, McCormick, and Queyranne \cite{aissi2015strongly} showed that the number of parametric min-cuts is $\tilde{O}(m^{t}n^2)$, where $m$ is the number of hyperedges, using the fact that the number of approximate min-cuts in constant-rank hypergraphs is polynomial. Karger \cite{Karger2016minimish} improved this bound to $O(n^{t+1})$ by a clever and subtle argument based on his random contraction algorithm; we will describe his argument later. Karger also constructed a graph that exhibited $\Omega(n^{t/2})$ parametric min-cuts.  

Armon and Zwick \cite{AZ06} showed that all pareto-optimal cuts in graphs can be enumerated in pseudo-polynomial time. 
For $t=2$ criteria in constant-rank hypergraphs, Aissi et al \cite{aissi2015strongly} showed an upper bound of $\tilde{O}(n^5)$ on the number of pareto-optimal cuts---this was the first result showing a  strongly polynomial upper bound. Aissi et al raised the question of whether the number of pareto-optimal cuts is strongly polynomial for a constant number $t$ of criteria in constant-rank hypergraphs (or even in graphs). Note that, by containment relationship (\ref{containment-relationships}), answering our first question affirmatively would also answer their open question. 
On a related note, Aissi, Mahjoub, and Ravi \cite{Aissi2017multiobjective} designed a random contraction based algorithm to solve the $b$-multiobjective min-cut problem in graphs. The correctness analysis of their algorithm also implies that the number of $b$-multiobjective min-cuts in graphs for a fixed budget-vector $b\in \R^{t-1}_+$ is $O(n^{2t})$. We emphasize the subtle, but important, distinction between the number of $b$-multiobjective min-cuts for a fixed budget-vector $b$ and the number of multiobjective min-cuts. 

Node-budgeted multiobjective min-cut has a rich literature extending nicely to submodular functions.
For graphs, Armon and Zwick \cite{AZ06} gave a polynomial-time algorithm to find a minimum valued cut with at most $b$ vertices in the smaller side. Goemans and Soto \cite{GS13} addressed the more general problem of minimizing a symmetric submodular function $f:2^V\rightarrow \R$ over a downward-closed family $\mathcal{I}$. Recall that the hypergraph cut function is symmetric submodular and the family of vertex subsets satisfying node-weighted budget constraints is in fact downward-closed. Goemans and Soto extended Queyranne's submodular minimization algorithm to enumerate all the $O(n)$ minimal minimizers in $\mathcal{I}$ using $O(n^3)$ oracle calls to the function $f$ and the family $\mathcal{I}$. Their result implies that the number of minimal minimizers is $O(n)$, but 
it is straightforward to see that the total number of minimizers could be exponential. 
For the special case of node-budgeted multiobjective min-cuts in graphs, Aissi, Mahjoub, and Ravi \cite{Aissi2017multiobjective} gave a faster algorithm than that of Goemans and Soto---their algorithm is based on random contraction, runs in $\tilde{O}(n^2)$-time, and shows that the number node-budgeted  $b$-multiobjective min-cuts in graphs for a fixed budget-vector $b\in \R^t_+$ is $O(n^2)$. 

For size-constrained min-$k$-cut, if we allow arbitrary sizes (i.e., arbitrary lower bounds), then the problem becomes NP-hard even for $k=2$ as it captures the well-studied min-bisection problem in graphs. If we consider constant sizes but arbitrary $k$, then the problem is again NP-hard in graphs \cite{GH94}. So, our focus is on constant $k$ and constant sizes. 
Guinez and Queyranne \cite{queyranne2020sizeconstrained} raised size-constrained min-$k$-cut with unit vertex-weights as a sub-problem towards resolving the complexity of the submodular $k$-partitioning problem. In submodular $k$-partitioning, we are given a submodular function $f:2^{V}\rightarrow \R$ (by value oracle) and a fixed constant integer $k$ (e.g., $k=2,3,4,5, \ldots$) and the goal is to find a  $k$-partition $(U_1, \ldots, U_k)$ of the ground set $V$ so as to minimize $\sum_{i=1}^k f(U_i)$. The complexity of even special cases of this problem are open: e.g., if the submodular function $f$ is the cut function of a given hypergraph, then its complexity is unknown.\footnote{We note that if the submodular function $f$ is the cut function of a given hypergraph, then the submodular $k$-partition problem is not identical to hypergraph $k$-cut as the two objectives are different. 

However, if the submodular function is the cut function of a given graph, then the submodular $k$-partition problem coincides with the graph $k$-cut problem which is solvable in polynomial-time.} Guinez and Queyranne showed surprisingly strong non-crossing properties between optimum solutions to size-constrained $(k-1)$-partitioning (constant size lower bounds on the parts) and optimum solutions to $k$-partitioning. This motivated them to study the size-constrained min-$k$-cut problem in hypergraphs for unit vertex-weights as a special case. They showed that size-constrained min-$k$-cut for unit vertex-weights is solvable in polynomial-time in constant-rank hypergraphs (with exponential run-time dependence on the rank) and mention the open problem of designing an algorithm for it in arbitrary-rank hypergraphs. 
The size-constrained min-$k$-cut problem for unit sizes (i.e., all size lower-bounds $s_1, \ldots, s_k$ are equal to one) is known as the hypergraph $k$-cut problem. The hypergraph $k$-cut problem was shown to admit a polynomial-time algorithm only recently \cite{chandrasekararen2018hypergraph} via a non-uniform random contraction algorithm. 

\subsection{Our Contributions}
Our high-level contribution is in showing the versatility of the random contraction technique to address 
algorithmic and counting problems concerning multiobjective min-cuts and size-constrained min-$k$-cuts in hypergraphs. 
All of our results build on the random contraction technique with additional insights.

Our first result is a strongly polynomial upper bound on the number of multiobjective min-cuts in constant-rank hypergraphs. 
\begin{restatable}{theorem}{numMultiobjective}\label{thm:num_multi}
The number of multiobjective min-cuts in an $r$-rank, $n$-vertex hypergraph $G$ with $t$ hyperedge-cost functions is $O(r2^{rt}n^{3t-1})$. 
\end{restatable}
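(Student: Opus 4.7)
The plan is to adapt Karger's random contraction technique to the multicriteria setting by exhibiting each multiobjective min-cut as an approximate min-cut under a suitably normalized positive linear combination of the $t$ cost functions, then running random contraction with a random choice of multiplier and summing the success probabilities across all multiobjective min-cuts.

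First, I would show that every multiobjective min-cut $F$ is a $t$-approximate min-cut of $G$ under a specific positive multiplier. Let $v_i := c_i(F)$ for each $i \in [t]$ and set $\mu^F_i := 1/v_i$; write $c^F := \sum_{i=1}^t \mu^F_i c_i$. Then $c^F(F) = t$. For any other cut $F'$, I use that $F$ is a $v$-multiobjective min-cut for the tight budget $v = (v_1, \ldots, v_{t-1})$: either $c_t(F') \ge v_t$, in which case $c^F(F') \ge c_t(F')/v_t \ge 1$; or $c_t(F') < v_t$, which forces $c_i(F') > v_i$ for some $i \in [t-1]$ (otherwise $F'$ would strictly beat $F$ on $c_t$ while respecting the budget, contradicting the optimality of $F$), again giving $c^F(F') > 1$. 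Thus $F$ is a $t$-approximate min-cut of $G$ under $c^F$.

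Next, I would run Karger's random contraction algorithm on $G$ using a random cost $c_\mu$, where $\mu$ is drawn from a distribution supported on a discretization of the $(t-1)$-dimensional simplex of total size $O(n^{t-1})$. The discretization must be fine enough that for every multiobjective min-cut $F$ some grid multiplier $\mu$ preserves the $O(t)$-approximation property $c_\mu(F)/\min_{F'} c_\mu(F') = O(t)$. For a rank-$r$ $n$-vertex hypergraph, the hypergraph variant of Karger's analysis (in the spirit of Chekuri-Xu and Kogan-Krauthgamer) shows that any $t$-approximate min-cut is produced by random hyperedge contraction down to two super-vertices with probability $\Omega(1/(r \cdot 2^{rt} \cdot n^{2t}))$. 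Combined with the $\Omega(1/n^{t-1})$ probability of landing on a good grid point, each multiobjective min-cut is produced with probability at least $\Omega(1/(r \cdot 2^{rt} \cdot n^{3t-1}))$. Since the algorithm outputs exactly one cut, the number of multiobjective min-cuts is $O(r \cdot 2^{rt} \cdot n^{3t-1})$.

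The main obstacle lies in two places. First, constructing a discretization of the multiplier simplex of size $O(n^{t-1})$ such that for every multiobjective min-cut's witness multiplier $\mu^F$ there is a nearby grid point preserving the $O(t)$-approximation property; this requires structural insight into how the set of approximate min-cuts varies across the simplex, and is where the $(t-1)$-dimensional parametric structure enters. Second, the precise dependence on $r$ and $2^{rt}$ requires careful tracking of the hypergraph random contraction success probability for approximate (rather than exact) min-cuts, since the standard hypergraph min-cut analyses only treat the exact case and rank $r$ enters through the number of vertices merged at each contraction step.
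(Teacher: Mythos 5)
Your first step is a clean and correct observation: if $F$ is a multiobjective min-cut, then with $v_i := c_i(F)$, $F$ is a $v$-multiobjective min-cut for the budget $v = (v_1,\ldots,v_{t-1})$, and under $\mu^F_i := 1/v_i$ one has $c^F(F) = t$ while $c^F(F') \ge 1$ for every cut $F'$ (your case analysis is right). So every multiobjective min-cut is a $t$-approximate min-cut for some positive multiplier. This reduction is genuinely different from the paper's route, and it is the same kind of reduction used by Aissi et al.\ \cite{aissi2015strongly} to bound \emph{parametric} min-cuts via approximate min-cuts.

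The gap is exactly where you flag it, and I do not think it can be filled: you need a discretization of the multiplier simplex of size $O(n^{t-1})$ such that for every multiobjective min-cut $F$, some grid point $\mu$ still has $c_\mu(F)/\min_{F'} c_\mu(F') = O(t)$. The witness multipliers $\mu^F$ depend on the cut costs $v_i = c_i(F)$, which for weighted instances can span an exponential multiplicative range; preserving an $O(t)$ approximation ratio under perturbation of $\mu$ is a multiplicative condition on each coordinate of $\mu$, so a grid fine enough to hit every $\mu^F$ within a constant multiplicative factor would have size governed by $\log(\max c_i / \min c_i)$, which is not strongly polynomial, not $O(n^{t-1})$. (Asserting that such a grid exists ``because the number of multiobjective min-cuts is polynomial'' would be circular.) Indeed, this is essentially why Aissi et al.'s approximate-min-cut argument yields bounds with $m$ in them rather than purely $n$, and why Karger later replaced the approximate-min-cut route by a different mechanism to get $O(n^{t+1})$ for parametric min-cuts.

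The paper's proof sidesteps this entirely. It does \emph{not} discretize the multiplier or budget space at all. Instead it designs a random contraction algorithm for the fixed-budget problem (Algorithm~\ref{fig:alg_b_multi}) whose only budget-dependent decisions are (i) which criterion's cost function to sample the next hyperedge from, and (ii) nothing else --- the sampling within a criterion is uniform (proportional to $c_i$). It then observes that any execution can be re-described as: fix random permutations $\pi_1,\ldots,\pi_t$ of the hyperedges up front, and then choose an \emph{interleaving} saying how many contractions to take from each $\pi_i$. Because interleaved contractions commute, only the $t-1$ ``stopping counts'' $(n_1,\ldots,n_{t-1}) \in [n]^{t-1}$ matter, giving a discretization of size $n^{t-1}$ over the \emph{execution trace} rather than over the multiplier/budget space. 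Algorithm~\ref{fig:alg_multi} just enumerates all these interleavings, so any multiobjective min-cut (for any budget) appears in its output of $\le n^{t-1}$ cuts with probability $\Omega(n^{-2t})/(r2^{rt})$ (Theorem~\ref{thm:suc_prob}), which gives the count $O(r2^{rt}n^{3t-1})$. A secondary point: your appeal to ``the hypergraph variant of Karger's analysis'' for $t$-approximate min-cuts with success probability $\Omega(1/(r2^{rt}n^{2t}))$ is plausible in order of magnitude, but the paper does not invoke an approximate-min-cut bound at all; the $r2^{rt}n^{2t}$ factor there comes from an LP-based analysis of the specific budget-aware algorithm, which deterministically selects which criterion to contract from based on the counts $|U_i|$ of infeasible vertices. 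That deterministic choice of criterion is what makes the interleaving argument go through, and it has no analogue in your multiplier-sampling scheme.
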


We emphasize that our upper bound is over all possible non-negative budget-vectors (in contrast to the number of $b$-multiobjective min-cuts for a fixed budget-vector $b$).
Theorem \ref{thm:num_multi} and Proposition \ref{containment-relationships} imply that the number of pareto-optimal cuts in constant-rank hypergraphs is $O(n^{3t-1})$ and hence, is strongly polynomial for constant number of criteria. This answers the main open question posed by Aissi, Mahjoub, McCormick, and Queyranne \cite{aissi2015strongly}. We also design randomized polynomial time algorithms to enumerate all multiobjective min-cuts and all pareto-optimal cuts in constant-rank hypergraphs (see Section \ref{sec:verifying}). 
Independent of our work, Rico Zenklusen has also shown Theorem \ref{thm:num_multi}. We learned after submission of this work that his approach also leads to \emph{deterministic} polynomial time algorithms to enumerate all multiobjective min-cuts and all pareto-optimal cuts in constant-rank hypergraphs.

Given the upper bound in Theorem \ref{thm:num_multi}, a discussion on the lower bound is in order. We recall that Karger \cite{Karger2016minimish} constructed a graph with $t$ edge-cost functions that exhibited $\Omega(n^{t/2})$ parametric min-cuts. This is also a lower bound on the number of pareto-optimal cuts and multiobjective min-cuts by (\ref{containment-relationships}). We improve this lower bound for pareto-optimal cuts by constructing a graph with $t$ edge-cost functions that exhibits $\Omega(n^{t})$ pareto-optimal cuts (see Section \ref{sec:LowerBounds}). Our instance also exhibits the same lower bound on the number of $b$-multiobjective min-cuts for a fixed budget-vector $b$. 

Our next result is an upper bound on the number of node-budgeted multiobjective min-cuts and node-budgeted $b$-multiobjective min-cuts. 
\begin{restatable}{theorem}{numNodeBudgetedMultiobjective}\label{thm:num_node_budgeted_multi}
\begin{enumerate}
\item The number of node-budgeted multiobjective min-cuts in an $r$-rank, $n$-vertex hypergraph with $t$ vertex-weight functions is $O(r2^{r}n^{t+2})$. 
\item For a fixed budget-vector $b\in \R^t_+$, the number of node-budgeted $b$-multiobjective min-cuts in an $n$-vertex hypergraph with $t$ vertex-weight functions is $O(n^2)$. 
\end{enumerate}
\end{restatable}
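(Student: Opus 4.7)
The plan is to establish both parts by adapting the random contraction framework to hypergraphs endowed with node-weighted budget constraints.

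For Part 2 (fixed budget vector $b$), I would generalize the graph algorithm of Aissi--Mahjoub--Ravi \cite{Aissi2017multiobjective} to hypergraphs. Fix a node-budgeted $b$-multiobjective min-cut $F^{*} = \delta(U^{*})$, where without loss of generality $U^{*}$ is the side satisfying $w_i(U^{*}) \le b_i$ for all $i \in [t]$. The algorithm iteratively contracts a uniformly random hyperedge (weighted by $c$) while tracking the aggregated vertex weights of each super-vertex; the output cut equals $F^{*}$ precisely when no hyperedge of $F^{*}$ is ever contracted. The crucial structural claim is that, conditional on any realized history, $F^{*}$ remains the minimum-cost cut in the contracted hypergraph among those cuts whose small side consists of super-vertices with total weight at most $b$---even as intermediate super-vertex weights may cross some threshold $b_i$. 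Given this claim, the standard Karger-style hyperedge-counting inequality shows that $F^{*}$ survives all contractions with probability $\Omega(1/n^{2})$, and applying it to each node-budgeted $b$-multiobjective min-cut separately establishes the $O(n^{2})$ upper bound.

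For Part 1 (all budget vectors), I would analyze a random-contraction procedure directly and argue that any node-budgeted multiobjective min-cut $F = \delta(U)$ is produced with probability $\Omega\bigl(1/(r 2^{r} n^{t+2})\bigr)$. Morally, this survival probability factors as: (i) an $\Omega(1/n^{2})$ term for preserving $F$ as a min-cut in the final two-partition; (ii) an $\Omega(1/n^{t})$ term reflecting the $t$ extra Pareto dimensions introduced by the vertex-weight budgets, in analogy with Karger's two-phase argument for parametric min-cuts with $t$ edge-criteria \cite{Karger2016minimish}; and (iii) an $\Omega(1/(r 2^{r}))$ hypergraph-specific factor paying for the up to $2^{r} - 2$ ways a rank-$r$ hyperedge may be non-trivially split by a cut, which I would absorb via the Chekuri--Xu \cite{CX18} vertex-selection variant of random contraction. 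Summing these survival probabilities over all such cuts yields the bound $O(r 2^{r} n^{t+2})$.

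The main obstacle for Part 2 is the survival claim itself: hypergraph contractions merge vertex weights additively, so the statement that $F^{*}$ remains the cheapest budget-respecting cut must be verified in every contracted instance along the execution, despite the intricate ways that super-vertex weights evolve and saturate. The main obstacle for Part 1 is producing a clean multiplicative accounting of the three loss factors above; Karger's parametric argument for graphs crucially relies on a two-phase contraction in which the first phase shrinks the instance to a controlled size and the second phase enumerates, and recasting this phased analysis in the node-weighted, arbitrary-rank setting requires that the $n^{t}$ loss from the Pareto dimensions and the $r 2^{r}$ loss from hyperedge splits compound additively in the exponent rather than multiplicatively.
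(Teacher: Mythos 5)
Your proposal for Part 2 has a genuine gap. You propose contracting ``a uniformly random hyperedge (weighted by $c$),'' i.e.\ the standard Karger distribution $\Pr[e]\propto c(e)$, and then invoke ``the standard Karger-style hyperedge-counting inequality'' to conclude an $\Omega(1/n^{2})$ survival probability. That step is false for arbitrary-rank hypergraphs: uniform contraction only yields $\Omega(n^{-r})$ survival when a contracted hyperedge can collapse up to $r$ vertices at once, and Part 2 is stated with no rank restriction. (The paper explicitly flags that ordinary hypergraph min-cut already forces a non-uniform contraction.) The paper's algorithm instead uses the Chandrasekaran--Xu--Yu-style non-uniform distribution, modified for the node-budgeted setting to $\alpha_e = \frac{|U\setminus e|}{|U|}\,c(e)$ where $U$ is the set of feasible vertices, and additionally contracts all infeasible vertices into one super-vertex at every step. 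Both modifications are load-bearing: the $\alpha_e$ weighting is what makes the $\binom{n}{2}^{-1}$-type induction close, and contracting infeasible vertices together keeps the degree-averaging bound $c(F)\le\frac{1}{n-1}\sum_{v}c(\delta(v))$ valid (otherwise too many vertices might be unusable as the small side of a budget-respecting cut). Your ``structural claim'' correctly identifies what must remain invariant under contraction, but with a uniform contraction rule the resulting recursion simply does not solve to $\Omega(n^{-2})$.

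For Part 1 your plan is closer in spirit but the accounting is not right in a way you would discover when formalizing it. You place all three loss factors in a per-cut success probability $\Omega(1/(r2^{r}n^{t+2}))$ and then sum disjoint events. The paper instead keeps the per-cut success probability at $\Omega(2^{-r}n^{-2})$ (via uniform contraction plus contracting infeasible vertices, analyzed by the Kogan--Krauthgamer LP bound, not via Chekuri--Xu vertex selection), and derives the extra $rn^{t}$ by showing the enumeration algorithm outputs a collection of at most $rn^{t}$ cuts: there are $n$ choices of the stopping size $n'$, $n^{t-1}$ choices of thresholds $x_1,\dots,x_{t-1}$, and then only $r$ (not $n$) distinct outcomes as $x_t$ varies, because lowering $x_t$ monotonically enlarges the infeasible super-vertex and the final vertex set only takes values in $\{2,\dots,r+1\}$. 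Your attribution of an $r$ factor to ``ways a rank-$r$ hyperedge may be split'' misidentifies its source; the naive enumeration would give $n^{t+3}$, and it is exactly this monotonicity-in-$x_t$ observation that trades an $n$ for an $r$ to reach $n^{t+2}$. You would need to recover that observation to hit the claimed exponent.
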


We draw the reader's attention to the distinction between the two parts in Theorem \ref{thm:num_node_budgeted_multi}. 
The first part implies that the number of node-budgeted multiobjective min-cuts is strongly polynomial in constant-rank hypergraphs for constant number of vertex-weight functions. The second part implies that the number of node-budgeted $b$-multiobjective min-cuts for any fixed budget-vector $b\in \R^t_+$ is strongly polynomial in arbitrary-rank hypergraphs for any number $t$ of vertex-weight functions. 

Our final result shows that the size-constrained min-$k$-cut problem  can be solved in polynomial time for constant $k$ and constant sizes (in arbitrary-rank hypergraphs). 
\begin{restatable}{theorem}{sizeConstrainedSuccessProb}\label{thm:size_constrained_success_prob}
Let $k\ge 2$ be a fixed positive integer and let $1\le s_1 \le s_2 \le \ldots \le s_k $ be fixed positive integers.
Let $G$ be an $n$-vertex hypergraph with hyperedge-cost function $c:E\rightarrow \R_+$. Then, there exists a polynomial-time algorithm that takes $(G,c)$ as input and returns a fixed $w$-weighted $s$-size-constrained min-$k$-cut for any choice of vertex-weight function $w:V \rightarrow \Z_+$ with probability
\[
\Omega\left(\frac{1}{n^{2\sigma_{k-1}+1}}\right),
\]
where $\sigma_{k-1}:=\sum_{i=1}^{k-1} s_i$.
\end{restatable}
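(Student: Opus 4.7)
The plan is to design a Karger-style random contraction algorithm tailored to the size constraints. Fix an optimal $w$-weighted $s$-size-constrained $k$-partition $(U_1^*,\ldots,U_k^*)$ with associated cut $F^* = \delta(U_1^*,\ldots,U_k^*)$, so that $w(U_i^*) \ge s_i$ for each $i\in[k]$. The goal is a randomized procedure that returns $F^*$ with probability $\Omega(1/n^{2\sigma_{k-1}+1})$.

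The algorithm operates in two phases. In the \emph{contraction phase}, we repeatedly sample a hyperedge $e$ with probability $c(e)/c(E)$ and contract its vertices into a single supernode (removing self-loops), keeping track of which original vertices have been merged. We continue until the number of supernodes drops to a threshold depending only on the constants $k$ and $s_1,\ldots,s_k$. In the \emph{finalization phase}, we enumerate all $k$-partitions of the remaining supernodes that meet the size constraints under $w$, and return the one inducing the cheapest $k$-cut. Since the threshold is a constant, the enumeration runs in polynomial time.

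The success analysis hinges on the event that no hyperedge of $F^*$ is ever contracted. When this event holds, every final supernode is contained in a single $U_i^*$, so $(U_1^*,\ldots,U_k^*)$ appears as one of the enumerated size-constrained $k$-partitions and is selected (after tie-breaking) by optimality. The heart of the analysis is a \emph{size-weighted average-degree lemma}: for any intermediate contracted hypergraph with $n'$ supernodes, $c(F^*)$ is at most an $O(\sigma_{k-1}/n')$ fraction of the current total hyperedge cost. Given this, the probability that a single contraction step is ``safe'' (i.e., does not pick an edge of $F^*$) is $1 - O(\sigma_{k-1}/n')$, and telescoping over all steps — while carefully accounting for the fact that a single hyperedge contraction of rank up to $r$ can reduce the supernode count by more than one — yields the stated $\Omega(1/n^{2\sigma_{k-1}+1})$ lower bound.

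The main obstacle is proving the size-weighted average-degree lemma with the correct coefficient. The standard Karger-style $k$-cut argument bounds $c(F^*)/c(E')$ by comparing $F^*$ to the min-cuts obtained by peeling out single vertices from the parts of the optimal $k$-partition, giving an $O((k-1)/n')$ bound. In our setting this single-vertex peeling may leave some $U_i^*$ with total weight below $s_i$, so the comparison cut need not be a feasible size-constrained $k$-cut and cannot be used to lower-bound $c(F^*)$. Overcoming this requires building a richer family of feasible ``perturbation'' partitions that redistribute constantly many vertices among the $U_i^*$'s while respecting every size lower bound, and averaging $c(F^*)$ against the resulting feasible $k$-cuts. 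Designing this family so that the averaging produces the precise coefficient $O(\sigma_{k-1}/n')$ — and thus lands the exponent exactly at $2\sigma_{k-1}+1$ rather than some weaker polynomial — is the central technical challenge.
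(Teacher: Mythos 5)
Your proposal hits a fundamental obstruction: the \emph{uniform} contraction rule (pick $e$ with probability $c(e)/c(E)$) cannot deliver the rank-independent success probability the theorem claims, because your ``size-weighted average-degree lemma'' with coefficient $O(\sigma_{k-1}/n')$ is simply false for high-rank hypergraphs. Take any hypergraph in which every hyperedge is spanning (size $n'$): then every $k$-cut contains every hyperedge, so $c(F^*) = c(E')$, while the lemma would assert $c(E') \le O(\sigma_{k-1}/n')\,c(E')$. More generally, the best average-degree bound with uniform contraction has coefficient $O(r/n')$, which telescopes to a bound that depends exponentially on the rank $r$. The paper explicitly identifies this as the bottleneck in Guinez and Queyranne's earlier approach and resolves it by replacing uniform contraction with a \emph{non-uniform} rule: a hyperedge $e$ is contracted with probability proportional to $\alpha_e = \binom{n-|e|}{\sigma_{k-1}}/\binom{n}{\sigma_{k-1}}$. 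This discounts large hyperedges (spanning ones get $\alpha_e = 0$ and are never contracted), and the comparison partition argument — drawing a random $k$-partition with parts of fixed sizes $s_1,\ldots,s_{k-1}$ and the rest in part $k$ — then bounds $\sum_e \alpha_e$ by $|E\setminus F^*|$ directly, with no rank dependence. The paper also interleaves stopping: at every recursion level there is a $1/n$ chance to output a random $2\sigma_{k-1}$-vertex partition, rather than contracting down to a fixed threshold; this handles the case where the largest optimal part has size near $n - 2\sigma_{k-1}$.

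A second, smaller gap: your finalization phase ``enumerate all $k$-partitions of the remaining supernodes that meet the size constraints under $w$'' uses $w$, but the theorem requires the algorithm to take only $(G,c)$ as input and to succeed simultaneously for \emph{every} $w$. The paper's algorithm never inspects $w$; in its base case it outputs a uniformly random $k$-partition without any feasibility check, and feasibility enters only in the probabilistic analysis.
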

Theorem \ref{thm:size_constrained_success_prob} resolves an open problem posed by Guinez and Queyranne \cite{queyranne2020sizeconstrained}. 
A structural consequence of Theorem \ref{thm:size_constrained_success_prob} is that the number of size-constrained min-$k$-cuts (over all possible node-weight functions $w:V\rightarrow \Z_+$) in a given hypergraph is polynomial for constant sizes and constant $k$. 

We refer the reader to Table \ref{fig:table-of-results} for a comparison of known results and our contributions. 

\begin{table}[htb]
\centering
\begin{tabular}{|c|c|c|c|}
    \hline
    \textbf{Problem} & \textbf{Graphs} & $r$\textbf{-rank Hypergraphs} & \textbf{Hypergraphs}\\ [0.5ex]
    \hline
    \# of Parametric Min-Cuts & {\color{gray} $O(n^{t+1})$ \cite{Karger2016minimish}} & {\color{gray} $O(2^rn^{t+1})$ \cite{Karger2016minimish}} & OPEN \\
    & {\color{gray} $\Omega(n^{t/2})$ \cite{Karger2016minimish}} && \\
    \hline
    \# of Pareto-Optimal Cuts & {\color{gray} $\tilde{O}(n^5)$ for $t=2$ \cite{aissi2015strongly}} & ${\color{black} O(r2^{rt}n^{3t-1})}$ [Thm \ref{thm:num_multi}] & OPEN\\
    & ${\color{black} O(n^{3t-1})}$ [Thm \ref{thm:num_multi}] && \\
    & ${\color{black} \Omega(n^t)}$ [Thm \ref{thm:pareto_lower_bound}] && \\ 
    \hline
    \# of $b$-Multiobjective & {\color{gray} $O(n^{2t})$ \cite{Aissi2017multiobjective}} & ${\color{black} O(r2^{rt}n^{2t}) }$ [Thm \ref{thm:suc_prob}] & OPEN \\
    Min-Cuts& ${\color{black} \Omega(n^t) }$ [Thm \ref{thm:pareto_lower_bound}] && \\
    \hline
    \# of Multiobjective & ${\color{black} O(n^{3t-1})}$ [Thm \ref{thm:num_multi}] & ${\color{black} O(r2^{rt}n^{3t-1}) }$  [Thm \ref{thm:num_multi}] & OPEN \\
    Min-Cuts& $\Omega(n^t)$ [Thm \ref{thm:pareto_lower_bound}] &&\\
    \hline
    \# of Node-Budgeted  & {\color{gray} $O(n^2)$ \cite{Aissi2017multiobjective}} & ${\color{black} O(n^2)} $ [Thm \ref{thm:num_node_budgeted_multi}] & ${\color{black} O(n^2) }$ [Thm \ref{thm:num_node_budgeted_multi}] \\
    $b$-Multiobjective Min-Cuts &&&\\
    \hline
    \# of Node-Budgeted  & ${\color{black} O(n^{t+2}) }$ [Thm \ref{thm:num_node_budgeted_multi}] & ${\color{black} O(r2^{r}n^{t+2}) }$ [Thm \ref{thm:num_node_budgeted_multi}]& OPEN \\
    Multiobjective Min-Cuts &&& \\
    \hline
    Node-Weighted  & {\color{gray} Poly-time \cite{queyranne2020sizeconstrained}} & {\color{gray} Poly-time  \cite{queyranne2020sizeconstrained}} & Poly-time  \\
    $s$-Size-Constrained $k$-cut &&&[Thm \ref{thm:size_constrained_success_prob}]\\
    (const. $k$ and const. $s\in \Z^{k}$) &&&\\
    \hline
\end{tabular}

\caption{Text in gray refers to known results while text in black illustrates results from this work. Here, $t$ denotes the number of criteria (i.e., the number of hyperedge-cost/vertex-weight functions), $r$ denotes the rank of the hypergraph, and $n$ denotes the number of vertices.}
\label{fig:table-of-results}
\end{table}

\subsection{Technical Overview}
As mentioned earlier, all our results build on the random contraction technique introduced by Karger \cite{Kar93} to solve the global min-cut problem in graphs. Here, a \emph{uniform} random edge of the graph is contracted in each step until the graph has only two nodes; the set of edges between the two nodes is returned as the cut. Karger showed that this algorithm returns a fixed global min-cut with probability $\Omega(n^{-2})$. As a consequence, the number of min-cuts in an $n$-vertex graph is $O(n^2)$. The algorithm extends naturally to $r$-rank hypergraphs, however the naive analysis will only show that the algorithm returns a fixed global min-cut with probability $\Omega(n^{-r})$. Kogan and Krauthgamer \cite{Kogan2015sketching} introduced an LP-based analysis thereby showing that the algorithm indeed succeeds with probability $\Omega(2^{-r} n^{-2})$. As a consequence, the number of global min-cuts in constant-rank hypergraphs is also $O(n^2)$. 

In a recent work, Karger observed that uniform random contraction can also be used to bound the number of parametric min-cuts in constant-rank hypergraphs. We describe his argument for graphs since two of our  theorems build on it. 
Suppose we fix the multipliers $\mu_1, \ldots, \mu_t$ in the parametric min-cut problem, then 
a fixed min $c_{\mu}$-cost cut can be obtained with probability $\Omega(n^{-2})$ by running the random contraction algorithm with respect to the edge-cost function $c_{\mu}$.
Karger suggested an alternative viewpoint of the execution of the algorithm for the edge-cost function $c_{\mu}$. 
For simplicity, we assume parallel edges instead of costs, i.e., $c_i(e)\in \set{0,1}$ for every edge $e$ and every criterion $i\in [t]$. 
Let $E_i$ be the set of edges with non-zero weight in the $i$'th criterion. 
The execution of the random-contraction algorithm wrt $c_{\mu}$ can alternatively be specified as follows: a permutation $\pi_i$ of the edges in $E_i$ for each $i\in [t]$ and an \emph{interleaving} indicating at each step whether the algorithm contracts the next edge from $\pi_1$ or $\pi_2$ or $\ldots$ or $\pi_t$. Critically, the sequences $\pi_i$ for every $i\in [t]$ can be assumed to be uniformly random. Thus, we can move all randomness upfront, namely pick a uniform random permutation $\pi_i$ for each criterion $i\in [t]$. 
Now, instead of returning one cut, we return the collection of cuts produced by contracting along all possible interleavings. This modified algorithm no longer depends on the specific multipliers $\mu_1, \ldots, \mu_t$ and hence, a parametric min-cut for any fixed choice of multipliers $\mu_1, \ldots, \mu_t$ will be in the output collection with probability at least $\Omega(n^{-2})$. 
It remains to bound the number of interleavings since that determines the number of cuts in the returned collection: the crucial observation here is that the number of interesting interleavings is only $n^{t-1}$. This is because interleaved contractions produce the same final graph as performing a certain number of contractions according to $\pi_1$ (until the number of vertices is $n_1$ say), then a certain number of contractions based on $\pi_2$ (until the number of vertices is $n_2$ say), and so on. So, the order of contractions becomes irrelevant and only the number of vertices $n_1, \ldots, n_t$ are relevant. 
Overall, this implies that the number of parametric min-cuts is $O(n^{t+1})$. We emphasize that this \emph{interleaving argument} relies crucially on the basic random contraction algorithm picking edges to contract according to a \emph{uniform distribution} (allowing the permutations $\pi_1, \ldots, \pi_t$ to be uniform random permutations). 

Next, we describe our approach underlying the proof of Theorem \ref{thm:num_multi}, but for graphs. 
In order to bound the number of multiobjective min-cuts through the interleaving argument, we first need a random-contraction based algorithm to solve the $b$-multiobjective min-cut problem. Indeed, Aissi, Mahjoub, and Ravi \cite{Aissi2017multiobjective} designed a random-contraction based algorithm to solve the $b$-multiobjective min-cut problem in graphs. Their algorithm proceeds as follows: Let $U_0:=\emptyset$ and for each $i\in [t-1]$, let $U_i$ be the set of vertices $u\in V-\cup_{j=1}^{i-1}U_j$ for which $c_i(\delta(u))>b_i$ (known as the set of $i$-infeasible vertices), and let $U_t:=V-\cup_{j=1}^{t-1} U_j$. In each step, they pick $i\in [t]$ with probability proportional to the number of $i$-infeasible vertices (i.e., $|U_i|$) and pick a random edge $e$ among the ones incident to $U_i$ with probability proportional to $c_i(e)$, contract $e$, and repeat. Unfortunately, this algorithm does not have the \emph{uniform distribution} that is crucially necessary to apply Karger's interleaving argument. To introduce uniformity to the distribution, we modify this algorithm in two ways:
\begin{enumerate}
    \item At each step 
    we deterministically choose the criterion $i$ corresponding to the largest $U_i$ (as opposed to picking $i$ randomly with probability proportional to $|U_i|$).
	\item 
	Next, we choose a uniform random edge $e$ from among all edges in the graph with probability proportional to $c_i(e)$ (as opposed to picking an edge only from among the edges incident to $U_i$). We contract this chosen edge $e$. 
\end{enumerate}
These two features bring a \emph{uniform distribution} property to the algorithm, which in turn, allows us to apply the interleaving argument. With these two features, we show that the algorithm returns a fixed $b$-multiobjective min-cut for a fixed budget-vector $b$ with probability $\Omega(n^{-2t})$. Armed with the two features, we move all randomness upfront using the interleaving argument. As a consequence, we obtain that the total number of multiobjective min-cuts (irrespective of the choice of budget-vector $b$) is $O(n^{3t-1})$. For constant-rank hypergraphs, we perform an LP-based analysis of our algorithm for $b$-multiobjective min-cut (thus, extending Kogan and Krauthgamer's analysis) to arrive at the same success probability. The interleaving argument for constant-rank hypergraphs proceeds similarly. 

We emphasize that the interleaving argument does not extend to arbitrary-rank hypergraphs. This is because, the random contraction based algorithm that we know for arbitrary-rank hypergraphs crucially requires non-uniform contractions (the next hyperedge to contract is chosen from a distribution that depends on the current sizes of all hyperedges), so we cannot assume that the permutations $\pi_1, \pi_2, \ldots, \pi_t$ are uniformly random. Consequently, we do not even know if the number of parametric  min-cuts in a hypergraph is at most strongly polynomial. Another interesting open question here is whether the $b$-multiobjective min-cut problem in hypergraphs is solvable in polynomial-time even for $t=2$ criteria. We have arrived at hypergraph instances (with large rank) for which Aissi, Mahjoub, and Ravi's approach (as well as our modified approach) will never succeed, even with non-uniform random contractions. 

Next, we outline the proof of Theorem \ref{thm:num_node_budgeted_multi}. The approach is to again design a random-contraction algorithm that returns a fixed node-budgeted $b$-multiobjective min-cut with probability $\Omega(n^{-2})$ (in both constant-rank and arbitrary-rank hypergraphs). Such an algorithm would imply the second part of the theorem immediately while the first part would follow if we can apply an interleaving-like argument (i.e., the designed algorithm performs uniform random contractions). Our approach is essentially an extension of the approach by Goemans and Soto who suggested contracting the infeasible vertices together (a vertex $u$ is infeasible if $w_i(u)>b_i$ for some $i\in [t]$. Aissi, Mahjoub, and Ravi show that doing this additional step after each random contraction step returns a fixed node-budgeted $b$-multibjective min-cut with probability $\Omega(n^{-2})$ in graphs. Our main contribution is showing that this additional ``contracting infeasible vertices together'' step in conjunction with (1) uniform random contractions for constant-rank hypergraphs and (2) non-uniform random contractions for arbitrary hypergraphs succeeds with the required probability. Next, a naive interleaving-like argument can be applied for constant-rank hypergraphs to conclude that the number of node-budgeted multiobjective min-cuts is $O(n^{t+3})$. We improve this to $O(n^{t+2})$ with a more careful argument. 

Finally, we outline our approach for Theorem \ref{thm:size_constrained_success_prob}. 
Guinez and Queyranne address size-constrained $k$-cut in constant-rank hypergraphs for unit vertex-weights by performing uniform random contractions until the number of nodes in the hypergraph is close to $\sum_{i=1}^k s_i$ at which point they return a uniform random cut. Their success probability has exponential dependence on the rank. The key technical ingredient to bring down the exponential dependence on rank is the use of non-uniform contractions. For the special case
of unit sizes and unit vertex-weights 
(i.e., the hypergraph $k$-cut problem), Chandrasekaran, Xu, and Yu \cite{chandrasekararen2018hypergraph} introduced an explicit non-uniform distribution that leads to a success probability of $\Omega(n^{-2(k-1)})$. Our algorithm extends the non-uniform distribution to arbitrary but constant sizes (as opposed to just unit sizes), yet without depending on vertex-weights. Our analysis takes care of the vertex-weight function through weight tracking, i.e., by declaring the weight of a contracted node to be the sum of the weight of the vertices in the hyperedge being contracted. We note that our algorithm's success probability when specialized to the case of unit vertex-weights and unit sizes is weaker than the success probability of the algorithm by Chandrasekaran, Xu, and Yu (by a factor of $n$). We leave it as an open question to improve this. On the other hand, our algorithm has the added advantage that it does not even take the vertex-weight function $w$ as input and yet succeeds in returning a $w$-vertex-weighted $s$-size-constrained $k$-cut for any choice of $w$ with inverse polynomial probability. 

\paragraph{Organization.} In Section \ref{sec:edge-costs}, we bound the number of multiobjective min-cuts (and prove Theorem \ref{thm:num_multi}), give efficient algorithms to enumerate all multiobjective min-cuts and all pareto-optimal cuts, and present lower bounds on the number of pareto-optimal cuts and the number of $b$-multiobjective min-cuts. In Section \ref{sec:node_budgets}, we address node-budgeted multiobjective min-cuts and prove Theorem \ref{thm:num_node_budgeted_multi}. In Section \ref{sec:size_constrained}, we give an algorithm to solve size-constrained min-$k$-cut, thereby proving Theorem \ref{thm:size_constrained_success_prob}.

	We define the random contraction procedure that is central to all of our algorithms. Let $G= (V,E)$ be a hypergraph, and let $U \subseteq V$ be a set of vertices in $G$. We define $G$ \emph{contract} $U$, denoted $G / U$, to be a hypergraph on the vertex set $(V \setminus U) \cup \{u\}$, where $u$ is a newly introduced vertex. The hyperedges of $G/U$ are defined as follows. For each hyperedge $e \in E$ of $G$, such that $e \not\subseteq U$, $G/U$ has a corresponding hyperedge $e'$, where $e' := e$ if $e  \subseteq V \setminus U$ and $e' := (e \setminus U) \cup \{u\}$ otherwise. If $w$ is a vertex-weight function for $G$, then we will also use $w$ as a vertex-weight function for $G / U$. We define the weight of the newly introduced vertex $u$ as $w(u) := \sum_{v \in U} w(v)$.

We will need the following lemma that will be used in the analysis of two of our algorithms. We present its proof in the appendix. 

\begin{restatable}{lemma}{LPSol}\label{lem:lp_sol}
	Let $r, \gamma, n$ be positive integers with $n\ge \gamma \ge r+1 > 2$. Let $f:\N\rightarrow \R_+$ be a positive-valued function defined over natural numbers. Then, the optimum value of the linear program (\ref{LP-lemma}) defined below is  $\min_{2 \leq j \leq r } (1-\frac{j}{\gamma-r+j})f(n-j+1)$.
	
    \begin{alignat}{3}\label{LP-lemma}
	    & \underset{x_2, \dots, x_r, y_2, \dots, y_r}{\text{minimize}} \quad
		 && \sum_{j=2}^r (x_j-y_j)f(n-j+1)  \tag{$LP_1$} \\
		& \text{subject to} \quad
		 && 0 \leq y_j \leq x_j \ \forall j\in \set{2,\ldots, r} \label{con:0_leq_y_leq_x}\\
		 & \quad && \sum_{j=2}^r x_j = 1 \label{con:xs_eq_1} \\
		& \quad && \gamma \sum_{j=2}^r y_j \leq \sum_{j=2}^r j \cdot x_j \label{con:tys_leq_xs}
	\end{alignat}
	\end{restatable}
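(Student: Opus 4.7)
My plan is to prove the claimed formula via LP duality: construct a primal-feasible solution achieving the target value $L := \min_{2 \le j \le r} (1 - j/(\gamma - r + j)) f(n-j+1)$ and exhibit a matching Lagrangian-dual lower bound. Let $j^*$ be an index attaining this minimum, so that $L = g(j^*)$ where $g(j) := \frac{\gamma - r}{\gamma - r + j} f(n-j+1)$.

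For the upper bound, I would guess that the optimum is supported on just two coordinates, $j^*$ and $r$. Specifically, when $j^* < r$, set $x_{j^*} = (\gamma - r)/(\gamma - r + j^*)$, $x_r = j^*/(\gamma - r + j^*)$, $y_r = x_r$, $y_{j^*} = 0$, and all other variables equal to $0$; when $j^* = r$, set $x_r = 1$ and $y_r = r/\gamma$. Direct substitution shows that (\ref{con:xs_eq_1}) holds, that (\ref{con:tys_leq_xs}) holds with equality (namely $\gamma \sum_j y_j = \sum_j j\,x_j$), and that (\ref{con:0_leq_y_leq_x}) is evident. The objective evaluates to $x_{j^*} f(n-j^*+1) = L$ in the first case and to $(1 - r/\gamma) f(n-r+1) = L$ in the second.

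For the matching lower bound, I would dualize (\ref{con:tys_leq_xs}) with multiplier $\mu \ge 0$. The Lagrangian regroups as $\sum_j x_j (f(n-j+1) - \mu j) + \sum_j y_j (\mu \gamma - f(n-j+1))$; minimizing over $0 \le y_j \le x_j$ sets $y_j = x_j$ when $\mu \gamma < f(n-j+1)$ and $y_j = 0$ otherwise, leaving a coefficient $h_j(\mu) := \min\!\bigl(\mu(\gamma - j),\, f(n-j+1) - \mu j\bigr)$ on $x_j$. Minimizing over $x_j \ge 0$ with $\sum_j x_j = 1$ yields the dual bound $\min_{2 \le j \le r} h_j(\mu)$. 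I would then set $\mu^* := f(n-j^*+1)/(\gamma - r + j^*)$ and verify both $\mu^*(\gamma - j) \ge L$ and $f(n-j+1) - \mu^* j \ge L$ for every $j \in \{2,\ldots,r\}$: the first reduces to $\gamma - j \ge \gamma - r$, which holds since $j \le r$; the second, after a short rearrangement, reduces to $g(j) \ge g(j^*)$, which holds by the choice of $j^*$.

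The main obstacle is identifying the correct form of the primal optimum (supported on the two-element set $\{j^*, r\}$) and guessing the correct multiplier $\mu^*$; once these are in hand, the remaining checks are short algebraic manipulations. I also expect the hypothesis $\gamma \ge r + 1$ to play a role precisely in keeping $\gamma - r \ge 1$, so that the coefficients $(\gamma - r)/(\gamma - r + j)$ are strictly positive and $\mu^*$ is well-defined as a proper nonnegative Lagrange multiplier.
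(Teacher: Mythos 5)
Your proof is correct, and it takes a genuinely different route from the paper. The paper's proof is an extreme-point argument: it observes that the LP lives in $\R^{2r-2}$ with $2r$ constraints, deduces that every extreme-point optimum has at most two slack constraints, argues that constraint~(\ref{con:tys_leq_xs}) is tight at every optimum while at least one of the box constraints $0 \le y_j$ and one of the $y_j \le x_j$ must be slack, and then does a case analysis on whether those two slack constraints share an index. Each case leaves a one-variable LP to solve explicitly, and the final answer is the minimum over the two resulting families. Your approach instead produces a certificate pair: a primal-feasible point supported on $\{j^*, r\}$ whose objective value equals $L$, and a Lagrange multiplier $\mu^* = f(n-j^*+1)/(\gamma-r+j^*)$ for constraint~(\ref{con:tys_leq_xs}) whose induced lower bound $\min_j h_j(\mu^*)$ is also $L$; weak duality then pins the optimum exactly. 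I verified the arithmetic: your primal point is feasible (the key facts $\gamma - r \ge 1$ and $j^* \ge 2$ keep the coordinates in $[0,1]$, and $r/\gamma \le 1$ when $j^*=r$), constraint~(\ref{con:tys_leq_xs}) holds with equality, the objective evaluates to $L$, and both dual inequalities $\mu^*(\gamma-j) \ge L$ (equivalent to $j \le r$) and $f(n-j+1) - \mu^* j \ge L$ (equivalent to $g(j) \ge g(j^*)$) are valid for all $j \in \{2,\ldots,r\}$. What your approach buys is a shorter and cleaner write-up with no branching, plus a closed-form dual certificate that could be re-used elsewhere; what the paper's approach buys is a derivation that discovers the optimal support rather than having to guess it, which you rightly flag as the nontrivial step in your route.
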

	
		\section{Multiobjective Min-Cuts and Pareto-Optimal Cuts}\label{sec:edge-costs}
	In this section, we give upper and lower bounds on the number of multiobjective min-cuts and pareto-optimal cuts and prove Theorem \ref{thm:num_multi}. 
	
	Let $G = (V,E)$ be an $r$-rank hypergraph and let $c_1, \dots, c_t: E \rightarrow \mathbb{R}_+$ be cost functions on the hyperedges of $G$. 
	For a subset $F$ of hyperedges, we will use $c_i(F)$ to denote $\sum_{e \in F} c_i(e)$. For a subset $U$ of vertices, we will use $\complement{U}$ to denote $V \setminus U$ and $\delta(U)$ to denote the set of hyperedges that intersect both $U$ and $\complement{U}$. For a vertex $v$, we will use $\delta(v)$ to denote $\delta(\{v\})$. A subset $F$ of hyperedges is a cut if there exists a partition $(U,\complement{U})$ such that $F = \delta(U)$. 
	We refer the reader to Section \ref{sec:intro} for the definitions of $b$-multiobjective min-cuts, multiobjective min-cuts, and pareto-optimal cuts. 
	
	We begin with a randomized algorithm for the $b$-multiobjective min-cut problem in Section \ref{sec:bmultiobjective}. We take an alternative viewpoint of this randomized algorithm in Section \ref{sec:genmultiobjective} to prove Theorem \ref{thm:num_multi}.
	Since all pareto-optimal cuts are multiobjective min-cuts, Theorem \ref{thm:num_multi} also implies that the number of pareto-optimal cuts in an $r$-rank $n$-vertex hypergraph $G$ with $t$ hyperedge-cost functions is $O(r2^{rt}n^{3t-1})$. In Section \ref{sec:verifying}, we give randomized polynomial-time algorithms to enumerate all pareto-optimal cuts and all multiobjective min-cuts. In Section \ref{sec:LowerBounds}, we show a lower bound of $\Omega(n^t)$ on the number of pareto-optimal cuts and on the number of $b$-multiobjective min-cuts. 
	
	\subsection{Finding \texorpdfstring{$b$-Multiobjective Min-Cuts}{b-Multiobjective Min-Cuts}} \label{sec:bmultiobjective}
	In this section, we design a randomized algorithm for the $b$-multiobjective min-cut problem, which is formally defined below. We use Algorithm \ref{fig:alg_b_multi}. We summarize its  correctness and run-time guarantees in Theorem \ref{thm:suc_prob}.
	
	\begin{problem}{$b$-Multiobjective Min-Cut}
	Given: A hypergraph $G=(V,E)$ with hyperedge-cost functions $c_1, \ldots, c_t:E\rightarrow \R_+$ and a budget-vector $b\in \R^{t-1}_+$.
	
	Goal: A $b$-multiobjective min-cut. 
	\end{problem}
	
\begin{figure}[ht] 
\centering\small
\begin{algorithm} 
\textul{\textsc{$b$-Multiobjective-Min-Cut}($G, r, t, c, b$):}\+
\\  {\bf Input: } An $r$-rank hypergraph $G=(V,E)$, hyperedge-cost functions \+ \+ 
\\ $c_1, \ldots, c_{t}: E\rightarrow \R_+$ and a budget-vector $b \in \R_+^{t-1}$. \- \-
\\ \rule{0pt}{3ex}If $|V| \leq rt$:\+
\\      $X \gets $ a random subset of $V$ \\      return $\delta(X)$\-
\\ For $i=1, \dots, t-1$:\+
\\ $U_i \gets \set{v \in V \setminus (\bigcup_{j=1}^{i-1} U_j) \colon c_i(\delta(v)) > b_i \}}$ \-
\\ $U_t \gets V \setminus \bigcup_{j=1}^{t-1} U_j$
\\ $i \gets \arg\max_{j\in [t]} |U_j|$
\\ $e \gets$ a random hyperedge chosen according to $\Pr[e = e'] = \frac{c_i(e')}{c_i(E)}$
\\ $G' \gets G / e$
\\ Return \textsc{$b$-Multiobjective-Min-Cut}($G', r, t, c, b$) \-
\end{algorithm}
\caption{\textsc{$b$-Multiobjective Min-Cut}}
\label{fig:alg_b_multi}
\end{figure}
	
	\begin{theorem}\label{thm:suc_prob}
	    Let $G=(V,E)$ be an $r$-rank $n$-vertex hypergraph with hyperedge-cost functions $c_1, \ldots, c_t:E\rightarrow \R_+$ and let $b\in \R^{t-1}_+$ be a budget-vector. Let $F$ be an arbitrary $b$-multiobjective min-cut. Then, Algorithm \ref{fig:alg_b_multi} outputs $F$ with probability at least $Q_n$, where
		\[ 
		Q_n := \begin{cases}
		\frac{1}{2^{rt}} &\text{if } n \leq rt, and  \\
		\frac{2t+1}{2^{rt}(rt+1)}{n-t(r-2) \choose 2t}^{-1} &\text{if } n > rt.
		\end{cases}
		\]
		Moreover, the algorithm can be implemented to run in polynomial time. 
	\end{theorem}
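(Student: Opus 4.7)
Fix a $b$-multiobjective min-cut $F = \delta(W)$ and argue by strong induction on $n = |V(G)|$ that the algorithm returns $F$ with probability at least $Q_n$. The base case $n \le rt$ is immediate: the algorithm returns $\delta(X)$ for a uniformly random $X \subseteq V$, so $\Pr[\delta(X) = F] \ge 2/2^n \ge 1/2^{rt}$, the factor of $2$ coming from the two choices $X = W$ and $X = V \setminus W$ both yielding the cut $F$.

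For the recursive case $n > rt$, let $i$ be the criterion selected by the algorithm and set $\gamma := |U_i|$. The central structural claim is that every $v \in U_i$ satisfies $c_i(\delta(v)) \ge c_i(F)$: for $i < t$ this follows from the definition of $U_i$ together with the feasibility bound $c_i(F) \le b_i$; for $i = t$, the membership $v \in U_t$ means that $c_j(\delta(v)) \le b_j$ for every $j < t$, so $\delta(v)$ is itself a feasible singleton cut and $c_t(F) \le c_t(\delta(v))$ by optimality of $F$. Summing over $v \in U_i$ and using $\sum_{v \in U_i} c_i(\delta(v)) = \sum_{e \in E} c_i(e) |e \cap U_i| \le \sum_{e \in E} c_i(e) |e|$ yields $\gamma\, c_i(F) \le \sum_{e \in E} c_i(e)|e|$. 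Since the $U_j$'s partition $V$ and $i = \arg\max_j |U_j|$, we moreover have $\gamma \ge \lceil n/t \rceil \ge r + 1$ in this regime.

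Let $x_j$ be the probability that the algorithm contracts a hyperedge of size $j$ in this step and $y_j$ the probability that it contracts one of size $j$ lying in $F$. Then $\sum_j x_j = 1$, $\sum_j y_j = c_i(F)/c_i(E)$, and dividing the previous inequality by $c_i(E)$ gives exactly the feasibility constraint $\gamma \sum_j y_j \le \sum_j j\, x_j$ of Lemma \ref{lem:lp_sol}. Conditioning on the contracted hyperedge,
\[
\Pr[\text{algorithm returns } F] \;\ge\; \sum_{j=2}^r (x_j - y_j)\, Q_{n-j+1} \;\ge\; \min_{2 \le j \le r}\frac{\gamma - r}{\gamma - r + j}\, Q_{n-j+1},
\]
where the second inequality applies Lemma \ref{lem:lp_sol} with $f(m) := Q_m$ and the inductive hypothesis. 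Closing the induction thus reduces to the algebraic claim $Q_n / Q_{n-j+1} \le (\gamma - r)/(\gamma - r + j)$ for every $j \in \{2, \ldots, r\}$ and every $\gamma \ge n/t$. Substituting $M := n - rt$ and using the binomial identity $\binom{M + 2t - (j-1)}{2t}/\binom{M+2t}{2t} = \binom{M}{j-1}/\binom{M+2t}{j-1}$, the main case $n - j + 1 > rt$ rearranges to $\prod_{i=0}^{j-2}(1 + 2t/(M - i)) \ge 1 + jt/M$, which follows from Bernoulli's inequality since $\sum_{i=0}^{j-2} 2t/(M - i) \ge 2(j-1)t/M \ge jt/M$ for $j \ge 2$. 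The boundary case $n - j + 1 \le rt$ uses $Q_{n-j+1} = 1/2^{rt}$ and reduces to a direct computation that is tight at $(n, \gamma, j) = (rt+1, r+1, r)$, confirming that the closed form is chosen optimally.

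The main obstacle is engineering the specific closed form of $Q_n$ --- the shift $n - t(r-2)$ inside the binomial and the prefactor $(2t+1)/(rt+1)$ --- so that both the main and boundary algebraic inequalities above hold simultaneously across all $j \in \{2, \ldots, r\}$; these constants are pinned down by the equality at the critical boundary. Polynomial run-time is routine: the recursion has depth at most $n$ since each contraction shrinks the vertex count by at least $1$, and at each step computing the partition $U_1, \ldots, U_t$, selecting the arg-max criterion, and sampling a hyperedge proportional to $c_i$ all take time polynomial in $n$, $t$, and the bit-length of the input.
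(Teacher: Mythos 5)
Your proposal is correct and follows the same overall structure as the paper: base case by counting subsets, inductive step via the structural bound $c_i(\delta(v)) \ge c_i(F)$ for all $v \in U_i$, the reduction to the LP of Lemma~\ref{lem:lp_sol} with $\gamma = |U_i| \ge r+1$, and then the algebraic verification that $\left(1 - \tfrac{j}{\gamma - r + j}\right)Q_{n-j+1} \ge Q_n$. The one genuine point of departure is in that last algebraic verification: the paper handles the main case $n - j + 1 > rt$ by splitting into two sub-cases ($j > 2t$ and $j \le 2t$), each requiring its own manipulation of the $2t$-term product $\prod_{\ell=0}^{2t-1}\frac{n-t(r-2)-\ell}{n-j+1-t(r-2)-\ell}$. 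You instead observe the binomial identity
\[
\frac{\binom{M + 2t - (j-1)}{2t}}{\binom{M+2t}{2t}} = \frac{\binom{M}{j-1}}{\binom{M+2t}{j-1}}, \qquad M := n - rt,
\]
which collapses the ratio $Q_{n-j+1}/Q_n$ to the shorter product $\prod_{i=0}^{j-2}\bigl(1 + \tfrac{2t}{M-i}\bigr)$ in one step, valid for all $j$ in this regime, after which a single application of Bernoulli finishes it. This is cleaner and eliminates the case split. Two small notes: the inductive hypothesis is what gives you the first inequality $\Pr[\text{returns } F] \ge \sum_j (x_j - y_j)Q_{n-j+1}$ (not the second, which is pure LP duality), so the attribution in your sentence is slightly misplaced; and the boundary case $n-j+1 \le rt$ is asserted rather than computed — it does reduce to $\frac{(rt+1)\binom{n-t(r-2)}{2t}}{2t+1} \ge 1 + \tfrac{jt}{n-rt}$, whose left side is at least $rt+1$ (using $n \ge rt+1$), and tightness of the ensemble of constraints actually occurs at $t=1$, $n=r+1$, $j=r$, not at all $t$ simultaneously as your phrasing suggests. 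Neither affects correctness.
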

	
	\begin{proof}
    	We note that sets $U_i$ can be computed in polynomial time, and the algorithm recomputes them at most $n$ times. Random contraction can also be implemented in polynomial time, and therefore the overall run-time of the algorithm is polynomial.
    	
	    We now bound the correctness probability by induction on $n$. 
		For the base case, we consider $n \le rt$. In this case, the algorithm returns $\delta(X)$ for a random $X \subseteq V$. There are $2^n$ possible values for $X$, and $F = \delta(X)$ for at least one of them. Thus, the probability that the algorithm returns $F$ is at least $\frac{1}{2^n} \geq \frac{1}{2^{rt}} = Q_n$.
		
		Next, we prove the induction step. Suppose $n > rt$. We will need the following claim.
	    	\begin{claim}\label{claim:par_alg_lp}
	The algorithm outputs $F$ with probability at least the optimum value of the following linear program.
	\begin{alignat*}{3} \label{LP-multiobjective}
	    & \underset{x_2, \dots, x_r, y_2, \dots, y_r}{\text{minimize}} \quad
		 && \sum_{j=2}^r (x_j-y_j)Q_{n-j+1}  \tag{$LP_2$} \\
		& \text{subject to} \quad
		 && 0 \leq y_j \leq x_j \ \forall j\in \set{2, \ldots, r} \\
		& \quad && \sum_{j=2}^r x_j = 1 \\
		& \quad && |U_i|\sum_{j=2}^r y_j \leq \sum_{j=2}^r j \cdot x_j 
	\end{alignat*}
	
	\end{claim}
	
	\begin{proof}
	Since $n>rt$, when the algorithm is executed on $G$ it will contract a randomly chosen hyperedge and recurse. Let $e'$ be the random  hyperedge chosen by the algorithm. If $e' \not\in F$, then $F$ will still be a $b$-multiobjective min-cut in $G / e'$. We observe that $G/e'$ is a hypergraph with $n-|e'|+1$ vertices and the rank of $G/e'$ is at most the rank of $G$. Therefore, if $e' \not\in F$, then the algorithm will output $F$ with probability at least $Q_{n-|e'|+1}$.
	
	Let $i\in [t]$ be the index of the cost function chosen by the algorithm. Let 
		\begin{align*}
		    E_{j} &:= \{e \in E \colon |e| = j \}, \\
		    x_j &:= \Pr[e' \in E_j] = \frac{c_i( E_j)}{c_i(E)}, 		    \text{ and}\\
            y_j &:= \Pr[e' \in E_j \cap F] = \frac{c_i(E_j \cap F)}{c_i(E)}.
		\end{align*}
	We note that $E_{j}$ is the set of hyperedges of size $j$, $x_j$ is the probability of picking a hyperedge of size $j$ to contract, and $y_j$ is the probability of picking a hyperedge of size $j$ from $F$ to contract. We know that
		\begin{equation}
		\Pr[\text{Algorithm returns the cut } F] \geq \sum_{j=2}^r (x_j-y_j)Q_{n-j+1} .
		\label{eqn:lower_bound}
		\end{equation}
		
		The values of $x_j$ and $y_j$ will depend on the structure of $G$.  However we can deduce some relationships between them. Since $0 \leq c_i(E_j \cap F) \leq c_i(E_j)$ for every $j \in \{2, \dots, r\}$, we know that 
		\begin{align}\label{ineq:y-atmost-x}
		    0 \leq y_j \leq x_j \text{ for every } j\in \{2, \dots, r\}.
		\end{align}
		Moreover, $x_j$ is the probability of picking a hyperedge of size $j$. Hence, 
		\begin{align}\label{ineq:sum-x}
		    \sum_{j=2}^r x_j = 1.
		\end{align}
		
		Next, we show that for every $i\in [t]$ and every $v\in U_i$, we have 
		\begin{align}\label{ineq:F-atmost-delta-v}
		c_i(F)&\le c_i(\delta(v)).
		\end{align}
		If $i < t$, then $c_i(F) \leq b_i < c_i(\delta(v))$ for every $v \in U_i$. Let $i=t$. We recall that $F$ is a $b$-multiobjective min-cut. Since every cut induced by a single vertex in $U_t$ satisfies all of the budgets, no such cut can have a better $c_t$-cost than $F$, so again $c_i(F) \leq c_i(\delta(v))$ for every $v \in U_i$. 
		
		From inequality (\ref{ineq:F-atmost-delta-v}), 
		we conclude that
		
		\[ c_i(F) \leq \frac{\sum_{v \in U_i} c_i(\delta(v))}{|U_i|} \leq \frac{\sum_{v \in V} c_i(\delta(v))}{|U_i|} = \frac{\sum_{e \in E} |e|c_i(e)}{|U_i|} = \frac{\sum_{j=2}^r j \cdot c_i(E_j)}{|U_i|}. \]
		Therefore
		\[
		\sum_{j=2}^r y_j = \Pr[e' \in F] = \frac{c_i(F)}{c_i(E)} \leq  \frac{1}{|U_i|}\sum_{j=2}^r j \cdot x_j.
		\]
		Thus, we have that 
		\begin{align} \label{ineq:u_i-sum-y_j}
		|U_i|\sum_{j=2}^{r} y_j \leq \sum_{j=2}^r j \cdot x_j. 
		\end{align}
	
		The minimum value of our lower bound in equation (\ref{eqn:lower_bound}) over all choices of $x_j$ and $y_j$ that satisfy inequalities (\ref{ineq:y-atmost-x}), (\ref{ineq:sum-x}), and (\ref{ineq:u_i-sum-y_j}) is a lower bound on the probability that the algorithm outputs $F$.
	\end{proof}

	    Let $U_i$ be a the largest among the sets $U_1, \dots, U_t$ that the algorithm generates when executed on input $(G,r,t,c,b)$. Claim \ref{claim:par_alg_lp} tells us that the algorithm outputs $F$ with probability at least the optimum value of the linear program (\ref{LP-multiobjective}) from the claim.
		
		The linear program (\ref{LP-multiobjective}) is exactly the linear program (\ref{LP-lemma}) from Lemma \ref{lem:lp_sol} with $\gamma = |U_i|$ and $f(n):=Q_n$. To apply Lemma \ref{lem:lp_sol}, we just need to show that $n \geq |U_i| \geq r+1$. We recall that $U_i$ is the largest of the $t$ sets that the algorithm constructs. Each of these sets is a subset of $V$, so we can conclude that  $|U_i| \leq |V| = n$. We also know from the the construction of the sets $U_1, \dots, U_t$ that they partition $V$. This means that $\sum_{j=1}^t |U_j| = n$. Since $U_i$ is the largest of the sets, we must have $|U_i| \geq \frac{n}{t}$. Since $n > rt$, this means $|U_i| \geq \frac{rt+1}{t} > r$. Thus $|U_i| > r$, and since $r$ and $|U_i|$ are integers, we conclude that $|U_i| \geq r+1$. Therefore, we can apply Lemma \ref{lem:lp_sol} with $\gamma = |U_i|$ to conclude that 
		\[
		\Pr[\text{Algorithm} \text{ returns the cut } F] \geq \min_{2 \leq j \leq r} \left(1-\frac{j}{|U_i|-r+j}\right)Q_{n-j+1}.
		\]
		The following claim completes the proof of the theorem. 
		\end{proof}
		\begin{claim}
		For every $j\in \{2, \dots, r\}$, we have 
		 \[
		 \left(1-\frac{j}{|U_i|-r+j}\right)Q_{n-j+1}\ge Q_n
		 \]
		\end{claim}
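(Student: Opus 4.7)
Writing $\gamma := |U_i|$, the claim is equivalent to
\[
\frac{Q_{n-j+1}}{Q_n} \;\geq\; \frac{\gamma - r + j}{\gamma - r},
\]
and since the right-hand side decreases in $\gamma$, it suffices to verify the inequality using the smallest admissible value of $\gamma$. The surrounding proof of Theorem~\ref{thm:suc_prob} establishes both $\gamma \geq r+1$ and $\gamma \geq \lceil n/t \rceil$, so we may assume $\gamma - r \geq \max\{1,\, (n-rt)/t\}$. The plan is to split based on where $n - j + 1$ lies relative to the threshold $rt$ at which the piecewise definition of $Q$ switches formulas, and verify each regime by direct algebraic manipulation.

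In the main case $n - j + 1 > rt$, both $Q_n$ and $Q_{n-j+1}$ are given by the binomial formula; setting $m := n - t(r-2)$, a telescoping computation yields
\[
\frac{Q_{n-j+1}}{Q_n} \;=\; \frac{\binom{m}{2t}}{\binom{m-j+1}{2t}} \;=\; \prod_{\ell=0}^{j-2}\frac{m-\ell}{m-2t-\ell}.
\]
Since $\gamma - r \geq (n-rt)/t = (m-2t)/t$ in this regime, the claim reduces to
\[
\prod_{\ell=0}^{j-2}\frac{m-\ell}{m-2t-\ell} \;\geq\; \frac{m-2t+jt}{m-2t}.
\]
I will establish this by induction on $j$: the base case $j=2$ holds with equality, and the inductive step, after clearing denominators, reduces to the elementary inequality $t(m-2t) + t(2jt + j - 1) \geq 0$, which is nonnegative since $m > 2t$ (equivalent to $n > rt$).

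In the boundary case $rt < n \leq rt + j - 1$, we have $Q_{n-j+1} = 2^{-rt}$ while $Q_n$ still uses the binomial formula. Writing $n = rt + s$ with $1 \leq s \leq j - 1$, the ratio simplifies to
\[
\frac{Q_{n-j+1}}{Q_n} \;=\; \frac{(rt+1)\binom{2t+s}{s}}{2t+1}.
\]
Using the (possibly loose) bound $\gamma - r \geq 1$, it suffices to show this ratio is at least $j + 1$, which follows directly from $rt + 1 \geq r + 1 \geq j + 1$ (since $j \leq r$ and $t \geq 1$) together with $\binom{2t+s}{s}/(2t+1) \geq 1$ for every $s \geq 1$. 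The main obstacle is the inductive step in the main case, where the entire difficulty is isolated to a single polynomial inequality that is elementary once the right reformulation is in place.
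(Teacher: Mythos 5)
Your proof is correct, and your overall framework matches the paper's: rewrite the claim as $Q_{n-j+1}/Q_n \geq \frac{\gamma-r+j}{\gamma-r}$, replace $\gamma = |U_i|$ by its lower bounds ($\gamma \geq r+1$ and $\gamma \geq n/t$), and split on whether $n-j+1$ exceeds the threshold $rt$. Where you diverge is inside the main case. The paper writes the binomial ratio as a product of $2t$ factors and then splits into two further subcases on whether $j > 2t$ or $j \leq 2t$, estimating each via a Bernoulli-type bound $(1+x)^k \geq 1+kx$. You instead observe that the ratio always telescopes to the $(j-1)$-factor product $\prod_{\ell=0}^{j-2}\frac{m-\ell}{m-2t-\ell}$ regardless of how $j$ compares to $2t$, which eliminates the subcase split entirely, and then you prove the needed lower bound $\frac{m-2t+jt}{m-2t}$ by a clean induction on $j$ whose step reduces to the single polynomial inequality $t(m-2t) + t(2jt+j-1) \geq 0$. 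I verified the algebra: with $a = m-2t$, the difference $(a+jt)(m-j+1) - (a+(j+1)t)(a-j+1)$ does indeed simplify to $t(a + 2jt + j - 1)$, and both factors are positive since $m > 2t$, $j \geq 2$, $t \geq 1$. This is a genuine streamlining of the paper's main case. Your boundary case (using $\gamma - r \geq 1$ to reduce to $Q_{n-j+1}/Q_n \geq j+1$, then $rt+1 \geq j+1$ and $\binom{2t+s}{s} \geq 2t+1$) is essentially the same calculation as the paper's, just phrased against a slightly different target bound.
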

		\begin{proof}
		Let $j \in \{2, \dots, r\}$. The given inequality is equivalent to $\frac{Q_{n-j+1}}{Q_n} \geq \frac{|U_i|-r+j}{|U_i|-r}$. Since $U_i$ is the largest among $U_1, \ldots, U_t$ which together partition $V$, we have $|U_i| \geq \frac{n}{t}$. Consequently, $\frac{|U_i|-r+j}{|U_i|-r} = 1 + \frac{j}{|U_i|-r} \leq 1+ \frac{jt}{n-rt}$. Therefore, it suffices to prove that $\frac{Q_{n-j+1}}{Q_n} \geq 1 + \frac{jt}{n-tr}$. We case on the value of $n-j+1$. 
		\begin{casesp}
			\item Suppose that $n-j+1 > rt$. Then, we have
			\begin{equation}\label{eqn:qs_ratio}
			\frac{Q_{n-j+1}}{Q_n} = \frac{ {n-t(r-2) \choose 2t} }{ {n-j+1-t(r-2) \choose 2t} }  = \prod_{\ell=0}^{2t-1} \frac{n-t(r-2)-\ell}{n-j+1-t(r-2)-\ell}.
			\end{equation}
			
			We consider two sub-cases based on the value of $j$.
			
			\begin{casesp}
				\item Suppose that $j > 2t$. Then, we observe that
				\begin{align*}
				\prod_{\ell=0}^{2t-1} \frac{n-t(r-2)-\ell}{n-j+1-t(r-2)-\ell} &\geq  \left( \frac{n-t(r-2)}{n-j+1-t(r-2)} \right)^{2t} \\
				&= \left( 1 + \frac{j-1}{n-j+1-t(r-2)} \right)^{2t} \\
				&\geq 1 + \frac{2t(j-1)}{n-j+1-t(r-2)} \\
				&\geq 1+ \frac{jt + (j-2)t}{n-rt} \\
				&\geq 1+ \frac{jt}{n-rt}.
				\end{align*}
				We use $j > 2t$ in the second to last inequality and $j \geq 2$ in the final inequality.
				
				\item Suppose that $j \leq 2t$. Then we can cancel additional terms from the the right hand side of equation (\ref{eqn:qs_ratio}) to obtain that
				\begin{align*}
				\prod_{\ell=0}^{2t-1} \frac{n-t(r-2)-\ell}{n-j+1-t(r-2)-\ell} &= \prod_{\ell=0}^{j-2} \frac{n-t(r-2)-\ell}{n-tr-\ell} \\
				&\geq \left( \frac{n-t(r-2)}{n-tr} \right)^{j-1} \\
				&= \left( 1 + \frac{2t}{n-rt} \right)^{j-1} \\
				&\geq 1 + \frac{2t(j-1)}{n-rt} \\
				&\geq 1 + \frac{jt}{n-rt}.
				\end{align*}
			\end{casesp}
			Thus, in either subcase, our desired inequality holds. 
			
			\item Suppose that $n-j+1 \leq rt$. 
			Now the expression for $Q_{n-j+1}$ is different. Since we still know that $n \geq rt+1$ and $j \leq r$, we conclude that
			\begin{align*}
			\frac{Q_{n-j+1}}{Q_n} &= \frac{(rt+1){n-t(r-2) \choose 2t}}{2t+1} \\
			&\geq \frac{(rt+1){rt+1-t(r-2) \choose 2t}}{2t+1} \\
			&= rt+1 \\
			&\geq 1 + jt \\
			&\geq 1 + \frac{jt}{n-rt}.
			\end{align*}
		\end{casesp}
		Thus, our desired inequality holds in all cases.
	\end{proof}
	
	\subsection{Finding Multiobjective Min-Cuts} \label{sec:genmultiobjective}
	In this section, we present Algorithm \ref{fig:alg_multi}, which does \emph{not} take a budget-vector as input, yet outputs any multiobjective min-cut (for any choice of budget-vector) with inverse polynomial probability. This is accomplished by returning a collection of cuts. 

	In contrast to Algorithm \ref{fig:alg_b_multi}, all of the randomness in Algorithm \ref{fig:alg_multi} (except for the selection of a random cut in the base case) occurs upfront through the selection of a permutation of the hyperedges.
	Theorem \ref{thm:pareto_alg} summarizes the guarantees of Algorithm \ref{fig:alg_multi}. 
	
\begin{figure*}[ht] 
    \centering\small
    \begin{algorithm} 
    \textul{\textsc{Multiobjective-Min-Cut}($G, r, t, c_1, \dots, c_t$):}\+
    \\  {\bf Input: } An $r$-rank hypergraph $G=(V,E)$ and \+ \+
    \\ hyperedge-cost functions $c_1,\ldots, c_t:E\rightarrow \R_+$. \- \-
    \\ \rule{0pt}{3ex}If $|V| \leq rt$: \+
    \\ Pick a random subset $X$ of $V$ and return $\delta(X)$\-
    \\ For $i=1, \dots, t$: \+
    \\ $E_i \gets \{e \in E \colon c_i(e) > 0 \}$
    \\ $\pi_i \gets$ a permutation of $E_i$ generated by repeatedly choosing a not yet \+ \\ chosen hyperedge $e$ with probability proportional to $c_i(e)$ \-\-
    \\ $R \gets \emptyset$
    \\ For each sequence $n_1, \dots n_t$ with $n \geq n_1 \geq n_2 \geq \dots \geq n_{t-1} \geq n_t = rt$: \+
    \\ $G' \gets G$ 
    \\ For $i = 1, \dots, t$: \+
    \\ While $|V(G')| > n_i$: \+
    \\ $e \gets $ the first hyperedge from $\pi_i$ that is still present in $G'$
    \\ $G' \gets G' / e$ \-\-
    \\ $X \gets $ a random subset of $V(G')$
    \\ Add $\delta(X)$ to $R$ if it is not already present \-
    \\ Return $R$ \-
    \end{algorithm}
    \caption{\textsc{Multiobjective Min-Cut}}
    \label{fig:alg_multi}
\end{figure*}
	
	\begin{theorem}\label{thm:pareto_alg}
	    Let $G$ be an $r$-rank, $n$-vertex hypergraph with $t$ hyperedge-cost functions. 
	    Then, a fixed multiobjective min-cut $F$ is in the collection returned by Algorithm \ref{fig:alg_multi} with probability 
	    \[
	    \frac{\Omega(n^{-2t})}{r2^{rt}}.
	    \]
	    Moreover, the algorithm outputs at most $n^{t-1}$ cuts. 
	\end{theorem}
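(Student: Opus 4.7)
I plan to prove the theorem in two parts. First, the size bound $|R| \le n^{t-1}$ is immediate: Algorithm \ref{fig:alg_multi}'s outer loop iterates once per non-increasing sequence $n \ge n_1 \ge \cdots \ge n_{t-1} \ge n_t = rt$, of which there are $\binom{n-rt+t-1}{t-1} = O(n^{t-1})$, and each iteration adds at most one cut to $R$.

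For the probability bound, fix a multiobjective min-cut $F$. By definition, $F$ is a $b$-multiobjective min-cut for some $b \in \R^{t-1}_+$. I would couple the randomness of Algorithm \ref{fig:alg_multi} (the $t$ permutations $\pi_1, \ldots, \pi_t$ together with the final random-subset step) with an execution of Algorithm \ref{fig:alg_b_multi} on input $(G, r, t, c, b)$, using the standard equivalence between drawing a $c_i$-weighted random hyperedge from the currently present hyperedges of $G'$ and scanning a precomputed $c_i$-weighted permutation of $E_i$ for its first still-present entry. Under this coupling, the coupled execution has its usual output distribution, so by Theorem \ref{thm:suc_prob} it returns $F$ with probability at least $Q_n = \Omega(n^{-2t})/(r \cdot 2^{rt})$.

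The principal obstacle is the hypergraph analogue of Karger's interleaving argument: whenever the coupled execution halts on a contracted graph $G^*$ corresponding to a partition $P$ of $V(G)$, I need to exhibit a staircase $(n_1^*, \ldots, n_t^*)$ in Algorithm \ref{fig:alg_multi}'s enumeration whose contraction sequence also ends at $G^*$. The plan is to take $n_i^*$ to be the number of vertices remaining after a ``within-$P$ pass'' on $\pi_i$ — that is, after contracting every still-present entry of $\pi_i$ whose vertex set sits inside a single part of $P$. The key observation supporting this choice is that every hyperedge contracted by the coupled execution is within $P$, and any hyperedge of $\pi_i$ that the coupled execution skipped was skipped because its vertices had already been collapsed into a single class of the intermediate partition, which is a refinement of $P$, so this skipped hyperedge also lies within a single part of $P$. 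Consequently, for each $i$, Algorithm \ref{fig:alg_multi}'s block $i$ contracts a superset of the coupled execution's $\pi_i$-contractions, with every extra contraction being within $P$ and hence harmless for the final partition. Setting $n_i^*$ to the vertex count produced by the within-$P$ pass ensures Algorithm \ref{fig:alg_multi}'s staircase stops just before any hyperedge that would cross $P$, so the staircase ends at $G^*$. A final coupling of Algorithm \ref{fig:alg_multi}'s random-subset step at staircase $(n_1^*, \ldots, n_t^*)$ with that of Algorithm \ref{fig:alg_b_multi} then yields $\Pr[F \in R] \ge Q_n = \Omega(n^{-2t})/(r \cdot 2^{rt})$, completing the proof. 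The main delicate point I expect is verifying that the above ``within-$P$ pass'' characterization really corresponds to the behavior of Algorithm \ref{fig:alg_multi}'s block $i$ for the chosen $n_i^*$ in the presence of large-rank hyperedges, since a single hyperedge contraction can reduce the vertex count by up to $r - 1$ and can render many subsequent hyperedges non-present simultaneously.
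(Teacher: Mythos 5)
Your high-level plan --- couple Algorithm~\ref{fig:alg_multi}'s permutation view with a run of Algorithm~\ref{fig:alg_b_multi} and exhibit a staircase in Algorithm~\ref{fig:alg_multi}'s enumeration that lands on the coupled run's final contracted hypergraph --- matches the paper's approach, and your size bound is fine. However, your staircase choice is wrong, and not for the reason you flag. You set $n_i^*$ to be the vertex count after a ``within-$P$ pass'' on $\pi_i$, which contracts \emph{all} still-present within-$P$ entries of $\pi_i$, including those that appear in $\pi_i$ \emph{after} a present cross-$P$ hyperedge. But Algorithm~\ref{fig:alg_multi}'s block $i$ does not skip cross-$P$ hyperedges: it contracts the first present entry of $\pi_i$, whatever it is, as long as the vertex count exceeds $n_i$. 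If $\pi_i$ begins with a present cross-$P$ hyperedge $f$ while a within-$P$ hyperedge $e$ sits later in $\pi_i$ and is still present (for instance when the coupled run advanced zero positions into $\pi_i$ because it never chose criterion $i$, so $m_i = 0$), then $n_i^*$ is strictly smaller than the current vertex count, block $i$'s while-loop runs, and the first thing it contracts is $f$ --- merging two parts of $P$, so the staircase does not end at $G^*$.

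The fix is the paper's choice: let $m_j$ be the length of the prefix of $\pi_j$ that the coupled execution actually advanced through, and set $n_i$ to be the vertex count after contracting (allowing no-op contractions of already-collapsed hyperedges) the first $m_j$ entries of $\pi_j$ for $j = 1, \ldots, i$. Every entry in those prefixes is within $P$: it was either contracted by the coupled run, or skipped because it had already become a singleton under an intermediate partition that refines $P$. The blocked vertex-count function is non-increasing and equals $n_i$ at position $m_i$, so Algorithm~\ref{fig:alg_multi}'s block $i$ with stopping count $n_i$ halts at some position $\le m_i$ having reached the same partition; in particular it never contracts a cross-$P$ hyperedge. (There is a separate small technicality, also elided in the paper's write-up, that some $n_i$ may fall below $rt$; one caps them at $rt$ and lets a later block absorb the overshoot.) Your flagged concern about $r$-ary contractions is subsumed by this: overshoot is benign precisely because $n_i$ is attained exactly at position $m_i$. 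The essential missing ingredient in your argument is that the positions a block is allowed to contract must form a \emph{prefix} of $\pi_i$, which the $m_i$-based choice guarantees and the within-$P$ pass does not.
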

	
	\begin{proof}
		We begin by showing the second part of the theorem. The algorithm outputs at most one cut for each choice of $n_1, \ldots, n_{t-1}\in [n]$ (or just one cut if $|V| \leq rt$). Hence, it outputs at most $n^{t-1}$ cuts.
		
		We now argue that the algorithm retains the same success probability as Algorithm \ref{fig:alg_b_multi}, for any fixed budget-vector $b$. 
		
		Suppose $n \leq rt$. Then both Algorithm \ref{fig:alg_multi} and Algorithm \ref{fig:alg_b_multi} return $\delta(X)$ for a random subset $X$ of the vertices of $G$. Thus, for any cut $F$, the two algorithms have the same probability of returning $F$. Henceforth, we assume $n > rt$.
		
		We will view Algorithm \ref{fig:alg_b_multi} from a different perspective. In that algorithm, whenever we contract a hyperedge $e$, we choose, for some $i\in [t]$, a hyperedge according to the probability distribution $\Pr[e = e'] = \frac{c_i(e')}{c_i(E)}$. In particular, the choice of $i$ depends on which contractions have been made so far, but the choice of a particular hyperedge, given the choice of $i$, does not depend on our previous contractions, except for the fact that we do not contract hyperedges which have already been reduced to singletons. We note that allowing the contraction of singletons would not change the success probability of the algorithm. 
		
		Therefore, we could modify Algorithm \ref{fig:alg_b_multi} so that it begins by selecting permutations $\pi_1, \dots, \pi_t$ of $E_1, \dots, E_t$ (where $E_i = \{ e \in E \colon c_i(e) > 0 \}$) as in Algorithm \ref{fig:alg_multi}, and then whenever the algorithm asks to
		contract a random hyperedge with probability proportional to its  weight under $c_i$, we  instead contract the next hyperedge from $\pi_i$ which is still present in the current hypergraph. This modification does not change at any step the probability that a particular hyperedge is the next contraction of a non-singleton hyperedge, and therefore the success probability of the algorithm remains exactly the same.
		
		Viewing Algorithm \ref{fig:alg_b_multi} in this way, we note that when we reach the base case of $n \leq rt$, we will have contracted the first $m_i$ hyperedges of each $\pi_i$, for some $m_1, \dots, m_t \in \{0, \dots |E|\}$. The crucial observation now is that interleaved contractions can be separated. That is, if we know $m_i$ for every $i\in [t]$, the order in which we do the contractions is irrelevant: we will get the same resulting hypergraph if we contract the first $m_1$ hyperedges from $\pi_1$, then contract the first $m_2$ hyperedges from $\pi_2$, and so on up through the first $m_t$ hyperedges from $\pi_t$ instead of the interleaved contractions. Let $n_1$ be the number of vertices in the hypergraph obtained after contracting the first $m_1$ hyperedges from $\pi_1$, subsequently, let $n_2$ be the number of vertices in the hypergraph obtained after contracting the first $m_2$ hyperedges from $\pi_2$ and so on. 
		
		When we view Algorithm \ref{fig:alg_b_multi} in this way, it is only the choice of the values $n_1, \dots, n_t$ that depends on the budgets, while the choice of the permutation $\pi_i$ does not depend on the budgets. Algorithm \ref{fig:alg_multi} is running exactly the version of Algorithm \ref{fig:alg_b_multi} that we have just described, except that instead of choosing $n_1, \dots, n_t$ based on the budgets, it simply tries all possible options (which will certainly include whichever $n_1, \dots, n_t$ Algorithm \ref{fig:alg_b_multi} would use for the given input budget-vector). 
		
		Therefore for every fixed choice of budget-vector $b$, and every fixed $b$-multiobjective min-cut $F$, the probability that $F$ is in the collection $R$ output by the algorithm is at least as large as the probability that $F$ is the cut output by Algorithm \ref{fig:alg_b_multi}. By Theorem \ref{thm:suc_prob}, this probability is $\frac{\Omega(n^{-2t})}{r2^{rt}}$, as desired.
	\end{proof}
	
	We derive Theorem \ref{thm:num_multi} from Theorem \ref{thm:pareto_alg} now. 
	
	\numMultiobjective*
	\begin{proof}
	Let $x$ be the number of multiobjective min-cuts in $G$. By Theorem \ref{thm:pareto_alg}, the expected number of multiobjective min-cuts output by our algorithm  \textsc{Multiobjective Min-Cut} (i.e., Algorithm \ref{fig:alg_multi}) is $(x/r2^{rt})\Omega(n^{-2t})$. Theorem \ref{thm:pareto_alg} also tells us that the algorithm outputs at most $n^{t-1}$ cuts. Therefore, $x=r2^{rt} \cdot O(n^{3t-1})$.
	\end{proof}
	
	\subsection{Enumerating Multiobjective Min-Cuts and Pareto-Optimal Cuts}\label{sec:verifying}
	
	In this section, we give algorithms to enumerate all multiobjective min-cuts and pareto-optimal cuts in polynomial time.
	
	We first give a polynomial time algorithm to enumerate all multiobjective min-cuts. We execute our algorithm for \textsc{Multiobjective Min-Cut} (i.e., Algorithm \ref{fig:alg_multi}) a sufficiently large number of times so that it succeeds with high probability (i.e., with probability at least $1-1/n$): In particular, executing it $r^2t2^{rt}O(n^{2t}\log n)$ 
	many times gives us a collection $\mathcal{C}$ that is a superset of the collection $\mathcal{C}_{MO}$ of multiobjective min-cuts with high probability.  Moreover, the size of the collection $\mathcal{C}$ is $r^2t2^{rt}O(n^{3t-1}\log n)$. We can prune $\mathcal{C}$ to identify $\mathcal{C}_{MO}$ in polynomial-time as follows: remove every cut $F \in \mathcal{C}$ for which there exists a cut $F'\in \mathcal{C}$ with $c_t(F') < c_t(F)$ and $c_i(F') \leq c_i(F)$ for every $1 \leq i \leq t-1$. 
	
	Next, we give a polynomial time algorithm to enumerate pareto-optimal cuts. By containment relation (\ref{containment-relationships}), it suffices to identify all pareto-optimal cuts in the collection $\mathcal{C}$. For this, we only need a polynomial-time procedure to verify if a given cut $F$ is pareto-optimal. Algorithm \ref{fig:alg_verify_pareto_cut} gives such a procedure. It essentially searches for a cut that dominates the given cut $F$ by running our algorithm for \textsc{$b$-Multiobjective Min-Cut} with $t$ different budget-vectors. 
	
	\begin{figure*}[ht] 
    \centering\small
    \begin{algorithm} 
    \textul{\textsc{Verify-Pareto-Optimality}($G, r, t, c_1, \dots, c_t, F$):}\+
    \\  {\bf Input: } An $r$-rank hypergraph $G=(V,E)$, cost functions $c_1,\ldots, c_t:E\rightarrow \R_+$, \+ \+ \\ and a cut $F$ in $G$ \- \-
    \\ \rule{0pt}{3ex}For $i=1, \dots, t$: \+
    \\ $\vec{c} \gets (c_1, \dots c_{i-1}, c_{i+1}, \dots, c_t, c_i)$
    \\ $\vec{b} \gets (c_1(F), \dots, c_{i-1}(F), c_{i+1}(F), \dots, c_t(F))$
    \\ For $j=1, \dots, r2^{rt}O(n^{2t}\log(n))$: \+
    \\ $F' \gets $ \textsc{$b$-Multiobjective-Min-Cut}($G,r,t, \vec{c}, \vec{b}$)
    \\ If $F'$ is a $b$-multiobjective cut in $(G,\vec{c})$ and $c_i(F') < c_i(F)$: \+
    \\ Return FALSE \-\-\-
    \\ Return TRUE \-
    \end{algorithm}
    \caption{Verify pareto-optimality of a given cut}
    \label{fig:alg_verify_pareto_cut}
\end{figure*}
	
	\begin{theorem}\label{thm:verify_cut}
		Given an $r$-rank, $n$-vertex hypergraph $G$ with $t$ hyperedge-cost functions and a cut $F$ in $G$, if $F$ is a pareto-optimal cut, then Algorithm \ref{fig:alg_verify_pareto_cut} returns TRUE, and if $F$ is not a pareto-optimal cut, then the algorithm returns FALSE with high probability. Moreover, the algorithm can be implemented to run in polynomial time.
		
	\end{theorem}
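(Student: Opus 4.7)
The plan is to reduce verification of pareto-optimality to $t$ instances of $b$-multiobjective min-cut, one for each coordinate in which a dominating cut could improve upon $F$. The key observation is that $F$ is \emph{not} pareto-optimal if and only if there exist an index $i \in [t]$ and a cut $F'$ satisfying $c_j(F') \le c_j(F)$ for all $j \ne i$ together with $c_i(F') < c_i(F)$: any such $F'$ dominates $F$, and conversely any dominator of $F$ must strictly beat $F$ in at least one coordinate. Algorithm \ref{fig:alg_verify_pareto_cut} implements precisely this check by, for each $i$, constructing the $b$-multiobjective min-cut instance with cost vector $\vec{c} = (c_1,\ldots,c_{i-1},c_{i+1},\ldots,c_t,c_i)$ (so that $c_i$ becomes the minimization objective) and budget vector $\vec{b}$ consisting of the values $c_j(F)$ for $j \ne i$.

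For completeness, if $F$ is pareto-optimal, then by the observation above no cut $F'$ returned by any call to \textsc{$b$-Multiobjective-Min-Cut} can simultaneously be $\vec{b}$-feasible under $\vec{c}$ and satisfy $c_i(F') < c_i(F)$, so the inner test never fires and the algorithm outputs TRUE with probability $1$. For soundness, suppose $F$ is not pareto-optimal and fix a dominator $F^\star$ together with an index $i^\star$ satisfying $c_{i^\star}(F^\star) < c_{i^\star}(F)$. In the outer-loop iteration $i = i^\star$, the cut $F^\star$ is feasible for the constructed instance because $c_j(F^\star) \le c_j(F) = b_j$ for every $j \ne i^\star$; hence the optimum value of that $b$-multiobjective min-cut instance is at most $c_{i^\star}(F^\star) < c_{i^\star}(F)$. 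Let $F^{\mathrm{opt}}$ be any optimal solution to this instance. By Theorem \ref{thm:suc_prob}, a single call to \textsc{$b$-Multiobjective-Min-Cut} returns $F^{\mathrm{opt}}$ with probability $\Omega\!\left(1/(r 2^{rt} n^{2t})\right)$. Repeating $r 2^{rt} \cdot O(n^{2t}\log n)$ times drives the probability of never producing $F^{\mathrm{opt}}$ below $1/n^c$ for any desired constant $c$; whenever $F^{\mathrm{opt}}$ is produced, the inner test triggers and the algorithm returns FALSE.

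Run-time is polynomial: the outer loop has $t$ iterations, the inner loop performs a polynomial number of calls (for constant $r,t$), and each call of \textsc{$b$-Multiobjective-Min-Cut} runs in polynomial time by Theorem \ref{thm:suc_prob}; the feasibility check ``$F'$ is a $b$-multiobjective cut in $(G,\vec{c})$'' reduces to evaluating each $c_j(F')$ and comparing with $b_j$. The only delicate point is the probability-amplification step in the soundness case: we need the subroutine to hit a \emph{specific} optimum $F^{\mathrm{opt}}$ (not merely some arbitrary feasible improving cut) with inverse-polynomial probability per trial, which is exactly the guarantee furnished by Theorem \ref{thm:suc_prob}. Everything else is routine bookkeeping.
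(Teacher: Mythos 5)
Your proof is correct and takes essentially the same route as the paper: reduce pareto-optimality verification to $t$ instances of $b$-multiobjective min-cut (one per coordinate, with budgets set to $c_j(F)$), apply Theorem~\ref{thm:suc_prob} to a fixed optimum of the relevant instance, and amplify by repetition. Your explicit fixing of a particular optimum $F^{\mathrm{opt}}$ is a slightly more careful phrasing of what the paper leaves implicit (Theorem~\ref{thm:suc_prob} bounds the probability of returning a \emph{fixed} $b$-multiobjective min-cut, which suffices), but the argument is the same.
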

	
	\begin{proof}
	    The run-time of the algorithm is polynomial since our algorithm for \textsc{$b$-Multiobjective Min-Cut} (i.e., Algorithm \ref{fig:alg_b_multi}) is a polynomial-time algorithm. Algorithm \ref{fig:alg_verify_pareto_cut} returns false only if it finds a cut that dominates the input cut $F$. If $F$ is pareto-optimal, no such cut will exist, and therefore the algorithm will return true. 
		
		Next, suppose the input cut $F$ is not pareto-optimal. Then $F$ must be dominated by some other cut $F'$. Let $i \in [t]$ be such that $c_i(F') < c_i(F)$ (such an $i$ is guaranteed to exist by the definition of domination). Let $b$ be a budget-vector of the costs of $F$ under the cost functions other than $c_i$ and let $c'$ be $c$ with $c_i$ moved to the end of the vector of cost functions. Then $F$ will not be a $b$-multiobjective min-cut in $(G,c')$, since $F'$ also satisfies $b$, but has a lower $c_i$-cost. 
		
		Therefore, any $b$-multiobjective min-cut will dominate $F$ (since it will also satisfy $b$ and cannot have higher $c_i$ cost than $F'$). Thus, if any of our $r2^{rt}n^{2t}\log(n)$ calls to our algorithm for \textsc{$b$-Multiobjective-Min-Cut} (i.e., Algorithm \ref{fig:alg_b_multi}) for this value of $i$ returns a multiobjective min-cut, then the algorithm will return false. By Theorem \ref{thm:suc_prob}, our algorithm for \textsc{$b$-Multiobjective Min-Cut} returns a $b$-multiobjective min-cut with probability $\frac{1}{r2^{rt}}\Omega(\frac{1}{n^{2t}})$. Therefore, if we run this algorithm $r2^{rt}O(n^{2t}\log(n))$ times, a $b$-multiobjective min-cut will be returned at least once with high probability, and our algorithm will correctly return false. 
	\end{proof}
	
	\subsection{Lower Bounds}\label{sec:LowerBounds}
	
	In this section we discuss lower bounds on the number of distinct pareto-optimal cuts in $n$-vertex hypergraphs. Karger gave a family of graphs with $n^{t/2}$ parametric min-cuts \cite{Karger2016minimish}. 
	We recall that every parametric min-cut is a pareto-optimal cut by the containment relation (\ref{containment-relationships}). Thus, $n^{t/2}$ is also a lower bound on the number of pareto-optimal cuts in $n$-vertex hypergraphs. To the best of the authors' knowledge, this is the best lower bound on the number of pareto-optimal cuts that is known in the literature. We give an $\Omega(n^t)$ (for constant $t$) lower bound on the number of pareto-optimal cuts in a graph. Our lower bound construction is different from that of Karger. 
	
	\begin{theorem}\label{thm:pareto_lower_bound}
	For all positive integers $t$ and $n$ such that $n \geq t+2$, there exists an $n$-vertex graph $G$ with associated edge-cost functions $c_1, \dots, c_t : E(G) \rightarrow \mathbb{R}_+$ such that $G$ has at least $\left( \frac{n-2}{t} \right)^t$ distinct pareto-optimal cuts.
	\end{theorem}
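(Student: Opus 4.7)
The plan is to construct an explicit graph on $n$ vertices consisting of $t$ internally vertex-disjoint $s$-$t$ paths, with roughly $(n-2)/t$ internal vertices on each path, and to show that the cuts obtained by picking exactly one edge from each path are all pareto-optimal. Formally, distribute $n-2$ internal vertices among $t$ paths so that the $i$-th path from $s$ to $t$ has $\ell_i + 1$ edges, with $\ell_i \geq \lfloor (n-2)/t \rfloor$ and $\sum_i \ell_i = n-2$. Define the $i$-th edge-cost function $c_i$ to be $1$ on edges of path $i$ and $0$ elsewhere.

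For each tuple $\vec{j} = (j_1, \ldots, j_t)$ with $j_i \in \{1, \ldots, \ell_i+1\}$, let $F_{\vec{j}}$ be the cut $\delta(S_{\vec{j}})$, where $S_{\vec{j}}$ contains $s$ together with the first $j_i-1$ internal vertices of each path $i$. Then $F_{\vec{j}}$ consists of exactly one edge from each path (the $j_i$-th edge of path $i$), so distinct tuples $\vec{j}$ produce distinct edge sets. This gives $\prod_i (\ell_i+1) \geq ((n-2)/t)^t$ candidate cuts, each with cost vector $(1,1,\ldots,1)$.

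The main step I expect is verifying pareto-optimality of each $F_{\vec{j}}$, and it is mild. Suppose some cut $F' = \delta(S')$ dominates $F_{\vec{j}}$. Since each $c_i(F_{\vec{j}}) = 1$ and all costs are nonnegative integers, domination forces $c_{i^*}(F') = 0$ for some $i^*$. Then $F'$ uses no edge of path $i^*$, so every vertex of path $i^*$ (in particular both $s$ and $t$) lies on the same side of $S'$. For any other path $j$, if it has vertices on both sides of $S'$ then it must cross the partition an even number of times (since its endpoints $s,t$ are on the same side), yielding $c_j(F') \geq 2$, contradicting $c_j(F') \leq 1$. If no path crosses, then $S'$ is trivial since the graph is connected through these paths, contradicting that $F'$ is a valid cut. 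Hence no such $F'$ exists, and all $((n-2)/t)^t$ cuts $F_{\vec{j}}$ are pareto-optimal, completing the argument. The only subtlety is ensuring the graph is connected (so that the empty edge set is not a dominating cut), which is automatic once the internal vertices are distributed among the paths rather than placed as isolated vertices.
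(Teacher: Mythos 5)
Your proposal is correct and uses essentially the same construction as the paper: $t$ internally vertex-disjoint paths between two shared endpoints, with the cuts selecting one edge per path. The only difference is that the paper assigns a small nonzero cost $1/(t+1)$ to off-path edges (so cuts with one edge per path have cost vector $(2t/(t+1),\ldots,2t/(t+1))$ and any cut with two edges on some path is immediately too expensive), whereas you use cost $0$ off-path and a parity/connectivity argument to rule out dominators; both reach the same conclusion by the same underlying observation that a cut not of this form must use at least two edges of some path.
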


	\begin{proof}
	For fixed $n$ and $t$, construct a graph $G$ as follows. The graph $G$ has two special vertices $u$ and $v$. The rest of the vertices are used to form $t$ distinct paths between $u$ and $v$ with each path consisting of at least $\lfloor \frac{n-2}{t} \rfloor +1 > \frac{n-2}{t}$ distinct edges. 
	We assign edge costs as follows: If $e$ is an edge in the $i$'th path, then $c_i(e) = 1$, while $c_j(e) = 1/(t+1)$ for every $j\in [t]\setminus \set{i}$. See Figure \ref{fig:lower_bound} for an example.
	\renewcommand{\figurename}{Figure}
	\begin{figure}
	    \centering
	    \includegraphics[trim={0 2cm 4cm 2cm}, clip, scale=0.4]{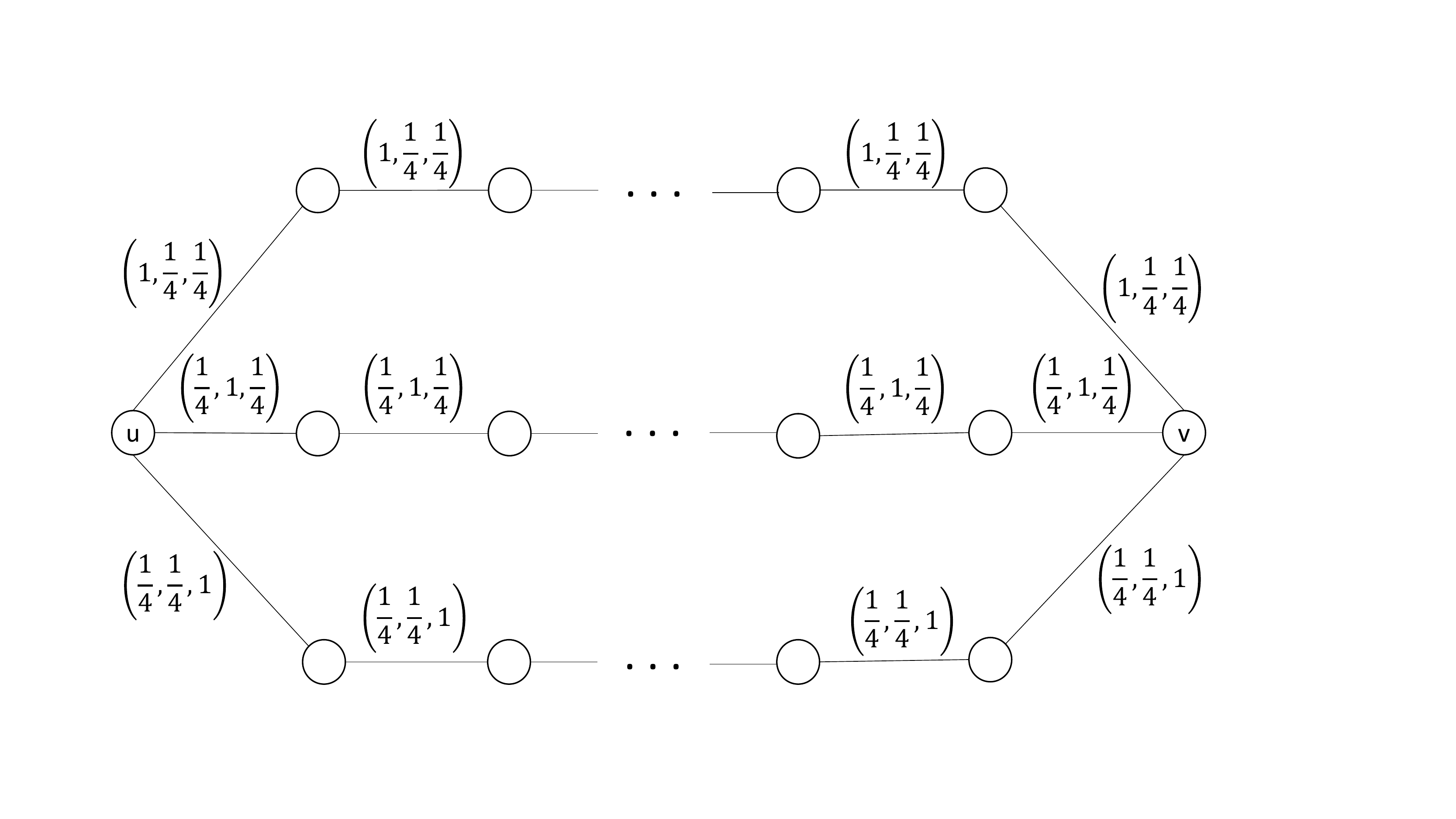}
	    \caption{An illustration of our lower bound construction for $t=3$.}
	    \label{fig:lower_bound}
	\end{figure}
	
    \renewcommand{\figurename}{Algorithm}
	
	We will show that any cut which contains exactly one edge from each path is pareto-optimal. The number of such cuts is at least $\left( \frac{n-2}{t} \right)^t$, since each path has at least $(n-2)/t$ edges, so this will suffice to prove the theorem.
	
	We observe that any cut contains either exactly one edge from each path or at least two edges from some path.
	Any cut $F$ which contains exactly one edge from each path will have $c_i(F)=2t/(t+1)$ for every $i \in [t]$. 
    Any cut $F'$ that contains at least two edges from some path $i\in [t]$ will have  $c_i(F')=2>2t/(t+1)$. Therefore no cut which contains two edges from the same path can dominate a cut which contains exactly one edge from each path. Furthermore, if two different cuts each contain exactly one edge from all paths, then they both have the same cost under every cost function, and thus neither can dominate the other. We conclude that every cut which contains exactly one edge from each path is pareto-optimal.
    
    \end{proof}
	
	\noindent \textbf{Remark 1.} The lower bound from Theorem \ref{thm:pareto_lower_bound} is still significantly smaller than the $O(n^{3t-1})$ upper bound from Theorem \ref{thm:num_multi}. We believe that this gap comes from the slack in the analysis of our randomized algorithms.\\
	
	\noindent \textbf{Remark 2.} We note that the construction in Theorem \ref{thm:pareto_lower_bound} also shows that there exists a budget-vector $b\in \R^{t-1}_+$ such that the number of $b$-multiobjective min-cuts is $\Omega(n^t)$: consider budget values $b_i=(2t)/(t+1)$ for every $i\in [t-1]$.  
	We emphasize that since not every multiobjective min-cut is pareto-optimal, this lower bound does not imply the one from Theorem \ref{thm:pareto_lower_bound}. Since distinct pareto-optimal cuts need not be $b$-multiobjective min-cuts for the same vector $b$, the bound in Theorem \ref{thm:pareto_lower_bound} does not immediately imply this bound either.

		\section{Node-Budgeted Multiobjective Min-Cuts}\label{sec:node_budgets}
	In this section, we give algorithms to find min-cuts that satisfy node-weighted budget constraints. Theorem \ref{thm:num_node_budgeted_multi} will be a consequence of these algorithms. 
	We begin by formally defining the problem.
	
	Let $G = (V,E)$ be a hypergraph with hyperedge-cost function $c : E \rightarrow \mathbb{R}_+$. Let $w_1, \dots, w_t : V \rightarrow \mathbb{R}_+$ be vertex-weight functions. Let $c(F) = \sum_{e \in F} c(e)$ for $F \subseteq E$, and $w_i(U) = \sum_{v \in U} w_i(v)$ for $U \subseteq V$. The following definition will be useful in defining node-budgeted multiobjective min-cuts.
	
	\begin{definition}\label{defn:feasible-vertices}
	For a budget-vector $b \in \mathbb{R}^t_+$,
	\begin{enumerate}
	    \item a vertex $v \in V$ is \emph{feasible} if $w_i(v) \leq b_i$ for all $i \in [t]$ and \emph{infeasible} otherwise and
	    \item a set of vertices $S \subseteq V$ is \emph{feasible} if $w_i(S) = \sum_{v \in S} w_i(v) \leq b_i$ for all $i \in [t]$ and \emph{infeasible} otherwise.
	\end{enumerate}
	\end{definition}
	
	We recall the definition of node-budgeted multiobjective min-cuts. 
    \begin{definition}
	    For a budget-vector $b \in \mathbb{R}_+^{t}$, a set $F \subseteq E$ is a \emph{node-budgeted} $b$-\emph{multiobjective min-cut} if $F = \delta(X)$ for some feasible set $\emptyset \subsetneq X \subsetneq V$, and $c(F)$ is minimum among all such subsets of $E$. A set $F \subseteq E$ is a \emph{node-budgeted multiobjective min-cut} if there exists a budget-vector $b\in \R^t_+$ for which $F$ is a node-budgeted $b$-multiobjective min-cut.
	\end{definition}
	
	The following will be a central problem of interest in this section.
	\begin{problem}{Node-Budgeted $b$-Multiobjective Min-Cut}
	Given: A hypergraph $G=(V,E)$ with vertex-weight functions $w_1, \dots, w_t \colon V \to \mathbb{R}_+$, a hyperedge-cost function $c \colon E\rightarrow \R_+$, and a budget-vector $b\in \R^t_+$.
	
	Goal: A node-budgeted $b$-multiobjective min-cut. 
	\end{problem}
	
	\subsection{Constant-Rank Hypergraphs}
	In this section, we design a polynomial-time algorithm to find node-budgeted $b$-multiobjective min-cuts in constant-rank hypergraphs and then prove the first part of Theorem \ref{thm:num_node_budgeted_multi}. We use Algorithm \ref{fig:alg_node_budget_b_multi_constant} to solve node-budgeted $b$-multiobjective min-cuts in constant-rank hypergraphs. It essentially runs the standard random contraction algorithm for min-cut with an additional step that deterministically contracts all infeasible vertices together. We summarize the guarantees of this algorithm in Theorem \ref{thm:node_suc_prob}. 
	We will subsequently use Theorem \ref{thm:node_suc_prob} to prove the first part of Theorem \ref{thm:num_node_budgeted_multi}.
	
	\begin{figure*}[ht] 
    \centering\small
    \begin{algorithm} 
    \textul{\textsc{Node-Budgeted-$b$-Multiobjective-Min-Cut-Constant-Rank}($G, r, t, w, c, b$):}\+
    \\  {\bf Input: } An $r$-rank hypergraph $G=(V,E)$, a positive integer $t$, \+\+ \\ a vector $w$ of vertex-weight functions with $w_i:V\rightarrow \R_+$ for $i \in [t]$, \\ a cost function $c: E \rightarrow \R_+$, and budget-vector $b \in \mathbb{R}_+^{t}$ \-\-
    \\ \rule{0pt}{3ex}Contract all infeasible vertices of $G$ into a single vertex
    \\ If $|V| \leq r+1$: \+
    \\ $X \gets$ a random subset of $V$
    \\ Return $\set{\delta(X)}$ \-
    \\ $e \gets$ a random hyperedge chosen according to $\Pr[e = e'] = \frac{c(e')}{c(E)}$ 
    \\ $(G,w) \gets (G,w) / e$
    \\ Return \textsc{Node-Budgeted-$b$-Multiobjective-Min-Cut-Constant-Rank}($G, r, t, w, c, b$)
    \end{algorithm}
    \caption{\textsc{Node-Budgeted $b$-Multiobjective Min-Cut} in constant-rank hypergraphs}
    \label{fig:alg_node_budget_b_multi_constant}
\end{figure*}
	
	\begin{theorem}\label{thm:node_suc_prob}
	Let $G=(V,E)$ be an $r$-rank $n$-vertex hypergraph with hyperedge-cost function $c:E\rightarrow \R_+$ and vertex-weight functions $w_1, \ldots, w_t: V\rightarrow \R_+$ and a budget-vector $b\in \R^{t}_+$. Let $F$ be an arbitrary node-budgeted $b$-multiobjective min-cut in $G$. Then Algorithm \ref{fig:alg_node_budget_b_multi_constant} 
	 returns $F$ with probability at least $\frac{1}{2^{r+1} {n \choose 2}}$. Moreover, the algorithm can be implemented to run in polynomial time.
	\end{theorem}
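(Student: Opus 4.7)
The plan is to prove the bound by induction on the number of vertices $n$, adapting Karger's random-contraction analysis (refined \`a la Kogan--Krauthgamer for bounded-rank hypergraphs) to accommodate the ``contract infeasibles'' preprocessing step. The central structural observation is that $F$ always survives the infeasible-vertex contraction: writing $F = \delta(U)$ with $U$ feasible, each $v \in U$ is itself feasible (being a singleton subset of $U$), so every infeasible vertex must lie in $\bar U$. Contracting all such vertices into one super-vertex therefore merges them entirely on the $\bar U$ side, and $F$ remains a node-budgeted $b$-multiobjective min-cut in the contracted hypergraph.

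For the base case $n \leq r+1$, after preprocessing the hypergraph has $m \leq r+1$ vertices and the algorithm returns $\delta(X)$ for a uniform $X \subseteq V$; since both $X = U$ and $X = \bar U$ succeed, the probability is at least $2/2^m \geq 1/2^r$, which exceeds $1/(2^{r+1}\binom{n}{2})$ whenever $n \geq 2$. For the inductive step, let $m \leq n$ be the post-preprocessing count. If $m \leq r+1$ the base case applies. Otherwise the algorithm performs a single weighted random hyperedge contraction; letting $E_j$ denote the size-$j$ hyperedges and setting $x_j := c(E_j)/c(E)$, $y_j := c(E_j \cap F)/c(E)$, we have $0 \leq y_j \leq x_j$ and $\sum_j x_j = 1$. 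The key estimate is that for every feasible vertex $v$ currently present, $\{v\}$ is a feasible set, so $\delta(v)$ is a feasible cut and $c(\delta(v)) \geq c(F)$. Since at most one remaining vertex is infeasible, at least $m-1$ vertices contribute, yielding $(m-1)c(F) \leq \sum_v c(\delta(v)) = \sum_j j\, c(E_j)$, i.e., $(m-1)\sum_j y_j \leq \sum_j j \cdot x_j$. Conditioning on not contracting an $F$-edge and applying Lemma~\ref{lem:lp_sol} with $\gamma = m-1 \geq r+1$ to the target lower bound $f(k) := 1/(2^{r+1}\binom{k}{2})$, the success probability is at least $\min_{2 \leq j \leq r}\bigl(1 - \tfrac{j}{m-1-r+j}\bigr)f(n-j+1)$.

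The main technical obstacle is the arithmetic verification that this minimum dominates $f(n)$ so that the induction closes. I anticipate two regimes. When $m < n$ (preprocessing actually removed some vertex), one can bypass the LP computation altogether: since $f$ is decreasing, the induction hypothesis on the resulting $m$-vertex hypergraph directly delivers success probability at least $f(m) \geq f(n)$. When $m = n$ (no initial infeasibles were present), the closing inequality reduces to a polynomial inequality in $n$, $r$, $j$, which I would verify by splitting on $j \in \{2,\ldots,r\}$ in the spirit of the final claim in the proof of Theorem~\ref{thm:suc_prob}, leveraging the factor-of-two slack built into the $2^{r+1}$ denominator of $f$. Finally, polynomial runtime is immediate: each recursive call performs polynomially many feasibility tests plus one weighted hyperedge draw, and the vertex count strictly decreases per call, so the recursion depth is at most $n$.
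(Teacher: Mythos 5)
Your proposal matches the paper's proof structurally: the same observation that $F$ survives the infeasible-vertex contraction (hence we may assume at most one infeasible vertex remains), the same constraint $(n-1)\sum_j y_j \le \sum_j j\cdot x_j$ coming from the $n-1$ feasible singletons, and the same application of Lemma~\ref{lem:lp_sol} with $\gamma=n-1$. Your slightly sharper base-case accounting (noting that both $U$ and its complement succeed) is a minor refinement over what the paper writes.

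However, the closing arithmetic you defer as the ``main technical obstacle'' is genuinely not routine, and the fix you anticipate --- ``leveraging the factor-of-two slack built into the $2^{r+1}$ denominator'' --- cannot help: $2^{r+1}$ appears identically in $f(n-j+1)$ and $f(n)$ and cancels in the ratio $f(n-j+1)/f(n)$, so it plays no role in the inductive step (it is only there for the base case). With $f(k)=1/(2^{r+1}\binom{k}{2})$, the induction demands $f(n-j+1)/f(n)\ge 1+j/(n-1-r)$ for every $j\in\{2,\dots,r\}$, and this already fails at $j=2$ once $r\ge 2$: the left side equals $\binom{n}{2}/\binom{n-1}{2}=n/(n-2)=1+2/(n-2)$, while the right side is $1+2/(n-1-r)$, and $n-2>n-1-r$. (Concretely, at $n=10$, $r=3$ one gets $5/4<4/3$.) The paper's own write-up of this step contains the same reversal: the chain link $1+\tfrac{j}{n-j+1}\ge 1+\tfrac{j}{n-1-r}$ is asserted to follow from $2\le j\le r$, but for those $j$ it points the wrong way. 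The theorem statement is nevertheless recoverable; one way to repair the induction is to work with a shifted potential $f(k)\propto\binom{k-r+1}{2}^{-1}$ (compare the shift $\binom{n-t(r-2)}{2t}$ inside $Q_n$ of Theorem~\ref{thm:suc_prob}), which turns the $j=2$ case into an equality, still dominates $1/(2^{r+1}\binom{n}{2})$, and needs only a little extra care for those $j$ that push $n-j+1$ into the base-case range.
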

	
	\begin{proof}
	    We first analyze the run-time. Each recursive call reduces the number of vertices, so the total number of recursive calls is at most $n$. Apart from the recursion, the algorithm only performs contractions and returns a random cut, all of which can be done in polynomial time.
	
		Now we analyze the success probability. Let $Q_n := \frac{1}{2^{r+1}}{n \choose 2}^{-1}$. We will show that the algorithm returns $F$ with probability at least $Q_n$. We prove this by induction on $n$. Let $\emptyset \subsetneq X \subsetneq V$ be a feasible set with $\delta(X) = F$. We first note that all vertices in $X$ must be feasible. Therefore, the cut $X$ cannot be destroyed when all infeasible vertices are contracted together. This means that if $G$ has multiple infeasible vertices, they will simply be contracted to yield a smaller hypergraph with at most one infeasible vertex where $F$ is still a node-budgeted $b$-multiobjective min-cut. 
		Therefore, we will assume without loss of generality that $G$ contains at most one infeasible vertex.
		
		For the base case, we consider $n \le r+1$. In this case, the algorithm returns $\delta(X)$ for a random $X \subseteq V$. There are $2^n$ possible choices for $X$, and $F$ for at least one of them. Thus, the probability that the algorithm returns $F$ is at least  $\frac{1}{2^n} \geq \frac{1}{2^{r+1}} \geq Q_n$.
		
		We now prove the induction step. Let $n>r+1$ and assume that the theorem holds for all hypergraphs with at most $n-1$ vertices and rank at most $r$. We begin by showing the following claim.
		
		\begin{claim}\label{claim:node_alg_lp}
	The algorithm outputs $F$ with probability at least the optimum value of the following linear program.
	\begin{alignat*}{3} \label{nodeLP}
	    & \underset{x_2, \dots, x_r, y_2, \dots, y_r}{\text{minimize}} \quad
		 && \sum_{j=2}^r (x_j-y_j)Q_{n-j+1}  \tag{$LP_3$} \\
		& \text{subject to} \quad
		 && 0 \leq y_j \leq x_j \ \forall j\in \set{2, \ldots, r} \\
		& \quad && \sum_{j=2}^r x_j = 1 \\
		& \quad && (n-1)\sum_{j=2}^r y_j \leq \sum_{j=2}^r j \cdot x_j 
	\end{alignat*}
	
	\end{claim}
		
		\begin{proof}
		Since $n > r+1$, the algorithm will contract a randomly chosen hyperedge and recurse. Let $e'$ be the random hyperedge chosen by the algorithm. If $e' \not\in F$, then $F$ will still be a node-budgeted $b$-multiobjective min-cut in $G/e'$. We observe that $G/e'$ is a hypergraph with $n-|e'|+1$ vertices and that the rank of $G/e
		$ is at most the rank of $G$. Therefore, if $e' \not\in F$, the algorithm will output $F$ with probability at least $Q_{n-|e'|+1}$
		
		Let 
		\begin{align*}
		    E_j :=& \{e \in E \colon |e| = j \}, \\ 
		    x_j :=& \Pr[e' \in E_j] = \frac{c(E_j)}{c(E)}, \text{ and} \\ y_j :=& \Pr[e' \in E_j \cap F] = \frac{c(E_j \cap F)}{c(E)}. 
		\end{align*}
		
		We note that $E_{j}$ is the set of hyperedges of size $j$, $x_j$ is the probability of picking a hyperedge of size $j$ to contract, and $y_j$ is the probability of picking a hyperedge of size $j$ from $F$ to contract. We know that
		
		\begin{equation} \label{eqn:node-lower-bound} \Pr[\text{Algorithm} \text{ returns the cut } F] \geq \sum_{j=2}^r (x_j-y_j)Q_{n-j+1} \end{equation}
		
		The values of $x_j$ and $y_j$ will depend on the structure of $G$.  Nevertheless, we can deduce some relationships between them. Since $0 \leq c(E_j \cap F) \leq c(E_j)$ for every $j \in \{2, \dots, r\}$, we know that 
		\begin{align}\label{ineq:node-y-atmost-x}
		    0 \leq y_j \leq x_j \text{ for every } j\in \{2, \dots, r\}.
		\end{align}
		Moreover, $x_j$ is the probability of picking a hyperedge of size $j$. Hence, 
		\begin{align}\label{ineq:node-sum-x}
		    \sum_{j=2}^r x_j = 1.
		\end{align}
		We also know that $c(F) \leq c(\delta(v))$ for every feasible vertex $v$. Since we have assumed that $G$ has at most $1$ infeasible vertex, it has at least $n-1$ feasible ones, and thus,
		
		\[c(F) \leq  \frac{\sum_{v \, : v \text{ is feasible}} c(\delta(v))}{|\{v \colon v \text{ is feasible} \}|} \leq \frac{\sum_{v \in V} c(\delta(v))}{n-1} = \frac{\sum_{e \in E} |e|c(e)}{n-1} = \frac{\sum_{j=2}^r j \cdot  c(E_j)}{n-1}. \]
		Thus we have that,
		\begin{equation} \label{eqn:node-sum-yj}
		\sum_{j=2}^r y_j = \Pr[e' \in F] = \frac{c(F)}{c(E)} \leq \frac{1}{n-1}\sum_{j=2}^r j \cdot x_j .\end{equation}
		The minimum value of our lower bound in equation (\ref{eqn:node-lower-bound}) over all choices of $x_j$ and $y_j$ that satisfy inequalities (\ref{ineq:node-y-atmost-x}), (\ref{ineq:node-sum-x}), and (\ref{eqn:node-sum-yj}) is a lower bound on the probability that the algorithm outputs $F$.
		\end{proof}
		
		Claim \ref{claim:node_alg_lp} tells us that the algorithm outputs $F$ with probability at least the optimum value of the linear program (\ref{nodeLP}) from the claim. This linear program is exactly the linear program (\ref{LP-lemma}) from Lemma \ref{lem:lp_sol} with $\gamma = n-1$ and $f(n):=Q_n$. Since $n > r+1$, we have that $n \geq n-1 \geq r+1$. Therefore, we can apply Lemma \ref{lem:lp_sol} to conclude that 
		
		\[ \Pr[\text{Algorithm} \text{ returns the cut } F] \geq \min_{j\in \{2,\ldots, r\}}\left(\left(1-\frac{j}{n-1-r+j}\right)Q_{n-j+1}\right).  \]
		
		It remains to show that $\min_{j \in \{2, \dots, r\}} ((1-\frac{j}{n-1-r+j})Q_{n-j+1}) \geq Q_n$. Let $j \in \{2, \dots, r\}$. Then it suffices to show that $\frac{Q_{n-j+1}}{Q_n} \geq \frac{n-1-r+j}{n-1-r} = 1 + \frac{j}{n-1-r}$.
		We have
		\[
		\frac{Q_{n-j+1}}{Q_n} = \frac{{n \choose 2}}{{n-j+1 \choose 2}} = \frac{n(n-1)}{(n-j+1)(n-j)} \geq \left(\frac{n}{n-j+1}\right)^2 = \left(1 + \frac{j-1}{n-j+1}\right)^2,
		\]
		and
		\[
		\left(1 + \frac{j-1}{n-j+1}\right)^2 \geq 1 + \frac{2(j-1)}{n-j+1} \geq 1 + \frac{j}{n-j+1} \geq 1 + \frac{j}{n-1-r}.
		\]
		The last two inequalities follow from the fact that $2 \leq j \leq r$. 
	\end{proof}
	
We now restate and prove the first part of Theorem \ref{thm:num_node_budgeted_multi}. At a high level, our approach will be similar to the proof of Theorem \ref{thm:num_multi}. We will modify the algorithm for \textsc{Node-Budgeted $b$-Multiobjective Min-Cut} to obtain Algorithm \ref{fig:alg_node_budget_multi_constant} which outputs a collection $\mathcal{C}$ of $r \cdot O(n^t)$-many cuts such that every node-budgeted multiobjective min-cut is in $\mathcal{C}$ with probability $\frac{1}{2^r} \cdot \Omega(\frac{1}{n^2})$.
The analysis has a few subtleties that distinguish it from the edge-budgeted version, so we include the full details.

\begin{figure*}[ht] 
    \centering\small
    \begin{algorithm} 
    \textul{\textsc{Node-Budgeted-Multiobjective-Min-Cut-Constant-Rank}($G, r, t, w, c)$):}\+
    \\  {\bf Input: } An $r$-rank hypergraph $G=(V,E)$, a positive integer $t$, \+\+ \\ a vector $w$ of vertex-weight functions with $w_i:V\rightarrow \R_+$ for $i \in [t]$, \\ and a cost function $c: E \rightarrow \R_+$ \-\-
    \\ \rule{0pt}{3ex}
    $\pi \gets$ a permutation of $E$ generated by repeatedly choosing a not yet \+ 
    \\ chosen hyperedge with probability proportional to $c(e)$ \-
    \\ $R \gets \emptyset$
    \\ $T \gets \emptyset$
    \\ For $n' = 2, \dots, n$: \+
    \\ $G' \gets G$
    \\ Contract hyperedges from $G'$ in the order given by $\pi$ until $G'$ has at most $n'$ vertices
    \\ For each $x_1, \dots, x_{t}$ such that $x_i = c_i(v)$ for some $v \in G'$ for all $i\in [t]$: \+
    \\ $G'' \gets G'$
    \\ Contract together all vertices $v$ in $G''$ which have $c_i(v) > x_i$ for some $i$
    \\ If $r+2 > |V(G'')| > 1 \text{ and } V(G'') \not\in T$: \+
    \\ $S \gets $ a random subset of $V(G')$ with $\emptyset \subset S \subset V(G')$
    \\ Add $V(G'')$ to $T$
    \\ Add $\delta(S)$ to $R$ if it is not already present \- \- \-
    \\ Return $R$ \-
    \end{algorithm}
    \caption{\textsc{Node-Budgeted Multiobjective Min-Cut} in constant-rank hypergraphs}
    \label{fig:alg_node_budget_multi_constant}
\end{figure*}
	
\begin{theorem}
The number of node-budgeted multiobjective min-cuts in an $r$-rank, $n$-vertex hypergraph with $t$ vertex-weight functions is $O(r2^r n^{t+1})$. 
\end{theorem}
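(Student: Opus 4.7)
The plan is to parallel the proof of Theorem \ref{thm:num_multi}: show that Algorithm \ref{fig:alg_node_budget_multi_constant}'s output collection $R$ satisfies $\Pr[F \in R] = \Omega(1/(2^r n^2))$ for every node-budgeted multiobjective min-cut $F$, bound $|R|$ polynomially in $n$, and combine the two via linearity of expectation.

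For the success probability, I would fix a budget-vector $b$ witnessing $F$ as a node-budgeted $b$-multiobjective min-cut and argue that Algorithm \ref{fig:alg_node_budget_multi_constant} simulates Algorithm \ref{fig:alg_node_budget_b_multi_constant} on input $(G, r, t, w, c, b)$ for an appropriate $n'$ and threshold tuple. As in the proof of Theorem \ref{thm:pareto_alg}, sampling $\pi$ upfront with probability proportional to $c(\cdot)$ is distributionally equivalent to iteratively sampling each next hyperedge to contract. The one subtlety is that the modified algorithm defers infeasibility contractions to the end, whereas Algorithm \ref{fig:alg_node_budget_b_multi_constant} interleaves them with hyperedge contractions; this deferral is harmless because $F = \delta(X)$ with $X$ feasible forces every infeasible super-vertex (initially or formed along the contractions) to lie entirely within $V \setminus X$, so infeasibility contractions never cross $F$ and merely merge blocks on one side of the cut. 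With $n'$ matching the stopping moment of Algorithm \ref{fig:alg_node_budget_b_multi_constant} and threshold tuple $(x_1, \ldots, x_t) = (b_1, \ldots, b_t)$, the two algorithms produce the same $V(G'')$, and Theorem \ref{thm:node_suc_prob} gives $\Pr[F \in R] \ge 1/(2^{r+1}\binom{n}{2})$.

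For the counting step, observe that for each $n' \in \{2, \ldots, n\}$ and each threshold tuple, the feasible block-set $A := \{B \in V(G') : w_i(B) \le x_i \text{ for all } i \in [t]\}$ is a downward-closed set in the product order of block-weight vectors, and $V(G'')$ consists of $A$ together with a single super-vertex merging $V(G') \setminus A$. The constraint $|V(G'')| \le r + 1$ forces $|A| \le r$. A naive count of threshold tuples yields $(n')^t$ per $n'$, giving $|R| \le O(n^{t+1})$, which combined with the success probability produces only the weaker bound $O(2^r n^{t+3})$.

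The main obstacle is extracting the sharper $O(r 2^r n^{t+1})$ bound. This requires using the $T$-deduplication together with two structural observations: first, the partitions $V(G')$ along $n' = n, n-1, \ldots, 2$ form a refinement chain, so as $n'$ decreases blocks only merge; second, for a fixed threshold tuple the sequence of feasible sets $A(n')$ along this chain forms a coarsening chain whose length, as a sequence of distinct partitions, is at most $r$ (since $|A(n')| \le r$ throughout), limiting how many distinct $V(G'')$'s can be realized by varying $n'$ alone. Combining these with a careful accounting of realizable $V(G'')$'s across $n'$ and threshold tuples should reduce the distinct $V(G'')$ count to $O(r n^{t-1})$, yielding the target bound. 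This structural counting argument — exploiting the downward-closed structure of $A$ in the product order and the coarsening behavior of the partition chain generated by $\pi$ — is the most delicate step of the proof.
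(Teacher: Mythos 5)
The approach you outline—simulate Algorithm~\ref{fig:alg_node_budget_b_multi_constant} by drawing $\pi$ upfront, observe that deferring infeasibility contractions is harmless, and then count the distinct final partitions $V(G'')$—is exactly the strategy the paper uses, and your argument is sound up through the claim that $|R| = O(rn^t)$. However, there are two problems with your final step.

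First, a small but important point: the exponent in the theorem statement you are working from is a typographical error. The restated version (Theorem~\ref{thm:num_node_budgeted_multi}) and the abstract both give $O(r2^r n^{t+2})$, and this is exactly what the paper's own proof establishes: it bounds $|R| \leq rn^t$ and combines this with the success probability $\Omega(2^{-r}n^{-2})$ from Theorem~\ref{thm:node_suc_prob} to get $N = O(r2^r n^{t+2})$. Your two structural observations—the refinement chain in $n'$, and that for a fixed threshold tuple the number of distinct $V(G'')$ with $|V(G'')| \in \{2,\dots,r+1\}$ along that chain is at most $r$—are correct and give the same $rn^t$ count as the paper's argument (which instead fixes $n'$ and exploits monotonicity in $x_t$). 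Either way you land at $n^{t+2}$.

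Second, your proposed further reduction of the $V(G'')$ count from $O(rn^t)$ to $O(rn^{t-1})$ does not work. You would need, for fixed $(x_1,\dots,x_{t-1})$, only $O(r)$ distinct $V(G'')$ across all pairs $(n',x_t)$; but the two monotonicities (in $n'$ and in $x_t$) live on different axes and the feasible-count ``level sets'' in the $(n',x_t)$ grid need not collapse to a single $V(G'')$ per level. Concretely, take $t=1$, weights $w(v_i)=i$, and a permutation $\pi$ that first contracts $\{v_1,v_2\}$, then $\{v_3,v_4\}$, and so on. At $n'=n$ with $x_1=\ell$ (for $\ell=1,\dots,r$) you get $V(G'')=\{\{v_1\},\dots,\{v_\ell\},V\setminus\{v_1,\dots,v_\ell\}\}$. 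At $n'=n-1$ with $x_1=3,4,\dots,r+1$ you get $V(G'')=\{\{v_1,v_2\},\{v_3\},\dots,\{v_j\}, V\setminus\{v_1,\dots,v_j\}\}$ for $j=3,\dots,r+1$. These are all distinct from the $n'=n$ partitions and from each other, and continuing to decrease $n'$ keeps producing fresh families of size $\Theta(r)$. So the number of distinct $V(G'')$ is $\Omega(rn)$ even for $t=1$, i.e.\ $\Omega(rn^t)$, not $O(rn^{t-1})$. The ``careful accounting'' you invoke cannot recover a factor of $n$ here; $O(rn^t)$ is the right bound for $|R|$, and the theorem should read $O(r2^r n^{t+2})$.
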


\begin{proof}
		Let $G$ be an $r$-rank $n$-vertex hypergraph with vertex-weight functions $w_1, \ldots, w_t:V\rightarrow \R_+$. We will denote $w=(w_1, \ldots, w_t)$ and the hypergraph with the vertex-weight functions by the tuple $(G,w)$ for conciseness. 
		We first show that for any cut $F \subseteq E(G)$ which is a node-budgeted $b$-multiobjective min-cut in $(G,w)$ for some budget-vector $b\in \R^t_+$, the cut $F$ is among the cuts returned by Algorithm \ref{fig:alg_node_budget_multi_constant} with probability $\Omega(2^{-r}n^{-2})$.
		
		We will view Algorithm \ref{fig:alg_node_budget_b_multi_constant} from a different perspective. That algorithm alternates between contracting together infeasible vertices and contracting random hyperedges until the hypergraph has at most $r+1$ vertices. We note that the probability that a given hyperedge $e$ is the next one contracted depends only on the cost of $e$ relative to the other hyperedges. In particular it does not depend on which infeasible vertices have been contracted together. Therefore, we could modify Algorithm \ref{fig:alg_node_budget_b_multi_constant} so that it contracts random hyperedges until the hypergraph resulting from contracting all infeasible vertices together has at most $r+1$ vertices, at which point, it contracts all infeasible vertices together and returns a random cut in the resulting hypergraph. This modified version of the algorithm would retain the same success probability as the original version. In this modified algorithm, the next contraction does not depend at all on the previous contractions, so we can choose a uniform random permutation of the hyperedges at the start of the algorithm and simply contract hyperedges from that permutation until we can contract all infeasible vertices to obtain a hypergraph containing at most $r+1$ vertices.
		
		Let $U$ be the set of all feasible vertices in $G$, and for each $i \in [t]$, let $x_i = \max_{u \in U} w_i(u)$. Since all vertices in $U$ are feasible, we know that for every $u \in U$, we have $w_i(u) \leq x_i \leq b_i$ for every $i \in [t]$. Now consider an infeasible vertex $v$ in $G$. Since $v$ violates the budget-vector $b$, there must be some $i \in [t]$ such that $w_i(v) > b_i \geq x_i$. Therefore, if we wish to contract together all infeasible vertices in $G$, it suffices to find, for each $i \in [t]$, the feasible vertex $u_i$ with maximum weight under $w_i$, and then contract together all vertices whose $w_i$ weight is greater than that of $u_i$ for some $i \in [t]$. In particular, we can further modify our modified version of Algorithm \ref{fig:alg_node_budget_b_multi_constant} to use this method of contracting all infeasible vertices, and the success probability will still remain $\Omega(2^{-r}n^{-2})$. 
		
		Algorithm \ref{fig:alg_node_budget_multi_constant} is running exactly the version of the algorithm 
		that we have just described, with two additional modifications: (1) Instead of contracting hyperedges from $\pi$ until the contraction of infeasible vertices would yield a hypergraph with at most $r+1$ vertices, it simply tries all possible stopping points for the contraction of hyperedges from $\pi$, and (2) instead of choosing the values $x_1, \ldots, x_t$ based on a budget-vector $b$, it simply tries all possible values for $x_1, \ldots, x_t$. This means that, for any budget-vector $b$, Algorithm \ref{fig:alg_node_budget_multi_constant} will try the values of $n'$ and $x_1, \dots, x_t$ that the modified version of Algorithm \ref{fig:alg_node_budget_b_multi_constant} would use.
		
		Therefore, by Theorem \ref{thm:node_suc_prob}, we know that for any fixed budget-vector $b$ and every fixed node-budgeted $b$-multiobjective min-cut $F$, the probability that $F$ is among the cuts output by the algorithm is $\Omega(2^{-r}n^{-2})$.
		
		Now we bound the number of cuts returned by the algorithm. We note that the algorithm only adds a new cut to $R$ if the set of vertices that the algorithm ends up with after performing all contractions has size between $2$ and $r+1$ and is different from every set the algorithm has already obtained from previous combinations of parameters. We will show that, for a fixed $G$ and $\pi$, the number of distinct sets of  size between $2$ and $r+1$ that we can obtain by contracting vertices in the way specified by the algorithm is at most $rn^{t}$.
		
		There are at most $n$ ways to choose the value of $n'$, and also at most $n$ choices for the values of $x_1, \dots, x_{t-1}$. For fixed values of $n'$ and $x_1, \dots, x_{t-1}$, the choice of $x_t$ determines the final set of vertices after contraction. Decreasing $x_t$ can cause more vertices to become contracted (because some new vertex $v$ may now have $w_t(v) > x_t$), but it cannot cause any vertex that was previously being contracted to no longer be contracted. Thus, there are most $r$ distinct sets of vertices of size between $2$ and $r+1$ that we could obtain by varying the value of $x_t$. Therefore the total number of distinct sets of  size between $2$ and $r+1$ that could result from contracting vertices in the way described in the algorithm is at most $rn^t$.
		
		To finish the proof, let $N$ be the number of node-budgeted multiobjective min-cuts in $G$. We have shown that our algorithm outputs $\Omega(\frac{N}{2^rn^2})$ of these cuts in expectation. But since our algorithm outputs at most $rn^t$ cuts, we conclude that the number $N$ of multiobjective min-cuts must be $O(r2^rn^{t+2})$.
		
	\end{proof}

    \subsection{Arbitrary-Rank Hypergraphs}
    In this section, we present a polynomial-time algorithm for node-budgeted $b$-multiobjective min-cut in arbitrary-rank hypergraphs. 
    The second part of Theorem \ref{thm:num_node_budgeted_multi} will follow from the correctness analysis of this algorithm.

    We recall that global min-cut (without node-budgets) in arbitrary-rank hypergraphs already requires the \emph{non-uniform} random contraction technique. We extend the non-uniform contraction technique of \cite{chandrasekararen2018hypergraph} for the node-budgeted variant.
    In addition, our algorithm will use the non-uniform contraction algorithm for global min-cut by \cite{chandrasekararen2018hypergraph} as a subroutine. We reproduce their algorithm for completeness in Algorithm \ref{fig:alg_hyper_min_cut} and state its guarantee in Theorem \ref{thm:arb_rank_min_cut_edge_weights}.
	
\begin{figure*}[ht] 
    \centering\small
    \begin{algorithm} 
    \textul{\textsc{Hypergraph-Min-Cut}($G, c$):}\+
    \\  {\bf Input: } A hypergraph $G=(V,E)$, and a cost function $c: E \rightarrow \R_+$
    \\ \rule{0pt}{3ex}Compute $\beta_e := \frac{|V|-|e|}{|V|} \cdot c(e)$ for every hyperedge $e \in E$
    \\ If $\beta_e = 0$ for every hyperedge $e \in E(G)$, then return $E(G)$
    \\ $e \gets $ a random hyperedge of $G$ chosen with probability proportional to $\beta_e$
    \\ Return \textsc{Hypergraph-Min-Cut}($G/e,c$) \-
    \end{algorithm}
    \caption{\textsc{Hypergraph Min-Cut}}
    \label{fig:alg_hyper_min_cut}
\end{figure*}
	
	\begin{theorem}\label{thm:arb_rank_min_cut_edge_weights} \cite{chandrasekararen2018hypergraph}
	Algorithm \ref{fig:alg_hyper_min_cut} runs in polynomial time and returns any fixed min-cut of an $n$-vertex hypergraph $G$ with hyperedge-cost function $c$ with probability at least $\binom{n}{2}^{-1}$.
	\end{theorem}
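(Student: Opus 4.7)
The plan is to induct on $n := |V|$ to show that any fixed min-cut $F$ is returned by Algorithm \ref{fig:alg_hyper_min_cut} with probability at least $\binom{n}{2}^{-1}$. Polynomiality of the run-time follows quickly: each recursive call strictly reduces the vertex count (the chosen hyperedge has size at least $2$, so contracting it merges at least two vertices), so the recursion depth is $O(n)$; within each call, computing the $\beta$-values, sampling a hyperedge, and performing the contraction all take polynomial time.

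For the base case, the recursion terminates when $\beta_e = 0$ for every hyperedge. Assuming without loss of generality that $c(e) > 0$ for every $e$ (zero-cost hyperedges can be discarded without affecting correctness), this condition is equivalent to $|e| = n$ for every hyperedge, in which case every hyperedge crosses every non-trivial cut. Hence $F = E(G)$ and the algorithm succeeds with probability one.

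For the inductive step, suppose $n > 2$ and the claim holds for all smaller hypergraphs. Let $e$ denote the hyperedge randomly chosen for contraction. If $e \notin F$, then $F$ remains a cut of $G/e$ and is still a min-cut of $G/e$ (a cheaper cut of $G/e$ would lift to a cheaper cut of $G$). Since $G/e$ has $n - |e| + 1$ vertices, the inductive hypothesis applied to the recursive call gives success probability at least $\binom{n - |e| + 1}{2}^{-1}$. Writing $P(n)$ for the success probability on an $n$-vertex input, this yields
\[
P(n) \;\geq\; \frac{1}{\sum_{e'} \beta_{e'}} \sum_{e \notin F} \beta_e \binom{n - |e| + 1}{2}^{-1}.
\]
After substituting $\beta_e = (n - |e|)c(e)/n$ and clearing denominators, the target bound $P(n) \geq \binom{n}{2}^{-1}$ rearranges to the key inequality
\[
\sum_{e \notin F} c(e) \cdot \frac{(2n - |e|)(|e| - 1)}{n - |e| + 1} \;\geq\; \sum_{e \in F}(n - |e|)c(e).
\]

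The main obstacle, and where the hypergraph-specific choice of $\beta_e$ pays off, is establishing this inequality. The plan has two steps. First, I verify the algebraic fact that $(2n - k)(k - 1)/(n - k + 1) \geq k$ for every integer $k$ with $2 \leq k \leq n$; cross-multiplying reduces this to $n(k - 2) \geq 0$, which holds trivially. This lets one lower-bound the coefficient on the left-hand side by $|e|$, reducing the task to showing $\sum_{e \notin F}|e|c(e) \geq \sum_{e \in F}(n - |e|)c(e)$. Second, I invoke the min-cut property $c(\delta(v)) \geq c(F)$: summing over all $n$ vertices and using the identity $\sum_v c(\delta(v)) = \sum_e |e|c(e)$ gives $\sum_e |e|c(e) \geq n \cdot c(F)$, and subtracting $\sum_{e \in F}|e|c(e) = \sum_{e \in F}|e|c(e)$ from both sides and rewriting $n \cdot c(F) - \sum_{e \in F}|e|c(e) = \sum_{e \in F}(n-|e|)c(e)$ yields exactly the reduced inequality. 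Together these two steps close the induction.
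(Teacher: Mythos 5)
The paper imports Theorem~\ref{thm:arb_rank_min_cut_edge_weights} from \cite{chandrasekararen2018hypergraph} as a black box and gives no proof of its own, so there is no in-paper argument to compare against; the relevant comparison is to the analogous proof the paper \emph{does} give (Theorem~\ref{thm:nwmomc_arb_rank_suc_prob}, via Claims~\ref{claim:alphaQ_bound_node_budget_gen} and~\ref{claim:alphas_bound}). Your reconstruction is correct: the induction is sound, the algebra (clearing denominators to reach $\sum_{e \notin F} c(e)\frac{(2n-|e|)(|e|-1)}{n-|e|+1} \geq \sum_{e \in F}(n-|e|)c(e)$, then lowering the coefficient to $|e|$, then invoking $\sum_e |e|c(e) = \sum_v c(\delta(v)) \geq n\,c(F)$) is verified, and the base case is handled properly. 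The one place you differ from the paper's standard template is organizational: the paper's pattern splits the inductive step into a per-hyperedge inequality $\beta_e Q_{n-|e|+1} \geq c(e)\,Q_n$ (which here collapses to the trivial $|e| \geq 2$) and a separate sum inequality $c(E\setminus F) \geq \sum_e \beta_e$ (which is exactly the averaging argument $\frac1n\sum_v c(\delta(v)) \geq c(F)$), then multiplies. That decomposition is a bit cleaner than rearranging everything into one rational-coefficient inequality, but both routes are correct and both bottom out at the same averaging step over single-vertex cuts, so the mathematical content is identical. One small caveat worth flagging: your ``WLOG $c(e)>0$'' is needed so that $\beta_e = 0$ really does mean $e$ is spanning in the base case; if zero-cost non-spanning hyperedges are present, the terminal ``return $E(G)$'' can fail to coincide with a particular min-cut $F$, and the original statement implicitly relies on this convention.
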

	
	Now we describe our algorithm to solve node-budgeted $b$-multiobjective min-cut in arbitrary-rank hypergraphs. We recall that a vertex $v$ is feasible if $w_i(v)\le b_i$ for all $i\in [t]$. 
	Let $U$ be the set of all feasible vertices in $G$. We emphasize that $U$ is the set of all feasible vertices, but $U$ may not be a feasible set---see Definition \ref{defn:feasible-vertices}. Our algorithm chooses a hyperedge $e$ to contract with probability proportional to 
	\[
	\alpha_e := \left(\frac{|U|-|e \cap U|}{|U|}\right) \cdot c(e) = \left(\frac{|U \setminus e|}{|U|}\right) \cdot c(e). 
	\]
	and recurses on the contracted graph. Our algorithm performs an additional step of contracting all infeasible vertices after each contraction step. The description of our algorithm is presented in  Algorithm \ref{fig:alg_node_multi_arb_rank}. 
	We summarize the correctness probability and the run-time of Algorithm \ref{fig:alg_node_multi_arb_rank} in Theorem \ref{thm:nwmomc_arb_rank_suc_prob}. 
	
	\begin{figure*}[ht] 
    \centering\small
    \begin{algorithm} 
    \textul{\textsc{Node-Budgeted-$b$-Multiobjective-Min-Cut-Arbitrary-Rank}($G, t, w, c, b$):}\+
    \\  {\bf Input: } A hypergraph $G=(V,E)$, a positive integer $t$, \+\+ 
    \\ a vector $w$ of vertex-weight functions with $w_i:V\rightarrow \R_+$ for $i \in [t]$, 
    \\ a cost function $c: E \rightarrow \R_+$, and a budget-vector $b \in \mathbb{R}_+^{t}$\-\-
    \\ \rule{0pt}{3ex}$U \gets \{v \in V \colon v \text{ is feasible}\}$
    \\ If $U = \emptyset$:\+
    \\ Return INFEASIBLE \-
    \\ Compute $\alpha_e := \frac{|U \setminus e|}{|U|} \cdot c(e)$ for every hyperedge $e \in E$ 
    \\ If $\alpha_e = 0$ for every hyperedge $e \in E$: \+
    \\ If $U$ is feasible: \+
    \\ Return $\delta(U)$ \-
    \\ Return $E$ \-
    \\ Contract together all infeasible vertices in $G$
    \\ If $U$ is feasible: \+
    \\ Return \textsc{Hypergraph-Min-Cut}($G,c$) \-
    \\ $e \gets$ a random hyperedge of $G$ chosen with probability proportional to $\alpha_e$
    \\ Return \textsc{Node-Budgeted-$b$-Multiobjective-Min-Cut-Arbitrary-Rank}($G, t, w, c, b$) \-
    \end{algorithm}
    \caption{\textsc{Node-Budgeted $b$-Multiobjective Min-Cut} in arbitrary-rank hypergraphs.}
    \label{fig:alg_node_multi_arb_rank}
\end{figure*}

	\begin{theorem}\label{thm:nwmomc_arb_rank_suc_prob} Let $G$ be an n-vertex hypergraph, for some $n \geq 2$ with vertex-weight functions $w_1, \dots, w_{t}: V \rightarrow \mathbb{R}_+$, cost function $c \colon E \rightarrow \mathbb{R}_+$ and  budget-vector $b \in \mathbb{R}^t_+$. Then Algorithm \ref{fig:alg_node_multi_arb_rank} outputs a fixed node-budgeted $b$-multiobjective min-cut in $G$ with probability at least
	\[
	Q_n := \begin{cases} 1 &\text{if } n=2, \\ \frac{1}{3}{n-1 \choose 2}^{-1} &\text{if } n \geq 3. \end{cases}
	\]
	Moreover, the algorithm can be implemented to run in polynomial time.
	\end{theorem}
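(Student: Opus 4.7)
The plan is to induct on $n$. For the base case $n = 2$, the only nontrivial cut is $F = E$ (every hyperedge has size at least $2$ and thus contains both vertices); since $|U \setminus e| = 0$ for the lone hyperedge, the algorithm enters the $\alpha_e = 0$ branch and returns $E = F$ with probability $1 = Q_2$. For the inductive step $n \geq 3$, fix a node-budgeted $b$-multiobjective min-cut $F = \delta(X)$; feasibility of $X$ forces $X \subseteq U$. Split into two cases based on whether $U$ is feasible as a set.

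Case 1 ($U$ feasible): The algorithm contracts all infeasible vertices into a single supervertex $u^*$, yielding $G'$ with $|V(G')| \leq n$. Any nontrivial partition of $V(G')$ has one side disjoint from $u^*$ and therefore contained in $U$; since $U$ is feasible, so is that side. Hence every cut of $G'$ corresponds to a feasible cut in $G$, so $F$ is precisely the global min-cut of $G'$. By Theorem \ref{thm:arb_rank_min_cut_edge_weights}, \textsc{Hypergraph-Min-Cut} returns $F$ with probability at least $\binom{|V(G')|}{2}^{-1} \geq \binom{n}{2}^{-1}$, which exceeds $Q_n = \tfrac{1}{3}\binom{n-1}{2}^{-1}$ for $n \geq 3$ (the comparison reduces to $3(n-2) \geq n$).

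Case 2 ($U$ infeasible): The algorithm samples a hyperedge $e$ with probability $\alpha_e / \sum_{e'}\alpha_{e'}$ and recurses on $G/e$. If $e \notin F$, then $F$ remains a valid node-budgeted $b$-multiobjective min-cut in the recursion, so by induction
\[
\Pr[\text{success on }G] \;\geq\; \sum_{e \notin F}\frac{\alpha_e}{\sum_{e'}\alpha_{e'}}\,Q_{n-|e|+1}.
\]
Two ingredients drive the bound of $Q_n$: (i) for every $e \in F$, $|U \setminus e| \leq |U|-1$ because $e$ meets $X \subseteq U$, which bounds the ``bad-edge'' mass $\sum_{e \in F}\alpha_e$; and (ii) for every $v \in U$, $\{v\}$ is feasible so $c(\delta(v)) \geq c(F)$, which combined with the identity $\sum_e c(e)|U \setminus e| = |U|c(E) - \sum_{v \in U}c(\delta(v))$ controls $\sum_e \alpha_e$ in terms of $c(F)$. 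Grouping hyperedges by size into variables $x_j = \Pr[|e|=j]$ and $y_j = \Pr[|e|=j,\,e \in F]$ reduces the inductive step to verifying $\sum_j (x_j - y_j) Q_{n-j+1} \geq Q_n$ under the constraints $0 \leq y_j \leq x_j$, $\sum_j x_j = 1$, and $|U|\sum_j y_j \leq \sum_j j \cdot x_j$, analogous to Claim \ref{claim:node_alg_lp}.

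The main obstacle is Case 2: since the rank may be as large as $n$, Lemma \ref{lem:lp_sol} (which requires $\gamma \geq r+1$) does not apply off-the-shelf. The resolution should mirror the arbitrary-rank min-cut analysis of \cite{chandrasekararen2018hypergraph}: the non-uniform weighting $|U \setminus e|/|U|$ is designed to damp the contribution of large hyperedges so that the inductive inequality holds for every edge-size distribution. The extra factor of $\tfrac{1}{3}$ and the shift from $\binom{n}{2}$ to $\binom{n-1}{2}$ in $Q_n$ (weaker than the bound of Theorem \ref{thm:arb_rank_min_cut_edge_weights}) are the concessions required to accommodate the remaining infeasible supervertex after contraction when $U$ itself is not feasible.
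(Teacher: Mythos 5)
Your overall frame is right — reduce to at most one infeasible vertex, dispose of the $U$-feasible branch by invoking Theorem~\ref{thm:arb_rank_min_cut_edge_weights}, and for the remaining branch bound the recursion by $\frac{1}{\sum_e\alpha_e}\sum_{e\notin F}\alpha_e\,Q_{n-|e|+1}$. Your ingredient~(ii), that $c(F)\le c(\delta(v))$ for every feasible $v$ and that $\sum_e c(e)|U\setminus e| = |U|\,c(E)-\sum_{v\in U}c(\delta(v))$ yield $\sum_e\alpha_e\le c(E)-c(F)=c(E\setminus F)$, is exactly the paper's Claim~\ref{claim:alphas_bound}.

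The gap is the rest of Case~2. Grouping by hyperedge size into $x_j,y_j$ and replaying the constant-rank LP does not transfer, and not only because Lemma~\ref{lem:lp_sol} needs $\gamma\ge r+1$. In the constant-rank analysis $x_j=c(E_j)/c(E)$, so $\sum_j j\,x_j = \sum_v c(\delta(v))/c(E)$ and the budget constraint $|U|\sum_j y_j\le\sum_j j\,x_j$ falls out of degree counting; here the contraction law is proportional to $\alpha_e$, so $x_j=\sum_{|e|=j}\alpha_e/\sum_e\alpha_e$ and $\sum_j j\,x_j$ no longer collapses to a degree sum. You flag the obstruction and then assert the fix ``should mirror'' the arbitrary-rank min-cut analysis, but that is the step that actually needs proving. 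The paper replaces the LP entirely with a \emph{per-hyperedge} inequality (Claim~\ref{claim:alphaQ_bound_node_budget_gen}): for every $e\in E\setminus F$, $\alpha_e\,Q_{n-|e|+1}\ge c(e)\,Q_n$. This requires separately handling $|e|=n-1$ (using that $U$ infeasible forces $F$ to contain every hyperedge spanning $U$, so $|U\setminus e|\ge 1$ for $e\notin F$) and $|e|<n-1$ (a direct calculation), and it is precisely this inequality — not a size-bucketed LP — that combines with Claim~\ref{claim:alphas_bound} to give $\Pr[\text{success}]\ge \frac{c(E\setminus F)}{\sum_e\alpha_e}Q_n\ge Q_n$. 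Your ingredient~(i) ($|U\setminus e|\le|U|-1$ for $e\in F$) is true but points in the wrong direction: you need a lower bound on $\alpha_e$ for $e\notin F$, not an upper bound for $e\in F$. You also skip the degenerate case $\alpha_e=0$ for all $e$ (where the algorithm returns $\delta(U)$ or $E$ deterministically), which the paper treats explicitly; it is minor but needed for the induction to go through.
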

	
	\begin{proof}
	We first analyze the run-time. Each recursive call in the algorithm reduces the number of vertices, so the total number of recursive calls is at most $n$. Apart from the recursion, the algorithm, verifies the feasibility of each vertex and of $U$, computes $\alpha_e$ for each hyperedge, and either performs a contraction or calls the algorithm for the ordinary hypergraph min-cut problem. All of these can be done in polynomial time.
	
	Now we analyze the correctness probability. Let $\mathcal{G}_n$ be the set of all tuples $(G,t, w, c, b)$ where $G$ is an $n$-vertex hypergraph, $t \in \mathbb{Z}_+$, $w_1, \dots, w_{t}: V(G) \rightarrow \mathbb{R}_+$, $c: E(G) \rightarrow \mathbb{R}_+$, and $b \in \mathbb{R}^t_+$. That is, $\mathcal{G}_n$ is the set of all valid inputs to Algorithm \ref{fig:alg_node_multi_arb_rank} where the hypergraph has $n$ vertices. 
	For an input tuple $T$ in the form just described, let $M(T)$ be the collection of $b$-multiobjective min-cuts for the input instance. Define 
	\[
	q_n := \min_{T \in \mathcal{G}_n} \min_{F \in M(T)} \Pr[\text{Algorithm} \text{ returns } F \text{ on input } T].
	\]
	We will show that $q_n \geq Q_n$ for all $n \geq 2$. We proceed by induction on $n$. As a base case, when $n=2$, we have $\alpha_e=0$ for every $e$, and the algorithm outputs the unique cut with probability 1, so $q_2 = 1 = Q_2$.
	
	We now show the induction step. Let $G$ be a hypergraph on $n \geq 3$ vertices with associated costs, weights, and budgets, and let $F$ be a $b$-multiobjective min-cut in $G$. Assume that $q_{n'} \geq Q_{n'}$ for $2 \leq n' < n$. We will show that the algorithm returns $F$ with probability at least $Q_n = \frac{1}{3}{n-1 \choose 2}^{-1}$.
	
    Suppose $\alpha_e=0$ for every $e\in E$. This means that every hyperedge contains all of the feasible vertices. Let $\emptyset \subsetneq X \subsetneq V$ be a feasible set (one which does not violate the budgets). Then, every vertex in $X$ must be feasible. Since every hyperedge contains all feasible vertices, $\delta(X)$ will either be all of the hyperedges (if $X$ does not contain all feasible vertices) or all hyperedges which contain infeasible vertices (if $X$ contains all feasible vertices). The latter is cheaper, so if the set $U$ of all feasible vertices is still feasible, then we must have $F = \delta(U)$, and the algorithm always returns $F$. Otherwise, every feasible cut contains all hyperedges, so $F = E$ and again the algorithm always returns $F$. We hereafter assume that $\alpha_e > 0$ for some hyperedge $e$.
	
	We note that if $G$ has multiple infeasible vertices, the algorithm will contract them together to yield a hypergraph $G'$ with $n' < n$ vertices and only one infeasible vertex. The probability that the algorithm returns $F$ on input $G$ will be the same as the probability that the algorithm returns $F$ on input $G'$. From our induction hypothesis we know that this probability is at least $Q_{n'} \geq Q_n$. We hereafter assume that $G$ has at most one infeasible vertex.
	
	Next we consider the case where the set $U$ is feasible. (We emphasize that although $w_i(v) \leq b_i$ for every $v \in U$, $i \in [t]$, by the definition of feasible vertices, it need not be the case that $\sum_{v \in U} w_i(v) \leq b_i$ for every $i \in [t]$. So this case does not occur always). Since our vertex weights are all non-negative, if $U$ is feasible, then every subset of $U$ must be feasible as well. Any cut can be written as $\delta(X) = \delta(\complement{X})$ for some $X \subseteq V$. Since $G$ has at most one infeasible vertex, either $X$ or $\complement{X}$ must be a subset of $U$. This means that every cut in $G$ must be feasible. Thus, in this case, the budgets are irrelevant and finding a node-budgeted $b$-multiobjective min-cut is the same as just finding an ordinary minimum cut with respect to the cost function $c$.
	In particular, this means that $F$ is not  only a node-budgeted $b$-multiobjective min-cut in $G$, but it is also a regular min-cut as well. Therefore by Theorem \ref{thm:arb_rank_min_cut_edge_weights}, the algorithm for \textsc{Hypergraph Min-Cut} (i.e., Algorithm \ref{fig:alg_hyper_min_cut}) outputs $F$ with probability at least ${n \choose 2}^{-1}$. Consequently, Algorithm \ref{fig:alg_node_multi_arb_rank} outputs $F$ with probability at least $\binom{n}{2}^{-1}$. Since $n\ge 3$, we have that ${n \choose 2}^{-1}\ge \frac{1}{3}{n-1 \choose 2}^{-1}=Q_n$. 
	
	Finally, suppose that $G$ has at most one infeasible vertex and that $U$ is not feasible in $G$. Then, the algorithm contracts a hyperedge with probability proportional to $\alpha_e$. Let $e'$ be a random variable for the contracted hyperedge. Using the induction hypothesis, we obtain that
	\begin{align*}
	    \Pr[\text{Algorithm} \text{ returns } F \text{ on input } G] 
	    =& \sum_{e \in E \setminus F} \Pr[e' = e] \cdot \Pr[\text{Algorithm} \text{ returns } F \text{ on input } G / e] \\
	    \geq& \sum_{e \in E \setminus F} \frac{\alpha_e}{\sum_{f \in E} \alpha_f} \cdot q_{n-|e|+1} \\
	    \geq& \frac{1}{\sum_{e \in E} \alpha_e} \sum_{e \in E \setminus F} \alpha_e Q_{n-|e|+1}.
    \end{align*}
    Now, Claims \ref{claim:alphaQ_bound_node_budget_gen} and \ref{claim:alphas_bound} complete the proof of the theorem. 
    \end{proof}
	    
	    \begin{claim}\label{claim:alphaQ_bound_node_budget_gen}
	    For every hyperedge $e\in E\setminus F$, we have 
	    \[
	    \alpha_eQ_{n-|e|+1} \geq c(e)Q_n.
	    \]
	    \end{claim}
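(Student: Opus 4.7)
The plan is to first dispose of the trivial case $c(e)=0$ (where both sides vanish) and, dividing through by $c(e)$, reduce the claim to the equivalent inequality
\[
\frac{|U\setminus e|}{|U|}\cdot Q_{n-|e|+1} \ \ge\ Q_n.
\]
I would then extract two elementary size bounds that are immediate from the setup. First, $|U\setminus e|\ge |U|-|e|$, since $|U\cap e|\le |e|$. Second, $|U|\ge n-1$, because the outer argument has already reduced to the case in which $G$ has at most one infeasible vertex. I would also record that $|e|\le n-1$: since $F=\delta(X)$ for a nonempty feasible $X\subsetneq V$, any hyperedge of size $n$ would necessarily cross $X$ and therefore lie in $F$, contradicting $e\in E\setminus F$.

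Next, I would split into two cases matching the two branches of the definition of $Q$. In the boundary case $|e|=n-1$ we have $Q_{n-|e|+1}=Q_2=1$ and $|U\setminus e|/|U|\ge 1/n$, so the inequality reduces to $1/n\ge 2/(3(n-1)(n-2))$, i.e.\ $3(n-1)(n-2)\ge 2n$, which is immediate for $n\ge 3$.

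In the generic case $|e|\le n-2$, plugging in $Q_m = \tfrac{2}{3(m-1)(m-2)}$ on both sides and clearing common positive factors reduces the claim to the purely arithmetic inequality
\[
(|U|-|e|)\,(n-1)(n-2)\ \ge\ |U|\,(n-|e|)(n-|e|-1).
\]
This is the main calculation, and the key step is to verify it for each of the two possible values $|U|\in\{n-1,n\}$. For $|U|=n$, after cancelling the factor $n-|e|>0$ the inequality becomes $(n-1)(n-2)\ge n(n-|e|-1)$, whose left-minus-right simplifies to $n(|e|-2)+2\ge 0$. For $|U|=n-1$, cancelling a factor of $n-1$ leaves $(n-1-|e|)(n-2)\ge (n-|e|)(n-|e|-1)$, and the difference of the two sides factors as $(|e|-2)(n-|e|-1)$, which is nonnegative because $2\le |e|\le n-1$. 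The main obstacle here is spotting this clean factorisation; once it is in hand, combining the two case analyses yields the claim.
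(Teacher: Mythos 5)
Your overall structure matches the paper's (both split on whether $|e|=n-1$ or $|e|\le n-2$, and both use $|U\setminus e|\ge|U|-|e|$ together with $|U|\ge n-1$), and your calculations in the generic case $|e|\le n-2$ check out: for $|U|=n$ the difference is $n(|e|-2)+2>0$, and for $|U|=n-1$ the factorisation $(|e|-2)(n-|e|-1)\ge 0$ is correct. The paper handles this case slightly more slickly by noting that $x\mapsto(x-|e|)/x$ is increasing, so $\tfrac{|U|-|e|}{|U|}\ge\tfrac{n-1-|e|}{n-1}$, and then cancels $n-1-|e|$ against $n-|e|-1$ to avoid the two-way case split on $|U|$; but your explicit enumeration of $|U|\in\{n-1,n\}$ is equally valid.

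However, there is a genuine gap in your boundary case $|e|=n-1$. You assert $|U\setminus e|/|U|\ge 1/n$, which requires $|U\setminus e|\ge 1$, but none of your three recorded size bounds delivers this: $|U\setminus e|\ge|U|-|e|$ only gives $|U\setminus e|\ge 0$ when $|U|=n-1$ and $|e|=n-1$ (i.e., when $e$ could conceivably equal $U$), and your argument for $|e|\le n-1$ (a hyperedge of size $n$ crosses every $X$) does not rule this out. The missing observation, which the paper makes explicitly at this point, is that $U$ is \emph{not} feasible in the case under consideration, so any feasible $X$ with $F=\delta(X)$ satisfies $X\subsetneq U$; hence every hyperedge containing all of $U$ crosses $X$ and must lie in $F$. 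Since $e\notin F$, it follows that $U\not\subseteq e$, i.e.\ $|U\setminus e|\ge 1$. Without invoking the infeasibility of $U$, the claim as stated would actually be false: a hyperedge $e=U$ of size $n-1$ with $c(e)>0$ would have $\alpha_e=0$ while $c(e)Q_n>0$. Adding this one sentence closes the gap and makes the rest of your argument go through.
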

	    \begin{proof}
	    Suppose $|e| = n-1$. Then $Q_{n-|e|+1} = 1$. Since $U$ is not feasible, we know that $F$ must contain every hyperedge that spans $U$. Since $e\in E\setminus F$, it follows that $|U\setminus e|>0$. Therefore, 
	        $\alpha_e \geq \frac{c(e)}{n}$. We conclude that $\alpha_eQ_{n-|e|+1} = \alpha_e \geq \frac{c(e)}{n} \geq \frac{c(e)}{3}{n-1 \choose 2}^{-1} = c(e)Q_n$.
	        
        Next, suppose $|e| < n-1$. Then $Q_{n-|e|+1} = \frac{1}{3}{n-|e| \choose 2}^{-1}$, and we have
	    \begin{align*}
	        \alpha_eQ_{n-|e|+1} &= \frac{|U \setminus e|c(e)}{|U|} \cdot \frac{1}{3}{n-|e| \choose 2}^{-1} \\
	        &\geq \frac{|U|-|e|}{|U|} \cdot \frac{1}{3}{n-|e| \choose 2}^{-1} \cdot c(e) \\
	        &\geq \frac{n-1-|e|}{n-1} \cdot \frac{2}{3(n-|e|)(n-|e|-1)} \cdot c(e) \\
	        &\geq \frac{2}{3(n-1)(n-|e|)} \cdot c(e) \\
	        & \geq \frac{2}{3(n-1)(n-2)} \cdot c(e) \\
	        &= c(e)Q_n. 
	    \end{align*}
	    \end{proof}

	\begin{claim}\label{claim:alphas_bound}
	\[
    \frac{c(E\setminus F)}{\sum_{e\in E}\alpha_e} \ge 1.
	\]
	\end{claim}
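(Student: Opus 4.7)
The plan is to rewrite $\sum_{e \in E} \alpha_e$ as an average over singleton cuts of feasible vertices, then use the minimality of $F$ against such singleton cuts.

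First, I would swap the order of summation:
\[
\sum_{e \in E} \alpha_e = \frac{1}{|U|}\sum_{e \in E} |U \setminus e|\, c(e) = \frac{1}{|U|}\sum_{v \in U}\sum_{\substack{e \in E \\ v \notin e}} c(e) = \frac{1}{|U|}\sum_{v \in U} \bigl(c(E) - c(\delta(v))\bigr),
\]
where the last equality uses the fact that (for hyperedges of size at least two) the hyperedges containing $v$ are precisely those in $\delta(v)$. Hence
\[
\sum_{e \in E} \alpha_e = c(E) - \frac{1}{|U|}\sum_{v \in U} c(\delta(v)).
\]

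Next, since $v \in U$ means that $v$ is a feasible vertex, the singleton set $\{v\}$ satisfies $w_i(\{v\}) = w_i(v) \le b_i$ for every $i \in [t]$, so $\{v\}$ is a feasible set. As $n \ge 2$, we have $\emptyset \subsetneq \{v\} \subsetneq V$, so $\delta(v)$ is a feasible cut and therefore $c(F) \le c(\delta(v))$ by the minimality of $F$. Averaging this inequality over all $v \in U$ (which is nonempty since otherwise the algorithm would have returned INFEASIBLE earlier) gives
\[
c(F) \le \frac{1}{|U|}\sum_{v \in U} c(\delta(v)).
\]

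Combining the two displays, we conclude
\[
c(E \setminus F) = c(E) - c(F) \ge c(E) - \frac{1}{|U|}\sum_{v \in U} c(\delta(v)) = \sum_{e \in E}\alpha_e,
\]
which yields the claimed inequality. The main conceptual step is the double-counting identity in the first display; the remainder is just the averaging argument that was also used in the proofs of Theorems \ref{thm:suc_prob} and \ref{thm:node_suc_prob}, and I do not anticipate any real obstacle here.
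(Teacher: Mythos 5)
Your proof is correct and takes essentially the same approach as the paper. The paper derives the identity $\sum_{e\in E}\alpha_e = c(E) - \frac{1}{|U|}\sum_{v\in U}c(\delta(v))$ probabilistically, as the expected cost of the cut induced by a uniformly random feasible vertex, whereas you obtain the same identity by interchanging the order of summation; both arguments then finish by invoking the minimality of $F$ against the singleton cuts $\delta(v)$ for $v\in U$.
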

	\begin{proof}
	We consider the cut induced by a uniformly random feasible vertex. A hyperedge $e$ belongs to such a cut with probability $\frac{|U \cap e|}{|U|} = 1 - \frac{|U \setminus e|}{|U|}$. Thus, the expected value of such a cut is $\sum_{e \in E} (1 - \frac{|U \setminus e|}{|U|})c(e) = c(E) - \sum_{e \in E} \alpha_e$. Since the value of the cut induced by a random feasible vertex is an upper bound on the value of a node-budgeted $b$-multiobjective min-cut, this means that $c(F) \leq c(E) - \sum_{e \in E} \alpha_e$. Rewriting this inequality gives $\sum_{e \in E} \alpha_e \leq c(E) - c(F) = c(E \setminus F)$, and the desired inequality follows.
	\end{proof}
	
	\section{Size-Constrained \texorpdfstring{Min-$k$-Cut}{Min-k-Cut} in Arbitrary-Rank Hypergraphs}
	\label{sec:size_constrained}
	
	In this section, we consider the problem of finding a minimum cost $k$-cut subject to constant lower bounds on the weights of the partition classes and prove Theorem \ref{thm:size_constrained_success_prob}. Throughout this section, we assume that $k$ is a constant. We focus on the cardinality case (i.e., unit-cost variant) for the sake of simplicity of exposition and mention that our algorithm also extends to arbitrary non-negative hyperedge costs.
	
	We begin by formally defining the terminology. Let $G=(V,E)$ be a hypergraph. For a partition $X = (X_1, \dots, X_k)$ of $V$, we define $\delta(X)$ to be the set of hyperedges that intersect at least two parts of $X$. For a weight function $w: V \rightarrow \Z_+$, we call $(G,w)$ a \emph{vertex-weighted hypergraph}. We now define our main object of study, namely size-constrained minimum cuts.
	
	\begin{definition}
	Let $G=(V,E)$ be a hypergraph, $w:V\rightarrow \Z_+$ be a vertex-weight function, $k\ge 2$ be an integer, and $s\in \Z_+^k$ be a size-vector. A $k$-partition $X$ of $V$ is an \emph{$s$-size-constrained $k$-partition} if $w(X_i) \ge s_i$ for every $i \in [k]$. A set of hyperedges $F \subseteq E$ is an \emph{$s$-size-constrained $k$-cut} if $F = \delta(X)$ for some $s$-size-constrained $k$-partition $X$. An $s$-size-constrained $k$-cut of minimum cardinality is said to be an \emph{$s$-size-constrained min-$k$-cut}.
	\end{definition}
	
	The following is the central problem of interest in this section.
	\begin{problem}{$s$-Size-Constrained Min-$k$-Cut}
	Given: A vertex-weighted hypergraph $(G,w)$, a positive integer $k$, and a size-vector $s\in \Z^k_+$.
	
	Goal: An $s$-size-constrained min-$k$-cut. 
	\end{problem}
	
	We give a random contraction based algorithm for this problem. Given a hypergraph $G=(V,E)$ and a size-constraint vector $s\in \Z_+^k$, let $n = |V|$. We define $\sigma_j:=\sum_{i=1}^{j}s_i$, and 
	\[
	\alpha_e := \frac{{n-|e| \choose \sigma_{k-1}}}{{n \choose \sigma_{k-1}}}.
	\]
	With these definitions, we solve $s$-size-constrained min-$k$-cut using Algorithm \ref{fig:alg_size_constrained}. We prove Theorem \ref{thm:size_constrained_success_prob} using this algorithm. 
	
	\begin{figure*}[ht] 
    \centering\small
    \begin{algorithm} 
    \textul{\textsc{$s$-Size-Constrained-Min-$k$-Cut}($G, k, s$):}\+
    \\  {\bf Input: } An $n$-vertex hypergraph $G=(V,E)$, an integer $k \geq 2$ \+\+ \\ and size-constraint vector $s = (s_1, \dots, s_k) \in \mathbb{Z}_+$\-\-
    \\ \rule{0pt}{3ex}If $n \leq \max\{2\sigma_{k-1}, \sigma_k\}$ \+
    \\ Pick a random $k$-partition $X$ of $V$ and return $\delta(X)$ \-
    \\ $S \gets$ a random subset of $V$ of size $2\sigma_{k-1}$
    \\ $X_i \gets \emptyset$ for $i \in \{1, \dots, k\}$
    \\ For each $v \in S$: \+
    \\ Pick a random integer $i\in \set{1, \ldots, k}$ and add $v$ to $X_i$\-
    \\ $X_k \gets X_k \cup (V \setminus S)$
    \\ $X \gets (X_1, \dots, X_k)$
    \\ If $X$ is a $k$-partition of $V$: \+
    \\ $R \gets \delta(X)$ \-
    \\ Else: \+
    \\ $R \gets E$ \-
    \\ Compute $\alpha_e := \frac{{n-|e| \choose \sigma_{k-1}}}{{n \choose \sigma_{k-1}}}$ for every $e \in E$
    \\ If $\alpha_e = 0$ for every hyperedge $e \in E(G)$: \+
    \\ Return $R$\-
    \\ $e \gets $ a random hyperedge of $G$ chosen with probability proportional to $\alpha_e$
    \\ $R' \gets$ \textsc{$s$-Size-Constrained-Min-$k$-Cut}($G/e, k, s$) 
    \\ Return $R$ with probability $\frac{1}{n}$ and $R'$ with probability $\frac{n-1}{n}$ \-
    \end{algorithm}
    \caption{\textsc{$s$-Size-Constrained Min-$k$-Cut}}
    \label{fig:alg_size_constrained}
\end{figure*}

	\sizeConstrainedSuccessProb*
	\begin{proof}
	We consider Algorithm \ref{fig:alg_size_constrained}. We first analyze its run-time. Each recursive call reduces the number of vertices in the hypergraph. Thus, the algorithm makes at most $n$ recursive calls. Apart from the recursion steps, the algorithm only selects random partitions and performs contractions, both of which can be implemented to run in polynomial time.
	
	Now we analyze the success probability. Let $\mathcal{G}_n$ be the set of all vertex-weighted $n$-vertex hypergraphs which contain an $s$-size-constrained $k$-cut. For a vertex-weighted hypergraph $(G,w)$, let $M(G,w)$ be the set of all $s$-size-constrained min-$k$-cuts in $(G,w)$. 
	Define 
	\[q_n := \min_{(G,w) \in \mathcal{G}_n} \min_{F \in M(G,w)} \Pr[\text{Algorithm} \text{ returns } F \text{ on input } (G,w,k,s)], \text{ and} 
	\]
	\[
	Q_n := 
	\begin{cases} \left( k^{\max\{2\sigma_{k-1},\sigma_k\}} \right)^{-1} &\text{if } n \leq \max\{2\sigma_{k-1}, \sigma_k\}, and \\ 
	\left( k^{\max\{2\sigma_{k-1},\sigma_k\}} n{n \choose 2\sigma_{k-1}} \right)^{-1} &\text{if } n > \max\{2\sigma_{k-1}, \sigma_k\}.\end{cases}
	\] 
	We note that $Q_n = \Omega(k^{-\sigma_{k-1}-\sigma_k}n^{-2\sigma_{k-1}-1})$, so it suffices to show that $q_n \geq Q_n$ for all $n \geq k$ (for smaller $n$, there are no $k$-cuts).
	
	We proceed by induction on $n$. Let $F$ be an $s$-size-constrained min-$k$-cut in $(G,w)$. Let $Y$ be an $s$-size-constrained $k$-partition with $F = \delta(Y)$. We assume that $|Y_1| \leq |Y_2| \leq \dots \leq |Y_k|$. This assumption is without loss of generality because we can relabel the parts of an $s$-size-constrained $k$-partition so that they are in increasing order of size and the resulting partition will still be an $s$-size-constrained $k$-partition (since we have assumed that the size vector $s$ is in increasing order). 
	
	For the base case, suppose $n \leq \max\{2\sigma_{k-1},\sigma_k\}$. In this case, the algorithm returns $\delta(X)$ where $X$ is a $k$-partition of $V$ chosen uniformly at random. Since $F = \delta(Y)$, the probability that $F = \delta(X)$ for the $X$ randomly chosen by the algorithm is at least $\Pr[X = Y]$. The number of $k$-partitions of $V$ is at most $k^n$, the number of ways to assign each of the $n$ vertices to one of $k$ labeled sets. Thus, $\Pr[X=Y] \geq k^{-n}\geq k^{-max(2\sigma_{k-1},\sigma_k)} = Q_n$.
	
	Now we will prove the inductive step. Assume that $n > \max\{2\sigma_{k-1}, \sigma_k\}$. By the inductive hypothesis, we have $q_{n'} \geq Q_{n'}$ for all $n' \in \set{k, \ldots, n-1}$. We will show that $q_n \geq Q_n$.
	
	Suppose $|Y_k|\ge n-2\sigma_{k-1}$. Let $T$ be an arbitrary subset of $Y_k$ of size $n-2\sigma_{k-1}$. Consider the set $S$ chosen by the algorithm. The probability that $S$ is equal to $V-T$ is ${n \choose 2\sigma_{k-1}}^{-1}$. Next, consider the sets $X_i$ created by the algorithm. The probability that $X_i=Y_i$ for every $i\in [k]$ conditioned on $S = V-T$, is $k^{-2\sigma_{k-1}}$. Thus, the probability that the $k$-partition $X$ obtained in the algorithm is identical to $Y$ is at least $k^{-2\sigma_{k-1}}{n \choose 2\sigma_{k-1}}^{-1}$. Since the last step of the algorithm returns $R=\delta(X)$ with probability $1/n$, it follows that the algorithm returns $F$ with probability at least $\left( nk^{\max\{2\sigma_{k-1},\sigma_k\}}{n \choose 2\sigma_{k-1}} \right)^{-1} = Q_n$.
	
	Henceforth, we assume that $|Y_k|<n-2\sigma_{k-1}$. We will call a hyperedge \emph{large} if it contains at least $n-2\sigma_{k-1}$ vertices. Since $Y_k$ is the largest part of the $k$-partition $Y$, every large hyperedge must be contained in the $k$-cut $F$. In particular, if $\alpha_e = 0$ for a hyperedge $e$, then ${n-|e| \choose \sigma_{k-1}} = 0$, which implies that $n-|e| < \sigma_{k-1}$, and hence $e$ is large and consequently, $e$ cannot be in $F$.
	
	Next, suppose that $\alpha_e=0$ for every hyperedge $e$. Then every hyperedge is a large hyperedge and therefore, $F=E$. In this case, the algorithm will return $R$. We note that $R=E$ if $X$ is not a $k$-partition. We lower bound the probability that $X$ is not a $k$-partition now. If all vertices in $S$ are assigned to $X_k$, then $X$ is not a $k$-partition. The probability that all vertices in $S$ are assigned to $X_k$ is $k^{-2\sigma_{k-1}}$. Thus, the probability that the algorithm returns $F=R=E$ is at least $k^{-2\sigma_{k-1}}\ge Q_n$. 
	
	Henceforth, we assume that $\alpha_e>0$ for some hyperedge $e\in E$. This means that the algorithm will contract some hyperedge and then recurse on the resulting hypergraph. Let $e'$ be a random variable for the hyperedge chosen to be contracted. Let $w'$ be the weight function defined on the vertices of $G/e'$ as follows: $w'(v) := w(v)$ for each $v \in V \setminus e'$ and $w'(v) := \sum_{u \in e'} w(u)$ when $v$ is the new vertex resulting from the contraction. If $e' \not\in F$, then $F$ will be an $s$-size-constrained min-$k$-cut in $(G/e', w')$.
	Therefore, we have that 
	\begin{align*}
	    \Pr[R'=F \text{ on input } (G,k,s)] 
	    &= \sum_{e \in E \setminus F} \Pr[e' = e] \cdot \Pr[\text{Algorithm} \text{ returns } F \text{ on input } (G/e,k,s)]  \\
	    &\ge \sum_{e \in E \setminus F} \frac{\alpha_e}{\sum_{f \in E} \alpha_f} \cdot q_{n-|e|+1} .
	    \end{align*}
	    Let $e$ be a hyperedge that is not in the $k$-cut $F$. Then, $e$ cannot be a large hyperedge and hence, $|e|<n-2\sigma_{k-1}$. Consequently, $n-|e|+1>2\sigma_{k-1}+1\ge k$. Therefore, by applying the induction hypothesis, we have $q_{n-|e|+1}\ge Q_{n-|e|+1}$. Hence,
	    
	    \begin{equation*}
	        \Pr[R'=F \text{ on input } (G,k,s)] \ge\frac{1}{\sum_{f \in E} \alpha_f} \sum_{e \in E \setminus F} \alpha_e \cdot Q_{n-|e|+1}.
	    \end{equation*}
	
	We need the following two claims. We defer their proofs to complete the proof of the theorem. 
	\begin{claim}\label{claim:alphaQ_bound}
	For every hyperedge $e\in E\setminus F$, we have 
	$\alpha_eQ_{n-|e|+1} \geq \frac{nQ_n}{n-1}$. 
	\end{claim}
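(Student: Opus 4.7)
The plan is to verify the inequality $\alpha_e Q_{n-|e|+1}\ge nQ_n/(n-1)$ by unpacking both sides with the explicit formulas for $\alpha_e$ and $Q_n$. Write $m := |e|$ and $\sigma := \sigma_{k-1}$ for brevity. From the setup of the induction, $F$ contains every ``large'' hyperedge (of size at least $n-2\sigma$), so $e \notin F$ forces $m < n - 2\sigma$; in particular $n-m+1 > 2\sigma+1 \geq k$, and every factor that appears in the sequel is positive. I would split on whether $n-m+1 \le \max\{2\sigma,\sigma_k\}$ (base-case regime for $Q_{n-m+1}$) or $n-m+1 > \max\{2\sigma,\sigma_k\}$ (recursive regime). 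The first case is easy because then $Q_{n-m+1} = 1/k^{\max\{2\sigma,\sigma_k\}}$ is larger than what the recursive formula would give, and the inequality reduces to the crude bound $\alpha_e(n-1)\binom{n}{2\sigma}\ge 1$, which follows from $\binom{n-m}{\sigma}\ge 1$ together with the identity $\binom{n}{2\sigma}/\binom{n}{\sigma} = \binom{n-\sigma}{\sigma}/\binom{2\sigma}{\sigma}$ once we use $n\ge \max\{2\sigma,\sigma_k\}+1$.

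The main work is the recursive case. Here I would apply the identity $\binom{A}{2\sigma}\binom{2\sigma}{\sigma} = \binom{A}{\sigma}\binom{A-\sigma}{\sigma}$ with $A=n$ and $A=n-m+1$, together with the telescoping identity $\binom{n-m}{\sigma}/\binom{n-m+1}{\sigma} = (n-m+1-\sigma)/(n-m+1)$, to rewrite the desired ratio as
\[
\frac{\alpha_e Q_{n-m+1}}{Q_n} \;=\; \frac{n(n-m+1-\sigma)}{(n-m+1)^2}\cdot \frac{\binom{n-\sigma}{\sigma}}{\binom{n-m+1-\sigma}{\sigma}}.
\]
Cross-multiplying by $(n-1)/n$ and expanding the remaining binomial ratio as a product of falling factorials, the inequality $\alpha_e Q_{n-m+1}\ge nQ_n/(n-1)$ becomes, after canceling a common factor $(n-m+1-\sigma)$,
\[
(n-1)(n-\sigma)(n-\sigma-1)\cdots(n-2\sigma+1) \;\ge\; (n-m+1)^2(n-m-\sigma)(n-m-\sigma-1)\cdots(n-m-2\sigma+2).
\]
This is a polynomial inequality in $m$ with the left-hand side independent of $m$ and both sides consisting of $\sigma+1$ positive factors (positivity following from $m\le n-2\sigma-1$).

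The key observation is that the right-hand side, viewed as a function of $m$, is a product of positive factors each strictly decreasing in $m$, so the right-hand side is maximized at $m=2$. It therefore suffices to verify the inequality at $m=2$. Writing $p := n-\sigma$ and substituting $m=2$ makes both $(n-m+1)$ factors on the right equal $(n-1)=p+\sigma-1$; one of them cancels the leading $(n-1)$ on the left. After canceling the common product $(p-2)(p-3)\cdots(p-\sigma+1)$ from both sides, the inequality collapses to $p(p-1)\ge (p+\sigma-1)(p-\sigma)$, which expands to $\sigma(\sigma-1)\ge 0$ and holds since $\sigma\ge 1$. The principal obstacle is the bookkeeping: a naive term-by-term pairing of the $\sigma+1$ factors on each side does not yield factor-wise domination (some individual ratios go the wrong way), so one must genuinely invoke monotonicity in $m$ and reduce to the extremal case $m=2$ before comparing.
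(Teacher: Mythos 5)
Your treatment of the recursive regime ($n-|e|+1 > \max\{2\sigma_{k-1},\sigma_k\}$) is correct. The chain of binomial identities rewriting $\alpha_e Q_{n-|e|+1}/Q_n$ as $\frac{n(n-m+1-\sigma)}{(n-m+1)^2}\cdot\binom{n-\sigma}{\sigma}/\binom{n-m+1-\sigma}{\sigma}$, the cancellation of $(n-m+1-\sigma)$, the observation that the resulting right-hand side is a product of positive factors each decreasing in $m=|e|$, and the verification at $m=2$ (collapsing to $\sigma(\sigma-1)\ge 0$ after further cancellations) are all sound, including at the edge $\sigma=1$ where the products are empty and equality holds. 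This is a more direct route than the paper's: the paper factors the binomial manipulation out as a separate Proposition~\ref{prop:ratio_ineq} and proves that proposition by casing on $|e|>\sigma$ versus $|e|\le\sigma$, reducing the latter to $|e|=2$ by an analogous monotonicity argument; the two approaches are parallel in spirit but your reformulation avoids introducing an auxiliary lemma.

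Your base-regime argument ($n-|e|+1\le\max\{2\sigma_{k-1},\sigma_k\}$), however, has a genuine gap. The reduction to $\alpha_e(n-1)\binom{n}{2\sigma}\ge 1$ is correct, but you then discard the factor $\binom{n-|e|}{\sigma}\ge 1$ and claim $(n-1)\binom{n-\sigma}{\sigma}/\binom{2\sigma}{\sigma}\ge 1$ from $n\ge\max\{2\sigma,\sigma_k\}+1$. This fails when $2\sigma<n<3\sigma$ (which is realizable whenever $s_k$ is large relative to $\sigma_{k-1}$, so that $\max\{2\sigma,\sigma_k\}=\sigma_k$ can exceed $2\sigma+1$): for instance $\sigma=10$, $\sigma_k=22$, $n=23$, $|e|=2$ gives $(n-1)\binom{n-\sigma}{\sigma}/\binom{2\sigma}{\sigma}=22\cdot\binom{13}{10}/\binom{20}{10}=6292/184756\approx 0.034<1$. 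The bound $\binom{n-|e|}{\sigma}\ge 1$ is far too lossy here; the constraint $|e|<n-2\sigma$ forces $\binom{n-|e|}{\sigma}\ge\binom{2\sigma+1}{\sigma}$, which is comparable to $\binom{n}{\sigma}$ when $n$ is near $2\sigma$, and that is exactly what must be used. The paper handles this by splitting the base regime into $n\ge 3\sigma$ (where $\binom{n}{\sigma}\le\binom{n}{2\sigma}$ and the crude bound does work) and $n<3\sigma$ (where it shows $\binom{n-|e|}{\sigma}/\binom{n}{\sigma}\ge 2^{-x}\ge\binom{n}{2\sigma}^{-1}$ with $n=2\sigma+x$). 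Ironically, your own opening remark that the base-regime $Q_{n-|e|+1}=k^{-\max\{2\sigma,\sigma_k\}}$ dominates the recursive formula $\bigl(k^{\max\{2\sigma,\sigma_k\}}(n-|e|+1)\binom{n-|e|+1}{2\sigma}\bigr)^{-1}$ (well-defined and positive since $n-|e|+1>2\sigma$) already subsumes this regime under your correct recursive-regime computation; you noticed the key fact but then followed a flawed shortcut instead of using it.
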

	\begin{claim} \label{claim:eminusf_geq_alphas}
	$\frac{|E \setminus F|}{\sum_{f \in E} \alpha_f} \geq 1$.
	\end{claim}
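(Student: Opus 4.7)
The plan is to upper-bound $|F|$ using a random $s$-size-constrained $k$-cut drawn from the uniform distribution over size-$\sigma_{k-1}$ subsets $T \subseteq V$. If we can exhibit, for each such $T$, an $s$-size-constrained $k$-partition $X(T)$ whose cut is contained in $\{e \in E : e \cap T \neq \emptyset\}$, then minimality of $F$ gives $|F| \leq |\{e : e \cap T \neq \emptyset\}|$ pointwise in $T$. Taking expectations and observing that $\Pr_T[e \cap T = \emptyset] = \binom{n-|e|}{\sigma_{k-1}}/\binom{n}{\sigma_{k-1}} = \alpha_e$ will then yield
\[
|F| \;\leq\; \mathbb{E}_T\bigl[|\{e : e \cap T \neq \emptyset\}|\bigr] \;=\; |E| - \sum_{e \in E} \alpha_e,
\]
which rearranges to the desired bound $\sum_e \alpha_e \leq |E \setminus F|$.

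For the construction of $X(T)$, the idea is to place $V \setminus T$ entirely in the last part $X_k(T)$; this automatically ensures that any hyperedge disjoint from $T$ is contained in $X_k(T)$ and hence not cut, delivering the inclusion $\delta(X(T)) \subseteq \{e : e \cap T \neq \emptyset\}$. The remaining vertices of $T$, of which there are exactly $\sigma_{k-1} = s_1 + \cdots + s_{k-1}$, are then split into blocks of sizes $s_1, \ldots, s_{k-1}$ (in any order) and assigned to $X_1(T), \ldots, X_{k-1}(T)$ respectively. Because vertex weights take positive integer values (a property preserved by the algorithm's contractions, which simply sum weights), we obtain $w(X_i(T)) \geq |X_i(T)| = s_i$ for $i < k$, while $w(X_k(T)) \geq |V \setminus T| = n - \sigma_{k-1} \geq s_k$ by the case hypothesis $n > \max\{2\sigma_{k-1}, \sigma_k\}$. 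Hence $X(T)$ is a valid $s$-size-constrained $k$-partition for every $T$.

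The main difficulty I anticipate is justifying the validity of $X(T)$ uniformly over all $T$, which, as above, hinges on the $w(v) \geq 1$ convention so that a cardinality-based assignment meets the weight constraints. If zero weights were permitted, the argument would need to be tailored to the reference partition $Y$---for instance, drawing the $s_i$ vertices that form $X_i(T)$ preferentially from $Y_1 \cup \cdots \cup Y_{k-1}$ rather than uniformly from $T$---to guarantee sufficient weight on each small part. Once validity is secured, the remainder is a direct linearity-of-expectation calculation along the lines of the first paragraph.
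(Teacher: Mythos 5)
Your proof is correct and matches the paper's argument in essence: the paper picks a uniformly random partition $Z=(Z_1,\dots,Z_k)$ with $|Z_i|=s_i$ for $i<k$ and $Z_k=V\setminus\bigcup_{i<k}Z_i$, observes that integrality of the vertex weights (each $w(v)\ge 1$) makes $Z$ a valid $s$-size-constrained $k$-partition, and bounds $|F|\le \mathbb{E}[|\delta(Z)|]$ using $\Pr[e\not\in\delta(Z)]\ge\Pr[e\subseteq Z_k]=\alpha_e$ — exactly your $T=Z_1\cup\cdots\cup Z_{k-1}$ construction in slightly different bookkeeping. You also correctly identified the $w(v)\ge 1$ assumption as the crux of feasibility, which the paper invokes explicitly.
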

	By Claim \ref{claim:alphaQ_bound}, we have that
	\begin{align*}
	     \Pr[R'=F \text{ on input } (G,k,s)] 
	    &=\frac{1}{\sum_{f \in E} \alpha_f} \sum_{e \in E \setminus F} \alpha_e \cdot Q_{n-|e|+1} \\
	    &\geq \frac{1}{\sum_{f \in E} \alpha_f} \sum_{e \in E \setminus F} \frac{nQ_n}{n-1} \\
	    &= \frac{|E \setminus F|}{\sum_{f \in E} \alpha_f} \cdot \frac{nQ_n}{n-1}.
	\end{align*}
	Thus, Claim \ref{claim:eminusf_geq_alphas} implies that
	\[
	\Pr[R'=F \text{ on input } (G,k,s)] \geq  \frac{nQ_n}{n-1}.
	\]
	Finally, we note that since we have assumed $n > \max\{2\sigma_{k-1}, \sigma_k\}$ and $\alpha_e > 0$ for some $e$, the probability that the algorithm returns $R'$ is $(n-1)/n$. Thus, we conclude that $\Pr[ \text{Algorithm} \text{ returns } F] \geq Q_n$.
	\end{proof}

	\begin{proof}[Proof of Claim \ref{claim:alphaQ_bound}]
	Let $e\in E\setminus F$. We recall that $F$ contains all large hyperedges and hence,
	
	\begin{equation}\label{ineq:e_is_small}
	    |e| < n - 2\sigma_{k-1}.
	\end{equation}
	
	First, suppose that $n-|e|+1 \leq \max\{2\sigma_{k-1}, \sigma_k\}$. Then, we have $Q_{n-|e|+1} = k^{-\max\{2\sigma_{k-1},\sigma_k\}}$. We consider two subcases. 
	
	\begin{casesp}
	\item Suppose $n \geq 3\sigma_{k-1}$. Then
	
	\begin{equation*}
	    \frac{{n-|e| \choose \sigma_{k-1}}}{{n \choose \sigma_{k-1}}} Q_{n-|e|+1} \geq \frac{Q_{n-|e|+1}}{{n \choose \sigma_{k-1}}} \geq \left( k^{\max\{2\sigma_{k-1},\sigma_k\}}{n \choose 2\sigma_{k-1}} \right)^{-1} = nQ_n \geq \frac{nQ_n}{n-|e|+1}.
	\end{equation*}
	The first inequality follows from inequality (\ref{ineq:e_is_small}), and the last inequality follows from the fact that ${n \choose \sigma_{k-1}} \leq {n \choose 2\sigma_{k-1}}$ for $n \geq 3\sigma_{k-1}$.
	
	\item Suppose $n < 3\sigma_{k-1}$. We recall that $n>2\sigma_{k-1}$. By inequality (\ref{ineq:e_is_small}), $n-|e| > 2\sigma_{k-1}$. Letting $n=2\sigma_{k-1}+x$ for some $x \in \set{1, \dots \sigma_{k-1}}$, we have
	
	\begin{equation*}
	    \frac{{n-|e| \choose \sigma_{k-1}}}{{n \choose \sigma_{k-1}}} \geq \frac{{2\sigma_{k-1} \choose \sigma_{k-1}}}{{2\sigma_{k-1}+x \choose \sigma_{k-1}}}
	    = \frac{(2\sigma_{k-1})!(\sigma_{k-1}+x)!}{(2\sigma_{k-1}+x)!\sigma_{k-1}!} = \prod_{i=1}^x \frac{\sigma_{k-1}+i}{2\sigma_{k-1}+i} \geq \left( \frac{1}{2} \right)^x.
	\end{equation*}
	We also know that
	\begin{equation*}
	    {n \choose 2\sigma_{k-1}}^{-1} = {2\sigma_{k-1}+x \choose 2\sigma_{k-1}}^{-1} = \frac{(2\sigma_{k-1})!x!}{(2\sigma_{k-1}+x)!}
	    = \prod_{i=1}^x \frac{i}{2\sigma_{k-1}+i} \leq \left( \frac{1}{2} \right)^x.
	\end{equation*}
	Therefore,
	\begin{equation*}
	    \frac{{n-|e| \choose \sigma_{k-1}}}{{n \choose \sigma_{k-1}}} Q_{n-|e|+1} \geq \left( \frac{1}{2} \right)^x Q_{n-|e|+1} \geq \frac{Q_{n-|e|+1}}{{n \choose 2\sigma_{k-1}}} = \left( k^{\max\{2\sigma_{k-1},\sigma_k\}}{n \choose 2\sigma_{k-1}} \right)^{-1} = nQ_n \geq \frac{nQ_n}{n-|e|+1}.
	\end{equation*}
	\end{casesp}
	
	Next, suppose that $n-|e|+1>\max\{2\sigma_{k-1}, \sigma_k\}$. We have that
	\begin{align*}
	    \alpha_e Q_{n-|e|+1} &= \frac{{n-|e| \choose \sigma_{k-1}}}{{n \choose \sigma_{k-1}}}Q_{n-|e|+1} \\
	    &=\frac{{n-|e| \choose \sigma_{k-1}}}{{n \choose \sigma_{k-1}}} \cdot \frac{1}{{n-|e|+1 \choose 2\sigma_{k-1}}} \cdot \left( (n-|e|+1)k^{\max\{2\sigma_{k-1},\sigma_k\}} \right)^{-1} \\
	    &= \frac{{n-|e| \choose \sigma_{k-1}}}{{n \choose \sigma_{k-1}}} \cdot \frac{1}{{n-|e|+1 \choose 2\sigma_{k-1}}} \cdot \frac{n}{n-|e|+1}{n \choose 2\sigma_{k-1}} \cdot Q_n \\
	    &\geq \frac{{n-|e| \choose \sigma_{k-1}}}{{n \choose \sigma_{k-1}}} \cdot \frac{1}{{n-|e|+1 \choose 2\sigma_{k-1}}} \cdot {n \choose 2\sigma_{k-1}} \cdot \frac{nQ_n}{n-1}.
	 \end{align*}
	  The following proposition completes the proof. We defer its proof to the appendix.
	  
	  \begin{restatable}{proposition}{ratioIneq}\label{prop:ratio_ineq}
	  For positive integers $n, e, \sigma$ with $e\ge 2$ and $n-e+1>2\sigma$, we have 
	\[
	\frac{{n-e \choose \sigma}}{{n \choose \sigma}} \cdot \frac{1}{{n-e+1 \choose 2\sigma}} \geq {n \choose 2\sigma}^{-1}.
	\]
	\end{restatable}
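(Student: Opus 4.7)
My plan is to clear the factorials so the inequality becomes an elementary algebraic comparison, and then isolate a single ``dominant'' ratio that already does all the work.

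Starting from the desired inequality and cross-multiplying, it suffices to show
\[
\binom{n-e}{\sigma}\binom{n}{2\sigma} \;\geq\; \binom{n}{\sigma}\binom{n-e+1}{2\sigma}.
\]
Expanding all four binomials and cancelling the common factor $\sigma!\,(2\sigma)!\,n!$ (together with the resulting factor $\tfrac{1}{n-e+1}$ coming from $(n-e)!/(n-e+1)!$), the inequality reduces, after straightforward bookkeeping, to
\[
\prod_{j=0}^{e-1}(n-\sigma-j) \;\geq\; (n-e+1)\prod_{j=0}^{e-2}(n-2\sigma-j).
\]
Pulling out the $j=e-1$ factor on the left gives $(n-e+1-\sigma)\prod_{j=0}^{e-2}(n-\sigma-j)$, and dividing by the remaining product on the right reduces the statement to
\[
\prod_{j=0}^{e-2}\frac{n-\sigma-j}{n-2\sigma-j} \;\geq\; \frac{n-e+1}{n-e+1-\sigma}.
\]

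Each factor on the left satisfies $\tfrac{n-\sigma-j}{n-2\sigma-j}\ge 1$ since $\sigma\ge 0$ (the positivity of the denominator follows from $n-e+1>2\sigma$, hence $n-2\sigma-j\ge n-2\sigma-(e-2)>0$). So it suffices to show that the single factor at $j=e-2$ already dominates the right-hand side, i.e.
\[
\frac{n-\sigma-e+2}{n-2\sigma-e+2} \;\geq\; \frac{n-e+1}{n-e+1-\sigma}.
\]
Both sides are of the form $f(x):=\tfrac{x}{x-\sigma}=1+\tfrac{\sigma}{x-\sigma}$, evaluated at $N:=n-\sigma-e+2$ and $N':=n-e+1$ respectively; the hypothesis $n-e+1>2\sigma$ ensures $N-\sigma,N'-\sigma>0$, so $f$ is decreasing on the relevant range. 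Since $N-N'=1-\sigma\le 0$, i.e.\ $N\le N'$, we get $f(N)\ge f(N')$, which is exactly the claimed inequality.

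The only real subtlety is lining up the index ranges when expanding the binomial ratios; once the reduction is done, the inequality collapses to monotonicity of $x\mapsto x/(x-\sigma)$ together with the hypothesis $n-e+1>2\sigma$, so there is no genuine obstacle.
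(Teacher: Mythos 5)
Your proof is correct, and it is genuinely simpler than the one in the paper. You clear the factorials and cancel $n!$, reducing the claim to
\[
\prod_{j=0}^{e-1}(n-\sigma-j)\;\ge\;(n-e+1)\prod_{j=0}^{e-2}(n-2\sigma-j),
\]
and then observe that after peeling off the $j=e-1$ factor, all remaining factors are $\ge 1$ and the single factor at $j=e-2$ already matches $\tfrac{n-e+1}{n-e+1-\sigma}$ by monotonicity of $x\mapsto x/(x-\sigma)$. The paper instead expands the ratio into a product indexed by $\sigma$ (rather than $e$), splits into cases $e>\sigma$ and $e\le\sigma$, and in the latter case argues that a certain product ratio is increasing in $e$ and checks $e=2$ explicitly. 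Your parameterization by $e$ collapses the casework: you only ever need to handle $e-1$ factors regardless of how $e$ and $\sigma$ compare, and the bound follows from a one-variable monotonicity argument plus the integrality of $\sigma$ (giving $N-N'=1-\sigma\le 0$). The hypothesis $n-e+1>2\sigma$ is used exactly where needed — positivity of all denominators and $N-\sigma>0$ — so nothing is glossed over. This is a cleaner route to the same elementary fact.
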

	\end{proof}
	
	\begin{proof}[Proof of Claim \ref{claim:eminusf_geq_alphas}]
	Let $Z = (Z_1, \dots, Z_k)$ be a random $k$-partition obtained by picking disjoint sets $Z_1, \dots, Z_{k-1}$ with $|Z_i| = s_i$ and setting $Z_k = V \setminus \bigcup_{i=1}^{k-1} Z_i$. Since $n > \sigma_k$ and every vertex has weight at least $1$, the $k$-partition $Z$ is an $s$-size-constrained $k$-partition. Therefore, $|\delta(Z)|$ is an upper bound on $|F|$. In particular, 
	\[
	|F| \leq \mathbb{E}(|\delta(Z)|) = \sum_{e \in E} \Pr(e \in \delta(Z)).
	\]
	Negating the inequality and adding $|E|$ to both sides gives
	\begin{align*}
	    |E \setminus F| &\geq \sum_{e \in E} (1 - \Pr(e \in \delta(Z))) \\
	    &= \sum_{e \in E} \Pr(e \not\in \delta(Z)) \\
	    &= \sum_{e \in E} \sum_{i = 1}^k \Pr(e \subseteq Z_i) \\
	    &\geq \sum_{e \in E} \frac{{n-|e| \choose \sigma_{k-1}}}{{n \choose \sigma_{k-1}}} \\
	    &= \sum_{e \in E} \alpha_e.
	\end{align*}
	Thus, $|E \setminus F| / \sum_{f \in E} \alpha_f \geq 1$.
	\end{proof}
	
	\noindent \textbf{Remark.} Since our algorithm does not even take the vertex-weights as input, it could trivially be extended to handle a version of the problem where we have multiple weight functions on the vertices (as in the previous sections) each with their own minimum sizes. If we have $t$ vertex-weight functions, $w_1, \dots, w_{t} : V \rightarrow \mathbb{Z}_+$ and each function $w_j$ has an associated list of lower bounds $s_{j,1}, \dots, s_{j,k}$, then we can find a min-$k$-cut satisfying all of these lower-bound constraints with at least inverse polynomial probability by simply running our algorithm with $s_i = \max_{j \in [t]} s_{j,i}$ for every $i\in [t]$.
	
\section{Conclusion and Open Problems}
In this work, we illustrated the versatility of the random contraction technique by addressing multicriteria versions of min-cut and size-constrained min-$k$-cut problems. There are several interesting open questions in this area. We conclude by stating a few: 
(1) For the number of pareto-optimal cuts and multiobjective min-cuts, there is still a gap between our lower bound (which is $\Omega(n^{t})$) and our upper bound (which is $O(n^{3t-1})$). 
Can we improve either of these bounds? 
We believe that improving our bounds for the number of $b$-multiobjective min-cuts for a fixed budget-vector $b\in \R^{t-1}_+$ would be a first-step towards this goal. 
(2) We gave a polynomial-time algorithm to solve the $b$-multiobjective min-cut problem in constant-rank hypergraphs. How about arbitrary-rank hypergraphs? Is the $b$-multiobjective min-cut problem in arbitrary rank hypergraphs (even for $t=2$ criteria) solvable in polynomial-time or is it NP-hard? 
	
\bibliographystyle{plainurl}
\bibliography{references} 

\renewcommand{\figurename}{Figure}

\appendix 
\section{Appendix}

\subsection{Comparison of Parametric, Pareto-Optimal, and Multiobjective Cuts}\label{sec:comparison}
We prove the containment relationship (\ref{containment-relationships}) here. 
\begin{proposition} \label{prop:containment-relationships}
The following containment relationship holds, possibly with the containment being strict: 
\begin{align*} 
\text{Parametric min-cuts} &\subseteq \text{Pareto-optimal cuts} \subseteq \text{Multiobjective
min-cuts}.
\end{align*}
\end{proposition}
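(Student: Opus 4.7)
The plan is to verify the two containments separately by short contradiction arguments directly from the definitions; there is no significant obstacle, since the statement is essentially a consequence of how the three notions are defined.

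For the first containment, suppose $F$ is a parametric min-cut with witnessing positive multipliers $\mu_1,\ldots,\mu_t \in \R_+$, so that $F$ minimizes $c_\mu(F'):=\sum_{i=1}^t \mu_i c_i(F')$ over all cuts $F'$. If $F$ were dominated by some cut $F'$, then $c_i(F')\le c_i(F)$ for every $i\in[t]$ with strict inequality for some index $i^{\star}$. Since every $\mu_i$ is strictly positive, I would multiply the $i$-th inequality by $\mu_i$ and sum, and use the strict inequality at $i^{\star}$ to conclude $c_\mu(F')<c_\mu(F)$, contradicting the minimality of $F$ under $c_\mu$. Hence $F$ is pareto-optimal.

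For the second containment, let $F$ be pareto-optimal. The natural choice is to set the budget vector $b\in \R^{t-1}_+$ by $b_i := c_i(F)$ for $i\in[t-1]$. By construction $F$ satisfies $c_i(F)\le b_i$ for all $i\in[t-1]$, so $F$ is a feasible candidate for the $b$-multiobjective min-cut problem. If $F$ were not a $b$-multiobjective min-cut, there would exist a cut $F'$ with $c_i(F')\le b_i=c_i(F)$ for every $i\in[t-1]$ and $c_t(F')<c_t(F)$. But then $F'$ would dominate $F$ (all $t-1$ coordinates weakly smaller, and the $t$-th strictly smaller), contradicting pareto-optimality. Thus $F$ is a $b$-multiobjective min-cut for this choice of $b$, hence a multiobjective min-cut.

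The "possibly strict" part of the proposition can be established by pointing to a single small example for each containment: a cut that is pareto-optimal but cannot be realized as a min-cut under any positive linear combination of the $c_i$ (typical on the non-convex portion of the pareto front), and a multiobjective min-cut that is dominated by another cut achieving the same budget-feasibility (for instance, one with equal $c_t$-cost but strictly smaller $c_1$-cost). These examples are standard and can be exhibited on graphs with two criteria; producing them is routine and I would defer the detailed constructions or simply note their existence, since the main content of the proposition is the two inclusions above.
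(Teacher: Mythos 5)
Your two inclusion arguments are correct and match the paper's proof exactly: the dominance-implies-strictly-smaller-$c_\mu$ argument for the first containment, and the budget-vector $b_i := c_i(F)$ construction for the second. The only difference is that the paper exhibits concrete small graphs (its Figures 5 and 6) to witness strictness, whereas you merely assert such examples exist; this is a minor omission but the main content of the proposition is handled identically.
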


	\begin{proof}
	We first show that parametric min-cuts are pareto-optimal cuts: If a cut $F'$ dominates a cut $F$, then $w(F') < w(F)$ for all positive multipliers, and therefore $F$ cannot be a parametric min-cut. On the other hand, not every pareto-optimal cut is a parametric min-cut (see Figure \ref{fig:pareto-is-not-parametric} for an example).
	
	Next, we show that pareto-optimal cuts are multiobjective min-cuts: If a cut $F$ is pareto-optimal, then it is a $b$-multiobjective min-cut for the budget-vector $b$ obtained by setting $b_i:=c_i(F)$ for every $i\in [k-1]$. On the other hand, not every multiobjective min-cut is a pareto-optimal cut (see Figure \ref{fig:multi-is-not-pareto} for an example). 
	
\begin{figure}[ht]
\centering
\begin{minipage}{.5\textwidth}
  \centering
  \includegraphics[trim={0 12cm 24cm 0.5cm}, clip, scale=0.5]{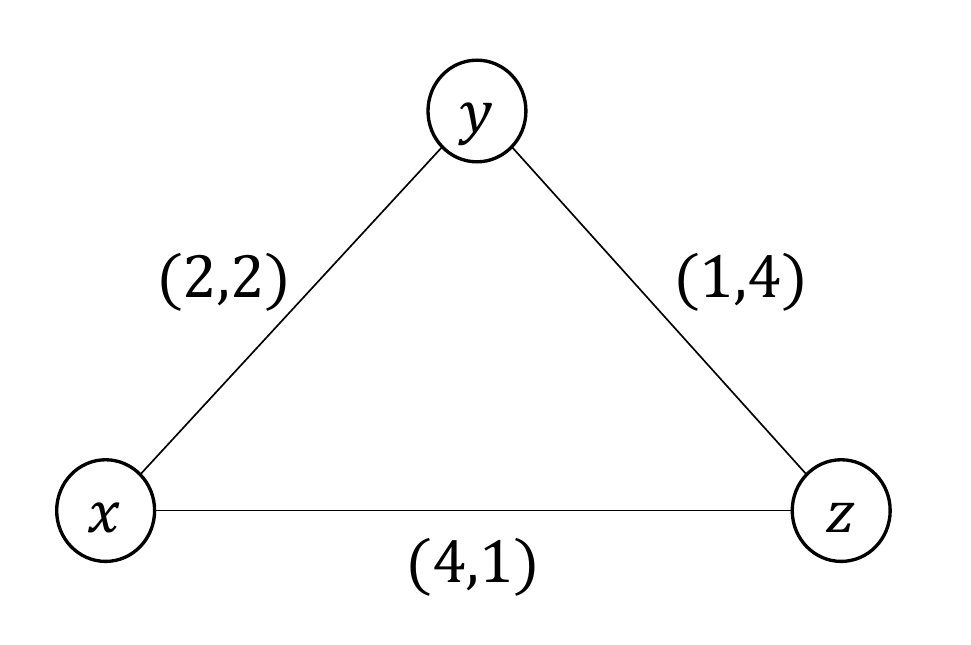}
	    \caption{
	    The cut $\delta(z)$ is a pareto-optimal cut but not a parametric min-cut.}
	    \label{fig:pareto-is-not-parametric}
\end{minipage}%
\begin{minipage}{.5\textwidth}
  \centering
\includegraphics[trim={0 16cm 24cm 1cm}, clip, scale=0.5]{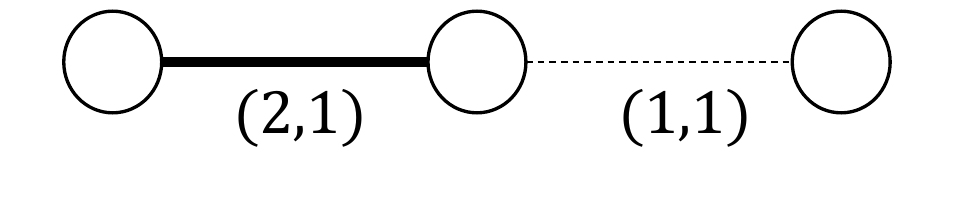}
	    \caption{
	    For $b=2$, the bold edge is a $b$-multiobjective min-cut but it is not a pareto-optimal cut. }
	    \label{fig:multi-is-not-pareto}
\end{minipage}
\end{figure}
	\end{proof}

\subsection{Proof of Lemma \ref{lem:lp_sol}}
We restate and prove Lemma \ref{lem:lp_sol}.

\LPSol*

	\begin{proof}
	The linear program is feasible, since setting $x_2=1$ and the rest of the variables to zero gives a feasible solution. Let $j \in \{2, \dots, r\}$. Since $y_j \geq 0$ and $x_j \leq 1$, we have that $x_j - y_j \leq 1$. Since $f(n-j+1) \ge 0$ for every $j\in [2,r]$, it follows that $(x_j-y_j)f(n-j+1) \leq f(n-j+1)$. Therefore we have $\sum_{j=2}^r (x_j-y_j)f(n-j+1) \leq \sum_{j=2}^r f(n-j+1)$. Therefore, the objective value of this linear program is bounded. Since the linear program is feasible and bounded, there exists an extreme point optimum solution to this LP. Since the LP has $2r-2$ variables and $2r$ equations, every extreme point optimum will have at least $2r-2$ tight constraints and at most $2$ non-tight constraints.
		
	We now show that constraint (\ref{con:tys_leq_xs}) is tight for every optimal solution $(x,y)$. 
	Let $(x,y)$ be an optimal solution. Since $\gamma \geq r+1$,  we have 
	\[\gamma \sum_{j=2}^r y_j \geq \sum_{j=2}^r (r+1)y_j > \sum_{j=2}^r j \cdot y_j. \]
	This implies that we cannot have $y_j = x_j$ for all $j$, otherwise $(x,y)$ would violate constraint (\ref{con:tys_leq_xs}). Hence, at least one of the $y_j \leq x_j$ constraints must be slack. Let $j\in \{2, \dots, r\}$ be such that $y_j<x_j$. If constraint (\ref{con:tys_leq_xs}) were slack, increasing the value of $y_j$ by a very small amount would improve the objective value of $(x,y)$ without violating any constraints. Therefore, since $(x,y)$ is optimal, constraint (\ref{con:tys_leq_xs}) must be tight.
		
		Let $(x,y)$ be an extreme point optimal solution. Since we know that $\sum_{j=2}^r x_j = 1$ and $x_j \geq 0$ for every $j \in \{2,\dots,r\}$, we have $\sum_{j=2}^r j \cdot x_j > 0$. Since constraint (\ref{con:tys_leq_xs}) is tight for $(x,y)$, we must have $\gamma \sum_{j=2}^r y_j > 0$. This implies that there exists $j\in [2,r]$ such that $y_j>0$.
		Thus, we conclude that the two slack constraints must be $0 \leq y_{j_1}$ and $y_{j_2} \leq x_{j_2}$ for some $j_1, j_2\in \{2,\ldots, r\}$. We consider two cases.
		
		\begin{casesp}
			\item Suppose $j_1 = j_2$. Then we have that $0 < y_j < x_j = 1$ for some $j \in \{2,\ldots, r\}$, and $x_{j'},y_{j'}=0$ for every $j'\in \{2,\ldots, r\}\setminus \{j\}$. Therefore, we can simplify our LP to 
			\begin{equation*}
			\begin{aligned}
			& \underset{y_j}{\text{minimize}}
			& & (1-y_j)f(n-j+1) \\
			& \text{subject to}
			& & 0 \leq y_j \leq 1 \\
			&&& \gamma y_j = j.
			\end{aligned}
			\end{equation*}
			The only (and therefore optimal)  solution to this LP is $y_j = \frac{j}{\gamma}$, which achieves an objective value of \[\left(1-\frac{j}{\gamma}\right)f(n-j+1).\]
			
			\item Suppose $j_1 \neq j_2$. Then we have that $0 < y_{j_1} = x_{j_1}$, and $0 = y_{j_2} < x_{j_2}$. We note that $x_{j_2} = 1-x_{j_1}$, and therefore we can simplify the LP to
			\begin{equation*}
			\begin{aligned}
			& \underset{x_{j_1}}{\text{minimize}}
			& & (1-x_{j_1})f(n-j_2+1) \\
			& \text{subject to}
			& & 0 \leq x_{j_1} \leq 1 \\
			&&& \gamma x_{j_1} = j_1 \cdot x_{j_1} + j_2 \cdot (1-x_{j_1}).
			\end{aligned}
			\end{equation*}
			Solving the second constraint for $x_{j_1}$ yields $x_{j_1} = \frac{j_2}{\gamma-j_1+j_2} $, and therefore our optimal objective value is  
			\[{\left(1- \frac{j_2}{\gamma-j_1+j_2}\right)f(n-j_2+1}).\]
			
		\end{casesp}
		
		We conclude that the optimal objective value of the LP is equal to the minimum of the values from these two cases, that is, 
		\[\min\left\{\min_{j\in \{2,\ldots, r\}} \left\{\left(1-\frac{j}{\gamma}\right)f(n-j+1)\right\}, \, \, \, \min_{j_1,j_2 \in \{2,\ldots, r\}} \left\{\left(1- \frac{j_2}{\gamma-j_1+j_2})f(n-j_2+1)\right)\right\}\right\}.
		\]
		Since $(1- \frac{j_2}{\gamma-j_1+j_2})$ is decreasing in $j_1$ and $f(n-j_2+1)$ is always positive, we have 
		\[\min_{j_1,j_2\in\{2,\ldots,r\}} \left\{\left(1- \frac{j_2}{\gamma-j_1+j_2}\right)f(n-j_2+1)\right\} = \min_{j\in \{2,\ldots, r\}} \left\{\left(1-\frac{j}{\gamma-r+j}\right)f(n-j+1)\right\}.\] 
		Thus, since $j \leq r$, the optimal  objective value of the LP is equal to
		\begin{equation*}
		 \min_{j\in \{2,\ldots, r\}}\left\{\min\left\{1-\frac{j}{\gamma}, 1-\frac{j}{\gamma-r+j}\right\}f(n-j+1)\right\} 
		= \, \min_{j\in \{2,\ldots, r\}}\left\{\left(1-\frac{j}{\gamma-r+j}\right)f(n-j+1)\right\}.
		\end{equation*}
	\end{proof}

\subsection{Proof of Proposition \ref{prop:ratio_ineq}}
We restate and prove Proposition \ref{prop:ratio_ineq}.
	\ratioIneq*
	
	\begin{proof}
	We note that
	\begin{align}
	    \frac{{n-e \choose \sigma}}{{n \choose \sigma}} \cdot \frac{1}{{n-e+1 \choose 2\sigma}}
	    &= \frac{(n-e)!(n-\sigma)!}{n!(n-e-\sigma)!} \cdot \frac{(2\sigma)!(n-e-2\sigma+1)!}{(n-e+1)!} \nonumber \\
	    &= \left( \prod_{i=0}^{\sigma-1} \frac{n-e-i }{ n-i } \right) \cdot \frac{(2\sigma)!}{\prod_{i=0}^{2\sigma-1} (n-e+1-i)}. \label{exp:expanded_binomial_ratio}
	\end{align}
	To lower bound this expression, we case on the value of $e$.
	
	\begin{casesp}
	\item Suppose $e > \sigma$. Then we can lower bound expression (\ref{exp:expanded_binomial_ratio}) by
	\begin{align*}
	    \frac{1}{\prod_{i=0}^{\sigma-1} (n-i)} \cdot \frac{(2\sigma)!}{(n-e+1) \prod_{i=0}^{\sigma-2} (n-e-\sigma-i)} 
	    \geq& \frac{(2\sigma)!}{\prod_{i=0}^{2\sigma-1} (n-i)} 
	    = {n \choose 2\sigma}^{-1}.
	\end{align*}
	
	\item Suppose $e \leq \sigma$.  We note that
	\begin{equation*}
	     \frac{(2\sigma)!}{\prod\limits_{i=0}^{2\sigma-1} (n-e+1-i)} 
	    =  \frac{(2\sigma)!}{\prod\limits_{i=0}^{2\sigma-1} (n-i)} \cdot \prod_{i=0}^{e-2} \frac{n-i}{n-2\sigma-i}
	    ={n \choose 2\sigma}^{-1} \cdot \prod_{i=0}^{e-2} \frac{n-i}{n-2\sigma-i}.
	\end{equation*}
	Thus, expression (\ref{exp:expanded_binomial_ratio})  is equal to 
	\begin{equation*}
	     {n \choose 2\sigma}^{-1} \cdot \left( \prod_{i=0}^{\sigma-1} \frac{n-e-i }{ n-i } \right) \left( \prod_{i=0}^{e-2} \frac{n-i}{n-2\sigma-i} \right).
	\end{equation*}
	
	We will show that $\left( \prod_{i=0}^{\sigma-1} \frac{n-e-i }{ n-i } \right) \left( \prod_{i=0}^{e-2} \frac{n-i}{n-2\sigma-i} \right) \geq 1$. We note that
	\begin{align}
	    \left( \prod_{i=0}^{\sigma-1} \frac{n-e-i }{ n-i } \right) \left( \prod_{i=0}^{e-2} \frac{n-i}{n-2\sigma-i} \right) &= \frac{\prod_{i=0}^{\sigma-1} (n-e-i) }{\prod_{i=e-1}^{\sigma-1} (n-i) } \cdot \frac{1 }{\prod_{i=0}^{e-2} (n-2\sigma-i) } \nonumber \\
	    &= \frac{\prod_{i=\sigma}^{e+\sigma-1} (n-i) }
	    {(n-e+1)\prod_{i=0}^{e-2} (n-2\sigma-i) }. \label{exp:prod_ratio}
	\end{align}
	We claim that expression (\ref{exp:prod_ratio}) is minimized when $e = 2$. To see this, we note that
	\begin{equation*}
	    \frac{\prod\limits_{i=\sigma}^{(e+1)+\sigma-1} (n-i) }
	    {(n-(e+1)+1)\prod\limits_{i=0}^{(e+1)-2} (n-2\sigma-i) } = \frac{\prod\limits_{i=\sigma}^{e+\sigma-1} (n-i) }
	    {(n-e+1)\prod\limits_{i=0}^{e-2} (n-2\sigma-i) } \cdot \frac{(n-e-\sigma)(n-e+1) }{(n-2\sigma-e+1)(n-e)}.
	\end{equation*}
	Since $e \leq \sigma$, we know that $n-e-\sigma \geq n-2\sigma+1$. From this, along with the fact that $n-e+1 > n-e$, we conclude that $\frac{(n-e-\sigma)(n-e+1) }{(n-2\sigma-e+1)(n-e)} > 1$. This means that expression (\ref{exp:prod_ratio}) increases when we increment $e$. Thus
	\begin{equation*}
	    \frac{\prod_{i=\sigma}^{e+\sigma-1} (n-i) }
	    {(n-e+1)\prod_{i=0}^{e-2} (n-2\sigma-i) }
	    \geq
	    \frac{(n-\sigma)(n-\sigma-1)}{(n-1)(n-2\sigma)} 
	    = \frac{n^2 - (2\sigma+1)n + (\sigma+1)\sigma}{n^2 - (2\sigma+1)n +2\sigma } 
	    \geq 1.
	\end{equation*}
	The last inequality follows from the fact that $\sigma \geq 1$.
	
	Thus, we have shown that $\left( \prod_{i=0}^{\sigma-1} \frac{n-e-i }{ n-i } \right) \left( \prod_{i=0}^{e-2} \frac{n-i}{n-2\sigma-i} \right) \geq 1$, and therefore, combining the above inequalities, we have that
	\begin{equation*}
	    \frac{{n-e \choose \sigma}}{{n \choose \sigma}} \cdot \frac{1}{{n-e+1 \choose 2\sigma}} \geq {n \choose 2\sigma}^{-1}.
	\end{equation*}
	\end{casesp}
	\end{proof}

\end{document}